\let\footnote=\endnote
\def\BIBand{and}%
\newcommand{\LeftEqNo}{\let\veqno\@@leqno}
\newcommand{\setcycles}{\mathcal{C}}
\newcommand{\filledges}{E^c}
\newcommand{\exterior}{\textnormal{ext}}
\newcommand{\interior}{\textnormal{int}}
\newcommand{\edge}[2] { \left\{ #1,#2 \right\} }
\newcommand{\coef}[2] { \mu_{\left\{#1,#2\right\}} }
\newcolumntype{C}[1]{>{\centering\let\newline\\\arraybackslash\hspace{0pt}}m{#1}}
\begin{document}


\RUNAUTHOR{Bergman et al.}

\RUNTITLE{On the Minimum Chordal Completion Polytope}

\TITLE{On the Minimum Chordal Completion Polytope}

\ARTICLEAUTHORS{%
\AUTHOR{David Bergman}
\AFF{Operations and Information Management, University of Connecticut,  Storrs, Connecticut 06260\\ 
\EMAIL{david.bergman@business.uconn.edu} \URL{}}

\AUTHOR{Carlos H. Cardonha}
\AFF{IBM Research, Brazil, S\~ao Paulo 04007-900\\ \EMAIL{carloscardonha@br.ibm.com } \URL{}}

\AUTHOR{Andre A. Cire}
\AFF{Department of Management, University of Toronto Scarborough, Toronto, Ontario M1C-1A4, Canada,
\\ \EMAIL{acire@utsc.utoronto.ca} \URL{}}

\AUTHOR{Arvind U. Raghunathan}
\AFF{Mitsubishi Electric Research Labs, 201 Broadway, Cambridge,
\\ \EMAIL{raghunathan@merl.com} \URL{}}

} 

\ABSTRACT{%
A graph is chordal if every cycle of length at least four contains a \textit{chord}, that is,
an edge connecting two nonconsecutive vertices of the cycle. Several classical applications in sparse linear systems, database management, computer vision, and semidefinite programming can be reduced to finding the minimum number of edges to add to a graph so that it becomes chordal, known as the \textit{minimum chordal completion problem} (MCCP). In this article we propose a new formulation for the MCCP which does not rely on finding perfect elimination orderings of the graph, as has been considered in previous work. We introduce several families of facet-defining inequalities for cycle subgraphs and investigate the underlying separation problems, showing that some key inequalities are NP-Hard to separate. We also show general properties of the proposed polyhedra, indicating certain conditions and methods through which facets and inequalities associated with the polytope of a certain graph can be adapted in order to become valid and eventually facet-defining for some of its subgraphs or supergraphs.
Numerical studies combining heuristic separation methods based on a threshold rounding and lazy-constraint generation indicate that our approach substantially outperforms existing methods for the MCCP, solving many benchmark graphs to optimality for the first time.
}%


\KEYWORDS{Networks/graphs; Applications: Networks/graphs ;  Programming: Integer: Algorithms: Cutting plane/facet}
\HISTORY{}

\maketitle

%


\section{Introduction}

%

Given a simple undirected graph $G=(V,E)$, the \textit{minimum chordal completion problem} (MCCP) asks for the minimum number of edges to add to $E$ so that the graph becomes \textit{chordal}; that is, every cycle of length at least four in $G$ has an edge connecting two non-consecutive vertices (i.e., a \textit{chord}). Figure \ref{fig:SimpleGraph}(b) depicts an example of a minimum chordal completion of the graph in Figure \ref{fig:SimpleGraph}(a), 
where a chord $\{v_1,v_3\}$ is added because of the chordless cycle $(v_0,v_1,v_4,v_3)$. The problem is also referred to as the \textit{minimum triangulation problem} or the \textit{minimum fill-in problem}.

The MCCP is a classical combinatorial optimization problem with a variety of applications spanning both the computer science and the operations research literature. Initial algorithmic aspects for directed graphs were 
investigated by \cite{Rose1976} and \cite{Rose1978}, motivated by problems arising in Gaussian elimination of sparse linear equality systems. The most general version of the MCCP was only proven NP-Hard later by \cite{Yan1981}, and since then minimum graph chordalization methods have been applied in database management \citep{BeeFagMaiYan83,TarYan84}, sparse matrix computation \citep{GroJohSaWol1984,Fomin2013}, artificial
intelligence \citep{Lauritzen1990}, computer vision \citep{ChuMum1994}, 
and in several other contexts; see, e.g., the survey by \cite{Heggernes2006}. Most recently, solution methods for the MCCP have gained a central role in semidefinite and nonlinear optimization, in particular for exploiting sparsity of linear and nonlinear constraint matrices \citep{NakFujFukKojMur2003, KimKojMevYam2011,Vandenberghe2015}.

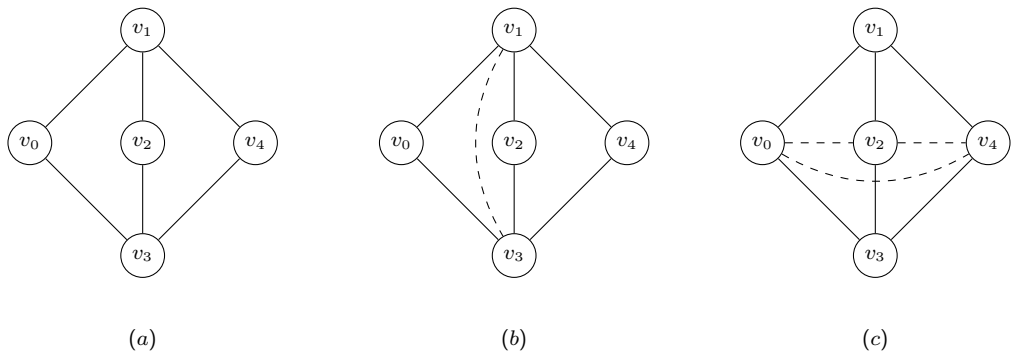
\begin{figure}[t]
\begin{center}
\begin{minipage}{.25\textwidth}
\begin{tikzpicture}[scale=0.75][font=\sffamily,\scriptsize]

\node (1) [draw,circle] [draw] at (0,0) {$v_0$};
\node (2) [draw,circle] [draw] at (2,2) {$v_1$};
\node (3) [draw,circle] [draw] at (2,0) {$v_2$};
\node (4) [draw,circle] [draw] at (2,-2) {$v_3$};
\node (5) [draw,circle] [draw] at (4,0) {$v_4$};

\node (6) [draw=none] at (2,-3.5) {($a$)};

\path[-](1) edge (2);
\path[-](1) edge (4);
\path[-](2) edge (3);
\path[-](3) edge (4);
\path[-](2) edge (5);
\path[-](4) edge (5);
\end{tikzpicture}
\end{minipage}%
\hspace{2.5ex}
\begin{minipage}{.25\textwidth}
\centering
\begin{tikzpicture}[scale=0.75][font=\sffamily,\scriptsize]

\node (1) [draw,circle] [draw] at (0,0) {$v_0$};
\node (2) [draw,circle] [draw] at (2,2) {$v_1$};
\node (3) [draw,circle] [draw] at (2,0) {$v_2$};
\node (4) [draw,circle] [draw] at (2,-2) {$v_3$};
\node (5) [draw,circle] [draw] at (4,0) {$v_4$};

\node (6) [draw=none] at (2,-3.5) {($b$)};

\path[-](1) edge (2);
\path[-](1) edge (4);
\path[-](2) edge (3);
\path[-](3) edge (4);
\path[-](2) edge (5);
\path[-](4) edge (5);
\path[dashed,bend right = 30](2) edge (4);

\end{tikzpicture}
\end{minipage}
\hspace{2.5ex}
\begin{minipage}{.25\textwidth}
\centering
\begin{tikzpicture}[scale=0.75][font=\sffamily,\scriptsize]

\node (1) [draw,circle] [draw] at (0,0) {$v_0$};
\node (2) [draw,circle] [draw] at (2,2) {$v_1$};
\node (3) [draw,circle] [draw] at (2,0) {$v_2$};
\node (4) [draw,circle] [draw] at (2,-2) {$v_3$};
\node (5) [draw,circle] [draw] at (4,0) {$v_4$};
\path[dashed,bend right = 30](1) edge (5);

\node (6) [draw=none] at (2,-3.5) {($c$)};

\path[-](1) edge (2);
\path[-](1) edge (4);
\path[-](2) edge (3);
\path[-](3) edge (4);
\path[-](2) edge (5);
\path[-](4) edge (5);
\path[dashed](1) edge (3);
\path[dashed](3) edge (5);
\end{tikzpicture}
\end{minipage}
\caption{(a)~An example graph by \cite{Heggernes2006}. (b)~A minimum chordal completion of the graph.  (c)~A minim\underline{al}, non-optimal chordal completion of the graph.}
\label{fig:SimpleGraph}
\end{center}
\end{figure}

The literature on exact computational approaches for the MCCP is, however,  surprisingly scarce. A large stream of research in the computer science community has focused on the identification of polynomial-time algorithms for structured classes of graphs, such as the works by \cite{Cha1996,Moh1997,KloKraWon1998}, and \cite{BodKloKraHue1998}. For general graphs, 
\cite{Kaplan1999} proposed the first exact fixed-parameter tractable algorithm for the MCCP, and recently \cite{FomVil2012} presented a substantially faster procedure with sub-exponential parameterized time complexity. Such algorithms are typically extremely challenging to implement and have not been used in computational experiments involving existing datasets.

In regards to practical optimization approaches, the primary focus has been on heuristic methodologies with no optimality guarantees, including techniques by  \cite{MezMos2010,Rollon2011}, and \cite{BerBorHegSimVil2006,BerHegSim2003}. The state-of-the-art heuristic, proposed by \cite{GeoLiu1989},
is a simple and efficient algorithm based on ordering the vertices by their degree. Previous articles  have also developed methodologies for finding a \textit{minimal} chordal completion. 
Note that a minimal chordal completion is not necessarily \textit{minimum}, as depicted in Figure \ref{fig:SimpleGraph}(c), but does provide a heuristic solution to the MCCP.

To the best of our knowledge, the first mathematical programming model for the MCCP is derived from a simple modification of the formulation by \cite{Feremans02} for determining the \textit{tree-widths} of graphs. The model
is based on a result by \cite{Fulkerson1965}, which states that a graph
is chordal if and only it has a \textit{perfect elimination ordering}, to
be detailed further in this paper. \cite{Yuceoglu2015}
has recently provided a first polyhedral analysis and computational testing of this formulation, deriving
facets and other valid inequalities for specific classes of graphs. Alternatively, \cite{BerRaghu15} introduced a Benders approach to the MCCP that relies on a simple class of valid inequalities, outperforming a simple constraint programming backtracking search algorithm.  


\paragraph{Our Contributions}

In this paper we investigate a novel computational approach to the MCCP that extends the preliminary work of \cite{BerRaghu15}. Our technique is based on a mathematical programming model composed of exponentially many constraints -- the \textit{chordal inequalities} -- that are defined directly on the edge space of the graph and do not depend on the perfect elimination ordering property, as opposed to earlier formulations. We investigate the polyhedral structure of our model, which reveals that the proposed inequalities are part of a special class of constraints that induce exponentially many facets for cycle subgraphs. This technique can be generalized to lift other inequalities and greatly strengthen the corresponding linear programming relaxation.

Building on these theoretical results, we propose a hybrid solution method that alternates a lazy-constraint generation with a heuristic separation procedure. The resulting approach is compared to the current state-of-the-art models in the literature, and it is empirically shown to improve solution times and known optimality gaps for standard graph benchmarks, often by orders of magnitude.

\smallskip

\paragraph{Organization of the Paper.}

After introducing our notation in \S\ref{sec:notation}, in \S\ref{sec:IPFormulation} we describe a new integer programming (IP) formulation for the MCCP
and characterize its polytope in \S\ref{sec:polytopeMCCP}, also proving dimensionality results and simple upper bound facets. \S\ref{sec:cycleGraphFacets} provides an in-depth analysis of the polyhedral structure of cycle graphs, introducing four classes of facet-defining inequalities. In \S\ref{sec:GeneralPolyhedralProperties} we prove general properties of the polytope, including some results relating to the lifting of facets. Finally, in \S \ref{sec:solutionMethod} we propose a hybrid solution technique that considers both a lazy-constraint generation and a heuristic separation method based on a threshold rounding procedure, and also present a simple primal heuristic for the problem. We provide a numerical study in \S \ref{sec:numericalExperiments}, indicating that our approach substantially outperforms existing methods, in particular solving many benchmark graphs to optimality for the first time.

%
%

\section{Notation and Terminology}
\label{sec:notation}

For the remainder of the paper, we assume that each graph $G = (V,E)$ is connected, undirected, and does not contain self-loops or multi-edges. For any set $S$, ${{S}\choose{2}}$ denotes the family of two-element subsets of $S$. Each edge $e \in E \subseteq {V \choose 2}$ is a two-element subset of vertices in $V$.  The \emph{complement edge set} (or \emph{fill edges}) $\filledges$ of $G$ is the set of edges missing from $G$, that is, $\filledges = {V \choose 2} \backslash E$. We denote by $m$ and $m^c$ the cardinality of the edge set and of the complement edge set of~$G$, respectively (i.e., $m=|E|$ and $m^c = |\filledges|$). 
The graph \emph{induced} by a set $V' \subseteq V$ is the graph $G[V'] = \left(V',E' \right)$ whose edge set is such that $E' = E \cap {V' \choose 2}$.  
If multiple graphs are being considered in a context, we include ``$(G)$'' in the notation to avoid ambiguity; i.e., $V(G')$ and $E(G')$ represent the vertex and edge set of a graph $G'$, respectively. 
Moreover, for every integer $k \geq 0$, we let $[k] := \{1, 2, \ldots, k\}$.


For any ordered list $C = (v_0,v_1,\ldots,v_{k-1})$ of $k$ distinct vertices of $V$, let
$V(C) = \{v_0,v_1,\ldots,v_{k-1}\}$ be the set of vertices composing $C$
and let~$|V(C)|$ be the \textit{size} $|C|$ of $C$. 
The \textit{exterior} of $C$ is the family  
\[
\exterior(C) =  \{\{v_{k-1}, v_0\}\} \cup  \bigcup_{i \in [k-1]} \{  \{v_{i-1}, v_{i}\} \},
\] and the \emph{interior} of $C$ 
is the family of two-element subsets of $V(C)$ that do not belong to $\exterior(C)$,
that is, 
\[
\interior(C) = {V(C) \choose 2} \setminus \exterior(C).
\] 
We call $C$ a \emph{cycle} if $\exterior(C)\subseteq E$. If $C$ is a cycle, any element of $\interior(C)$ is referred to as a \emph{chord}.  
A cycle $C$ for which the induced graph $G[V(C)]$ contains no chords is a \emph{chordless} cycle.  $G$ is said to be \emph{chordal} if the maximum size of any chordless cycle is three.  
A chordless cycle with $k$ vertices is called a $k$\emph{-chordless cycle}. 


Let $G = (V,E)$.  
Every subset of fill edges $F \subseteq \filledges$ is a \emph{completion} of~$G$, and $G+F$ represents  the graph that results from the addition of edges in $F$ to $E$; that is, $G+F := (V,E \cup F)$. A \emph{chordal completion} of $G$ is any subset of fill edges $F \subseteq E^c$  for which the completion $G+F$ is chordal.  A \emph{minimal chordal completion} $F$ is a chordal completion such that, for any proper subset $F' \subset F$, $F'$ is not a chordal completion for $G$.  A \emph{minimum chordal completion} is a minimal chordal completion of minimum cardinality, and  the minimum chordal completion problem (MCCP) is about the identification of such a subset of~$\filledges$. 

\begin{example} Consider the graph in Figure~\ref{fig:SimpleGraph}(a).  This graph has three chordless cycles, $C_1 = (v_0,v_1,v_2,v_3)$, $C_2 = (v_1,v_2,v_3,v_4)$ and $C_3 = (v_0,v_1,v_4,v_3)$.  Figure~\ref{fig:SimpleGraph}(c) shows a chordal completion consisting of edges $\{ v_0,v_2 \},\{ v_2,v_4 \}$, and $\{ v_0,v_4 \}$.  Removing any of these edges will result in a graph that is not chordal, so this chordal completion is minimal.  Figure~\ref{fig:SimpleGraph}(b) shows a smaller minimal chordal completion, a minimum chordal completion consisting only of edge $\{v_1,v_3\}$. 
\end{example}

\section{IP Formulation for the MCCP}
\label{sec:IPFormulation}

We now describe the basic IP formulation investigated  in this paper. Given a graph $G=(V,E)$, each binary variable $x_f$ in our model indicates whether the fill 
edge $f \in \filledges$ is part of a chordal completion for $G$. That is, if we define the set $E(x) := \left\{ f \in \filledges: x_f = 1 \right\}$ for each vector $x \in [0,1]	^{m^c}$, the set of feasible solutions to our model is given by
\[
X(G) := \left\{ x \in \{ 0,1 \}^{m^c} : G + E(x) \;\; \textnormal{is chordal} \right\}.
\]
Thus, each $x \in X(G)$ equivalently represents the characteristic vector of a chordal completion of~$G$. We use $G(x)$ to denote $G + E(x)$, i.e., $G(X) = (V,E\cup E(x))$, and~$x(F)$ to represent the characteristic vector of completion~$F \subseteq \filledges$, i.e., $x(F)_f = 1$ if $f \in F$ and $x(F)_f = 0$ otherwise.

Let $\setcycles$ be the family of all possible ordered lists composed of distinct vertices of $V$, i.e., every  $C \in \setcycles$ can be written as a sequence $C = (v_0, v_1, \dots, v_{k-1})$ for some $k \leq |V|$. Also, let $F_G(C) = F(C) \subseteq \filledges$ be the set of fill edges that are missing in $\exterior(C)$ for $C$ to induce a cycle in~$G$,  
that is,
\[
F(C) := \exterior(C) \setminus E.
\]

We propose the following model for the MCCP:

\begin{minipage}{\textwidth}
\begin{align*}
\textnormal{minimize}\quad & \sum_{f \in \filledges} x_f \tag{IPC} \label{ipMCCP} \\
\textnormal{s.t.}\quad & 
\sum_{f \in \interior(C)} x_f 
- (|C| - 3) \left( \sum_{f \in F(C)} x_f - |F(C)| + 1 \right)
\ge 0, \;\; \textnormal{for all $C \in \setcycles$},  \tag{I1} \label{ineq:ci} \\
& \hspace{45ex} \textnormal{ with $\interior(C) \cap E = \emptyset$} \\
& x \in \{0,1\}^{m^c}.
\end{align*}
\end{minipage}

\medskip

The set of inequalities (\ref{ineq:ci}) will be denoted by \textit{chordal inequalities} henceforth. Note that every sequence~$C$ such that $F(C) = \emptyset$ and $\interior(C) \cap E = \emptyset$  describes a cycle in~$G$, and its associated inequality (\ref{ineq:ci}) simplifies to
\[
\sum_{f \in \interior(C)} x_f \ge |C| - 3.
\]
Lemma \ref{lem:ThreeInteriorEdges} shows that inequalities (\ref{ineq:ci}) are valid in this special case.


\begin{lemma} \label{lem:ThreeInteriorEdges}
If $C$ is a chordless cycle of~$G$ such that $|C| \geq 4$, then any chordal completion of~$G$ contains at least $|C|-3$ edges that belong to~$\interior(C)$.
\end{lemma}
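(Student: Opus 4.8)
The plan is to prove this by induction on $|C|$, the length of the chordless cycle, with the base case $|C| = 4$ being essentially trivial: a $4$-chordless cycle $C = (v_0, v_1, v_2, v_3)$ has $\interior(C) = \{\{v_0, v_2\}, \{v_1, v_3\}\}$, and since $C$ is chordless neither of these edges is in $E$, so any chordal completion must contain at least one of them (otherwise $C$ would survive as a chordless $4$-cycle in $G+F$), giving $|C| - 3 = 1$ edges in $\interior(C)$.

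For the inductive step, let $C = (v_0, \dots, v_{k-1})$ be a chordless cycle with $k \geq 5$, and let $F$ be any chordal completion of $G$. Since $G + F$ is chordal and $C$ is a cycle of length $k \geq 4$ in $G + F$ (its exterior edges are all in $E \subseteq E \cup F$), $C$ must have a chord in $G + F$; as $C$ is chordless in $G$, that chord is some fill edge $\{v_i, v_j\} \in F \cap \interior(C)$. This chord splits $C$ into two shorter cycles $C' = (v_i, v_{i+1}, \dots, v_j)$ and $C'' = (v_j, v_{j+1}, \dots, v_i)$ (indices mod $k$), each of length at least $3$ and strictly less than $k$, whose exterior edges are exterior edges of $C$ together with $\{v_i, v_j\}$. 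The key point is that $C'$ and $C''$ are each chordless cycles \emph{in the graph $G + \{\{v_i,v_j\}\}$}: any chord of $C'$ would be a chord of $C$, and $C$ is chordless in $G$, so it cannot be in $E$; hence we must argue about the completion relative to this augmented graph. Apply the induction hypothesis to $C'$ and $C''$ as chordless cycles of $G + \{\{v_i,v_j\}\}$, with $F \setminus \{\{v_i,v_j\}\}$ as a chordal completion of that graph: this yields at least $|C'| - 3$ edges of $F \setminus \{\{v_i,v_j\}\}$ in $\interior(C')$ and at least $|C''| - 3$ in $\interior(C'')$.

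It remains to add up the counts. We have $|C'| + |C''| = k + 2$ since the two subcycles share exactly the two endpoints $v_i, v_j$. The interiors $\interior(C')$ and $\interior(C'')$ are disjoint and both are contained in $\interior(C) \setminus \{\{v_i,v_j\}\}$ — a chord of $C'$ has both endpoints among $\{v_i, \dots, v_j\}$, a chord of $C''$ has both endpoints among $\{v_j, \dots, v_i\}$, so a common edge would have both endpoints in $\{v_i, v_j\}$, i.e.\ equal $\{v_i,v_j\}$, which is excluded from both interiors. Therefore the number of edges of $F$ in $\interior(C)$ is at least $1 + (|C'| - 3) + (|C''| - 3) = 1 + (k + 2) - 6 = k - 3 = |C| - 3$, as claimed.

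The main obstacle to watch for is the bookkeeping in the inductive step: one must be careful that the two subcycles $C'$ and $C''$ genuinely remain \emph{chordless} after adding the single fill edge $\{v_i,v_j\}$ (which follows because $C$ was chordless in $G$, so no potential chord of a subcycle already lay in $E$, and the only new edge $\{v_i,v_j\}$ is an exterior edge of each subcycle, not an interior one), and that $F \setminus \{\{v_i,v_j\}\}$ is indeed a valid chordal completion of $G + \{\{v_i,v_j\}\}$ (immediate, since $G + \{\{v_i,v_j\}\} + (F \setminus \{\{v_i,v_j\}\}) = G + F$ is chordal). The disjointness of the interiors and the identity $|C'| + |C''| = |C| + 2$ are the other pieces that make the arithmetic close exactly.
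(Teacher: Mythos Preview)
Your proof is correct and follows essentially the same induction-on-$|C|$ strategy as the paper: pick a chord in $F\cap\interior(C)$, split $C$ into two shorter cycles, apply the induction hypothesis to each, and combine using $|C'|+|C''|=|C|+2$ together with the disjointness of $\interior(C')$ and $\interior(C'')$. The only place you are slightly looser than the paper is when one subcycle has length exactly $3$: the induction hypothesis (stated for $|C|\ge 4$) does not literally apply there, so you should note explicitly that a triangle has $\interior=\emptyset$ and the required bound $3-3=0$ holds trivially---the paper handles this by separating the case $d_C(v_i,v_j)=2$ from $d_C(v_i,v_j)\ge 3$, whereas you collapse both cases into one computation.
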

\proof{Proof.}
We proceed by induction on $k$, the cardinality of sequence $C = (v_0,v_1,\ldots, v_{k-1})$.  For $k=4$, if fewer than $4-3 = 1$ edges (i.e., 0 edges) belonging to~$\interior(C)$ are added to $G$, the graph trivially remains not chordal. Assume now $k \geq 5$. By definition, any chordless completion of~$G$ must contain at least one edge in~$\interior(C)$, so let us suppose without loss of generality that a chord $\edge{v_0}{v_p}$ is added to $G$ for some $1 < p < k-1$.


If $p=2$ (or $p = k-2$), then $C' = (v_0,v_2,\ldots, v_{k-1})$ is a chordless cycle of length $k-1$. By the inductive hypothesis, any chordal completion requires at least $(k-1)-3 = k-4$ edges in~$\interior(C')$ (which are also in~$\interior(C)$).  By symmetry, the same analysis and result hold if~$p = k-2$. Therefore any chordal completion of~$G$ employing chord~$\edge{v_0}{v_p}$ for $p = 2$ (or $p = k-2$) requires at least $1 + k-4 = k-3$ edges in~$\interior(C)$.

Finally, let $3 \leq p \leq k-3$. In this case, chord~$\edge{v_0}{v_p}$ cuts $C$ and creates two chordless cycles: $C^1 = (v_0, v_1, \ldots, v_p)$ and $C^2 = (v_0, v_p, v_{p+1}, \ldots, v_{k-1})$.  By the inductive hypothesis, any chordless completion of~$G$ requires at least $p-3$ edges in~$\interior(C^1)$ and at least $k-p+2-3 = k-p-1$ edges in~$\interior(C^2)$; as $\interior(C^1) \cap \interior(C^2) = \emptyset$ and $\interior(C^1) \cup \interior(C^2) \subset \interior(C)$, any chordal completion of~$G$ employing chord~$\edge{1}{v_p}$ for  $3 \leq p \leq k-3$ also requires at least $(1) + (p-3) + (k-p-1) = k - 3$ edges in~$\interior(C)$, as desired. \Halmos
\endproof

Based on the previous lemma, we show below that the set of inequalities (\ref{ineq:ci}) is valid for all sequences in~$\setcycles$ and, as a consequence, that (\ref{ipMCCP}) is a valid formulation for the MCCP.
\begin{proposition}
\label{prop:correctnessIPMCCP}
The model (\ref{ipMCCP}) is a valid formulation for the MCCP.
\end{proposition}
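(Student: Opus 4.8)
The plan is to show two things: every chordal completion of $G$ satisfies all the inequalities (\ref{ineq:ci}), and conversely every $x \in \{0,1\}^{m^c}$ satisfying all of them corresponds to a chordal completion. Combined with the objective $\sum_f x_f$ counting exactly the number of added edges, this establishes that (\ref{ipMCCP}) is a valid formulation.

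For validity of the inequalities, I would fix a chordal completion $F$ and an arbitrary sequence $C = (v_0, \dots, v_{k-1})$ with $\interior(C) \cap E = \emptyset$, and argue on the value of $\sum_{f \in F(C)} x(F)_f$. If this sum is less than $|F(C)|$, then $\sum_{f \in F(C)} x(F)_f - |F(C)| + 1 \le 0$, so the coefficient $-(|C|-3)$ multiplies a nonpositive quantity and the whole left-hand side is a sum of nonnegative terms (recall $|C| - 3 \ge 0$ when $|C| \ge 3$, and for $|C| \le 3$ the interior is empty so the inequality is trivial); hence it is $\ge 0$. The interesting case is when all fill edges of $\exterior(C)$ are present in $F$, i.e. $\sum_{f \in F(C)} x(F)_f = |F(C)|$: then the parenthesized expression equals $1$ and the inequality reads $\sum_{f \in \interior(C)} x(F)_f \ge |C| - 3$. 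But in that case $C$ is a cycle in the completed graph $G + F$, and since $\interior(C) \cap E = \emptyset$, the chords present among $V(C)$ in $G+F$ are precisely those in $\interior(C) \cap F$. Here I need to be slightly careful: $C$ need not be chordless in $G+F$ — but it is a cycle of length $|C| \ge 4$ in a chordal graph, so by the standard characterization (or by applying Lemma~\ref{lem:ThreeInteriorEdges} to a chordless sub-cycle, or inducting) it must have at least $|C|-3$ chords present. I would phrase this cleanly: $G+F$ chordal and $C$ a cycle of length $k$ in it forces at least $k-3$ edges of $\interior(C)$ to lie in $E \cup F = E \cup F$; since $\interior(C) \cap E = \emptyset$ these all lie in $F$, giving $\sum_{f \in \interior(C)} x(F)_f \ge k-3$. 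Lemma~\ref{lem:ThreeInteriorEdges} does exactly the chordless case; the general cycle case follows by noting a chordal graph in which $C$ is a cycle contains, restricted to $V(C)$, a chordal graph, so one can peel off chords. I expect this to be the main technical point to state carefully, though it is essentially immediate from chordality.

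For the converse, suppose $x \in \{0,1\}^{m^c}$ satisfies all inequalities (\ref{ineq:ci}) but, for contradiction, $G(x) = G + E(x)$ is not chordal. Then $G(x)$ contains a chordless cycle $C = (v_0,\dots,v_{k-1})$ with $k \ge 4$. The edges of $\exterior(C)$ all lie in $E \cup E(x)$, and since $C$ is chordless in $G(x)$, none of the edges in $\interior(C)$ belong to $E \cup E(x)$; in particular $\interior(C) \cap E = \emptyset$, so the inequality (\ref{ineq:ci}) associated with this $C$ is present in the formulation. Moreover $\interior(C) \cap E(x) = \emptyset$ means $\sum_{f \in \interior(C)} x_f = 0$, while every fill edge of $\exterior(C)$ is used, i.e. $\sum_{f \in F(C)} x_f = |F(C)|$, making the parenthesized term equal to $1$. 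The inequality then demands $0 \ge |C| - 3 \ge 1$, a contradiction. Hence $G(x)$ is chordal and $x \in X(G)$.

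Finally I would close the loop: conversely, for any chordal completion $F$, the first part shows $x(F)$ is feasible, and the objective value $\sum_f x(F)_f = |F|$; for any feasible integer $x$, the second part shows $E(x)$ is a chordal completion with $|E(x)| = \sum_f x_f$. Therefore the optimal solutions of (\ref{ipMCCP}) are exactly the minimum chordal completions of $G$, which is the claim. The only real obstacle is the careful handling of non-chordless cycles $C$ in the validity argument — making sure Lemma~\ref{lem:ThreeInteriorEdges} (stated for chordless cycles) is invoked correctly, or supplying the one-line extension to arbitrary cycles in a chordal graph — and confirming the degenerate small-$|C|$ cases are vacuous.
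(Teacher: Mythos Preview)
Your proposal is correct and follows essentially the same approach as the paper: both directions of the bijection between $X(G)$ and the feasible set of (\ref{ipMCCP}) are argued exactly as you outline, with the paper also invoking Lemma~\ref{lem:ThreeInteriorEdges} for the validity direction and exhibiting a chordless cycle for the converse. You are in fact slightly more careful than the paper in flagging that $C$ need not be chordless in $G+F$ when applying Lemma~\ref{lem:ThreeInteriorEdges}; the clean fix is to apply the lemma to the graph $G+F(C)$, in which $C$ \emph{is} chordless (since $\interior(C)\cap E=\emptyset$), and observe that $F\setminus F(C)$ is a chordal completion of that graph.
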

\proof{Proof.}
We first show that there is an one-to-one correspondence between $X(G)$ and solutions to (\ref{ipMCCP}).
Let $x^*$ be a feasible solution to (\ref{ipMCCP}) and suppose that the graph $G+E(x^*)$ contains a chordless cycle $C$ with more than 3 vertices. 
By definition, 
$\sum_{f \in \interior(C)} x^*_f = 0 < |C|-3$, thus contradicting the feasibility of~$x^*$. Conversely, let $E^* \subseteq \filledges$ be such that $G+E^*$ is chordal, and suppose
$x^* = \{x \in \{0,1\}^{m^c} : x_f = 1 \Leftrightarrow f \in E^*\}$ is infeasible
to (\ref{ipMCCP}). Assume that one of the violated inequalities is associated with sequence~$C^*$; note that 
$\sum_{f \in F(C^*)} x_f = |F(C^*)|$, as the associated inequality would be trivially satisfied otherwise. Therefore, $C^*$ is a cycle in $G + E^*$ and we must have $\sum_{f \in \interior(C^*)} x_f < |C|-3$, which, by Lemma~\ref{lem:ThreeInteriorEdges}, contradicts the fact that $G + E^*$ is chordal.
Finally, since the one-to-one correspondence holds and the objective function of (\ref{ipMCCP}) minimizes the number of added edges, the result follows. \Halmos
\endproof

\section{MCCP Polytope Dimension and Simple Upper Bound Facets}
\label{sec:polytopeMCCP}

This section begins our investigation of the convex hull of the feasible set of chordal completions $X(G)$, which will lead to special properties that can be exploited by computational methods for the MCCP. We identify the dimension of the polytope and provide a proof that the simple upper bound inequalities $x_f \le 1$ are facet-defining.

\begin{theorem}
If $G=(V,E)$ is not a complete graph (and hence not trivially chordal),

$\hspace{2ex}$ a. $\textnormal{conv}(X(G))$ is full-dimensional;

$\hspace{2ex}$ b. $x_f \le 1$ is facet-defining for all $f \in \filledges$. 
\end{theorem}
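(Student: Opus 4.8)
The plan is to prove both statements simultaneously by exhibiting, for each fill edge $f \in \filledges$, a large collection of affinely independent feasible chordal completions. For part (a), it suffices to show that $\textnormal{conv}(X(G))$ contains $m^c + 1$ affinely independent points; for part (b), I would show that the face $\{x \in \textnormal{conv}(X(G)) : x_f = 1\}$ contains $m^c$ affinely independent points, which (given full-dimensionality and the fact that $x_f \le 1$ is a valid inequality) certifies it as a facet.

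First I would observe that the all-ones vector $x(\filledges)$ is feasible, since $G + \filledges$ is the complete graph $K_{|V|}$, which is chordal. More importantly, I would establish the key building block: \emph{for every fill edge $g \in \filledges$, there is a minimal chordal completion $F_g$ containing $g$.} This follows because one can start from the completion $\{g\}$ and greedily add fill edges until the graph becomes chordal, then remove edges until minimality is reached while keeping $g$ (if removing $g$ is ever forced, one can instead retain it — more carefully, take any minimal chordal completion of the graph $G + \{g\}$ relative to the ground set $\filledges \setminus \{g\}$, and union it with $\{g\}$; since $G$ is not complete, $\{g\} \ne \filledges$ in general, but even if $|\filledges| = 1$ the statement is trivial). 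Let me denote by $\chi^g := x(F_g)$ the characteristic vector of such a completion; note $\chi^g_g = 1$.

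For part (a), I would take the $m^c$ vectors $\{\chi^g : g \in \filledges\}$ together with $x(\filledges)$, and argue affine independence. The cleanest route: the difference vectors $\chi^g - x(\filledges)$ for $g \in \filledges$ — here I would instead want linear independence of a suitable set, so it is cleaner to also include the single-edge-type completions. Actually the simplest choice is this: since $G$ is not complete and removing a single fill edge from $K_{|V|}$ still leaves a chordal graph (any $K_n$ minus one edge is chordal for $n \ge 2$), the vector $\mathbf{1} - e_g$ is feasible for every $g \in \filledges$, where $e_g$ is the $g$-th unit vector. Then $\{\mathbf{1}\} \cup \{\mathbf{1} - e_g : g \in \filledges\}$ is a set of $m^c + 1$ affinely independent points (the differences $\mathbf{1} - (\mathbf{1} - e_g) = e_g$ are the $m^c$ standard basis vectors, hence linearly independent). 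This proves (a). For part (b), fix $f$. The points $\mathbf{1}$ and $\{\mathbf{1} - e_g : g \in \filledges,\ g \ne f\}$ all lie on the hyperplane $x_f = 1$, giving $m^c$ points whose difference vectors $\{e_g : g \ne f\}$ span an $(m^c-1)$-dimensional space; since $x_f \le 1$ is valid and the face it defines has dimension at least $m^c - 1 = \dim(\textnormal{conv}(X(G))) - 1$, it is facet-defining.

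The main obstacle — and the only place requiring genuine care — is verifying the feasibility claims, namely that $K_{|V|}$ minus any single edge is chordal (which is elementary: the only cycle through the two non-adjacent vertices of length $\ge 4$ still has all other chords present) and, if one prefers the minimal-completion route instead, that a minimal chordal completion containing a prescribed fill edge exists. I would present the $\mathbf{1} - e_g$ construction as the primary argument since it sidesteps minimality entirely and makes affine independence transparent; the hypothesis that $G$ is not complete is used precisely to guarantee $m^c \ge 1$ so that these vectors exist and differ from $\mathbf{1}$.
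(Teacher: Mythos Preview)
Your proposal is correct and, once you settle on the $\mathbf{1}$ and $\mathbf{1}-e_g$ construction, it is essentially identical to the paper's proof: the paper likewise uses the all-ones vector together with the $m^c$ vectors obtained by deleting one fill edge from the complete graph, verifies that $K_{|V|}$ minus any single edge is chordal, and reads off affine independence from the unit difference vectors. The initial detour through minimal chordal completions $F_g$ is unnecessary and you rightly abandon it; I would simply excise that paragraph and present the $\mathbf{1}-e_g$ argument directly.
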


\proof{Proof.}
We first show (a). Let $e \in \{0,1\}^{m^c}$ be the vector consisting only of ones and $e^j \in \{0,1\}^{m^c}$ be the unit vector for coordinate~$j$.  
By definition, $G + E(e)$ is the complete graph (which is trivially chordal), whereas $G + E(e-e^j)$ is the complete graph with only the edge associated with coordinate~$j$ missing. Graphs 
$G + E(e-e^j)$ are chordal for every~$e^j$;   this follows because the graph induced by the set of vertices in any cycle $C$ of cardinality $k \geq 4$ contains at least ${k \choose 2} - 1$ edges and, consequently, $|\interior(C)| \geq 
{k \choose 2} - 1 - k = \frac{k^2-3k-2}{2}$, which 
is greater than or equal to 1 for $k \geq 4$.
The set of $m^c+1$ vectors $\{e\} \cup \{e-e^1,e-e^2,\ldots,e-e^{m^c}\}$  is affinely independent and contained in the set~$X(G)$, and so it follows that $\textnormal{conv}(X(G))$ is a full-dimensional polytope.

For $(b)$, let $f \in \filledges$ and notice that the set of $m^c$ vectors $\{e\} \cup \{e-e^{f'}: f' \in \filledges \setminus \{f\} \}$
is affinely independent and satisfy  $x_f = 1$. \Halmos
\endproof
%

%

\section{Cycle Graph Facets}
\label{sec:cycleGraphFacets}

We restrict our attention now to \textit{cycle graphs}, i.e., graphs consisting of a single cycle. Cycle graphs are the building blocks of computational methodologies for the MCCP, since finding a chordal completion of a graph naturally concerns identifying chordless cycles and eliminating them  by adding chords.  We present four classes of facet-defining inequalities for cycle graphs in this section. 

Let $G=(V,E)$ be a cycle graph associated with a $k$-vertex chordless cycle $C=(v_0, \dots, v_{k-1})$, i.e., $V = V(C)$ and $E = \exterior(C)$. Assume all additions and subtractions involving indices of vertices are modulo-$k$. The proofs presented in this section show only the validity of the inequalities; arguments proving that they are 
facet-defining for cycle graphs are presented in Section~\ref{sec:facetProofsAdditionalInequalities}.

\begin{proposition}
\label{prop:inq1}
Let $G=(V,E)$ be a cycle graph associated with cycle $C=(v_0,\dots,v_{k-1})$, $k \ge 4$. 
The chordal inequality (\ref{ineq:ci}) associated with~$C$, which in this case simplifies to
\[
\sum_{f \in \interior(C)} x_f \ge |C| - 3,
\]
 is facet-defining for $\textnormal{conv}(X(G))$. \Halmos
\end{proposition}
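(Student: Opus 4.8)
The plan is to prove that the face $\mathcal{F} = \{x \in \textnormal{conv}(X(G)) : \sum_{f \in \interior(C)} x_f = |C| - 3\}$ is a facet by exhibiting $m^c = |\interior(C)|$ affinely independent chordal completions of the cycle graph $G$, each of which uses exactly $|C|-3$ interior edges. Validity is already granted by Lemma~\ref{lem:ThreeInteriorEdges}, so the entire burden is the dimension count. Since $\textnormal{conv}(X(G))$ is full-dimensional (by the theorem in \S\ref{sec:polytopeMCCP}, as a cycle graph on $k \ge 4$ vertices is not complete), it suffices to produce $m^c$ affinely independent points lying on $\mathcal{F}$.

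First I would recall the classical fact that a chordless $k$-cycle has exactly $k-3$ chords in any of its triangulations, and that the triangulations of a convex $k$-gon are in bijection with, e.g., the ``fan'' triangulations and their rotations. Concretely, for each vertex $v_i$, the fan triangulation $T_i$ consisting of the chords $\{v_i, v_j\}$ for all $j \notin \{i-1, i, i+1\}$ is a chordal completion using exactly $k-3$ interior edges, hence its characteristic vector $x(T_i)$ lies on $\mathcal{F}$. That gives $k$ points, but we need $m^c = |\interior(C)| = \binom{k}{2} - k = \frac{k(k-3)}{2}$ of them, so the fans alone are not enough for $k \ge 5$; I would need a richer family of triangulations. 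A convenient systematic choice is the following: take the $k$ fans $T_0, \dots, T_{k-1}$, and then for each chord $e = \{v_i, v_j\}$ not of the form handled, build a triangulation containing $e$ together with an appropriate completion of the two sub-polygons it cuts off — there are plenty of triangulations, and the combinatorial task is to pick $m^c$ of them whose characteristic vectors span a hyperplane's worth of directions.

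The cleanest route to affine independence is an inductive / incremental argument. I would order the chords and show that the incidence vectors can be chosen so that the matrix whose rows are the $x(T)$'s, after subtracting one fixed base point (say $x(T_0)$), is triangular with nonzero diagonal. Equivalently: for the chord $\{v_i, v_{i+2}\}$ (the ``short'' chords), the triangulation $T_i$ uses it but, going around, each such short chord can be isolated; for longer chords one uses the recursive structure — a chord $\{v_i, v_j\}$ splitting $C$ into cycles $C^1$ and $C^2$ lets one build triangulations by combining independent triangulations of $C^1$ and $C^2$ with $\{v_i,v_j\}$, and by induction on $k$ each of the smaller polygons contributes the required affinely independent sub-families. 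Combining these and carefully bookkeeping the fact that $|\interior(C^1)| + |\interior(C^2)| + 1 = |\interior(C)|$ yields exactly the right count.

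The main obstacle will be the affine-independence bookkeeping: it is easy to write down $k-3$-chord triangulations but genuinely fiddly to guarantee that a chosen collection of $\frac{k(k-3)}{2}$ of them is affinely independent. I expect the argument to go through by induction on $k$ using the split along a single chord, but one must be careful that the two recursively obtained families, when embedded back into the $\binom{k}{2}$-dimensional edge space and augmented with the splitting chord plus the $k$ fans, remain jointly independent — this requires choosing the recursion chord and the base point consistently so that the block structure of the difference matrix is upper-triangular. Once that combinatorial lemma on triangulations of the $k$-gon is in place, the facet claim follows immediately, and the same family of triangulations will presumably be reused for the facet proofs of the other three inequality classes in \S\ref{sec:facetProofsAdditionalInequalities}.
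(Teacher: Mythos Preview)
Your overall strategy---exhibit $m^c = \tfrac{k(k-3)}{2}$ affinely independent triangulations of the $k$-gon---is a legitimate route, and it is genuinely different from the paper's. The paper uses the dual characterization: it assumes $\mu x \ge \mu_0$ is valid and tight on $F^I$, then constructs three small parametric families of triangulations (a base fan $x'$ from $v_0$; a ``two-fan'' family $\tilde{x}^i$ hinged at $v_0$ and $v_1$; and a ``fan-minus-ear'' family $\hat{x}^i$) and compares them pairwise to force all $\mu_f$ equal. Rotational symmetry of the cycle does most of the work, so only $O(k)$ explicit solutions are needed rather than $\tfrac{k(k-3)}{2}$, and no affine-independence matrix ever has to be analyzed.

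Your sketch, however, has a concrete counting error that breaks the proposed induction. When you split $C$ along a chord $\{v_i,v_j\}$ into sub-cycles $C^1$ and $C^2$, the identity you invoke,
\[
|\interior(C^1)| + |\interior(C^2)| + 1 = |\interior(C)|,
\]
is false: it omits all the ``cross'' chords $\{v_a,v_b\}$ with $v_a \in V(C^1)\setminus\{v_i,v_j\}$ and $v_b \in V(C^2)\setminus\{v_i,v_j\}$. For a split of a $k$-cycle at $\{v_0,v_j\}$ there are $(j-1)(k-j-1)$ such chords (e.g.\ for $k=5$, $j=2$ you get $0+2+1=3$, not $5$). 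Consequently, gluing together independent families from $C^1$ and $C^2$ plus the splitting chord cannot possibly yield $m^c$ points; you would still need to manufacture triangulations that distinguish every cross chord, and nothing in your outline addresses this. The approach is salvageable (one can, for instance, use flips across each chord to isolate coordinates one at a time), but as written the induction does not close, and the ``block upper-triangular'' structure you anticipate does not materialize without handling the cross block explicitly.
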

The proof of the validity of the inequality in Proposition~\ref{prop:inq1} follows directly from Lemma~\ref{lem:ThreeInteriorEdges}.

\begin{proposition}
\label{prop:inq2}
If $k \ge 4$, the inequality 
\begin{align*}
x_{ \{ v_{i-1},v_{i+1} \} } +  \sum_{f : v_i \in f, \{v_{i-1},v_{i+1}\} \cap f = \emptyset } x_{f} \geq \: 1, \quad \; \textnormal{for all $i \in \{1,\dots,k\}$} \tag{I2} \label{inq:2}
\end{align*}
is valid and facet-defining for $\textnormal{conv}(X(G))$.
\end{proposition}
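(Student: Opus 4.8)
The inequality (\ref{inq:2}) for a fixed index $i$ says that in any chordal completion, either the ``short chord'' $\{v_{i-1},v_{i+1}\}$ is added, or at least one chord incident to $v_i$ (other than one touching $v_{i-1}$ or $v_{i+1}$) is added. The plan is to prove validity first, then affine independence.

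For \emph{validity}, I would argue by contradiction: suppose $x\in X(G)$ has $x_{\{v_{i-1},v_{i+1}\}}=0$ and $x_f=0$ for every fill edge $f$ with $v_i\in f$ and $f\cap\{v_{i-1},v_{i+1}\}=\emptyset$. Then in $G(x)$ the only possible neighbors of $v_i$ are $v_{i-1}$ and $v_{i+1}$ (its two cycle neighbors), since every other potential edge at $v_i$ is excluded — the edges $\{v_i,v_{i-1}\}$, $\{v_i,v_{i+1}\}$ are already present, the edge $\{v_{i-1},v_{i+1}\}$ is excluded by assumption, and all remaining edges at $v_i$ are of the excluded type. So $v_i$ has degree exactly $2$ in $G(x)$, with neighbors $v_{i-1}$ and $v_{i+1}$ which are non-adjacent (as $\{v_{i-1},v_{i+1}\}\notin E(G(x))$). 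Consider a shortest path $P$ from $v_{i-1}$ to $v_{i+1}$ in $G(x)-v_i$; such a path exists and has length $\ge 2$ because $G$ is connected (it contains the rest of the cycle) and $v_{i-1},v_{i+1}$ are non-adjacent. Appending $v_i$ closes $P$ into a cycle of length $\ge 4$, and since $P$ is a shortest $v_{i-1}$–$v_{i+1}$ path it is chordless, while $v_i$ has no chords to the interior of $P$ (degree $2$); hence this cycle is chordless of length $\ge 4$ in the chordal graph $G(x)$, a contradiction. This establishes (\ref{inq:2}).

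For the \emph{facet-defining} part — which, per the paper's stated convention, is deferred to Section~\ref{sec:facetProofsAdditionalInequalities} — the approach is the standard one for a full-dimensional polytope of dimension $m^c$: exhibit $m^c$ affinely independent points of $X(G)$ that satisfy (\ref{inq:2}) at equality. Since the ambient space is full-dimensional (by the Theorem in \S\ref{sec:polytopeMCCP}), equivalently I would show that any valid inequality $\alpha^\top x\ge\beta$ tight on the face must be a positive multiple of (\ref{inq:2}). The natural construction: start from a chordal completion of $G$ that uses the single chord $\{v_{i-1},v_{i+1}\}$ together with a fan of chords from some apex vertex triangulating the remaining $(k-1)$-cycle $(v_{i-1},v_{i+1},v_{i+2},\dots,v_{i-2})$; this is a minimal triangulation, it has $\{v_{i-1},v_{i+1}\}$ present and no chord at $v_i$, so it lies on the face. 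By varying which vertex is the fan apex, and by replacing various interior edges with others while keeping the face constraint, one generates enough completions; then add to each the individual missing fill edges one at a time (each superset of a chordal completion is chordal, since adding edges toward the complete graph cannot create a long chordless cycle — same counting argument as in the dimension proof) to build the full affinely independent family. The one subtlety is ensuring the points we add stay on the face $\{x_{\{v_{i-1},v_{i+1}\}}+\sum x_f = 1\}$, i.e., we must never add a chord at $v_i$ of the forbidden type, nor a second unit of slack; so the ``fill-in one edge at a time'' step must be restricted to fill edges $f$ not of the form counted in (\ref{inq:2}) and not equal to $\{v_{i-1},v_{i+1}\}$, together with a separate argument covering those coordinates.

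The \textbf{main obstacle} I anticipate is the facet step, specifically producing $m^c$ affinely independent tight points while respecting the face equation — the validity argument above is short, but the combinatorial bookkeeping of triangulations of the $(k-1)$-cycle $(v_{i-1},v_{i+1},v_{i+2},\dots,v_{i-2})$ that (i) contain $\{v_{i-1},v_{i+1}\}$, (ii) contain no forbidden chord at $v_i$, and (iii) collectively span the coordinate directions transverse to (\ref{inq:2}), is the delicate part. A clean way to organize it is to first handle the ``contracted'' cycle graph on $V\setminus\{v_i\}$ plus the mandatory edge $\{v_{i-1},v_{i+1}\}$ — whose chordal-completion polytope is essentially that of a $(k-1)$-cycle graph, for which Proposition~\ref{prop:inq1} already identifies a facet — and then lift back up by re-inserting $v_i$, showing the lift is affine-independence-preserving. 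I expect the details to mirror, but be somewhat more involved than, the facet proof of Proposition~\ref{prop:inq1}.
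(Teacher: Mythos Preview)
Your validity argument is correct and essentially identical to the paper's: both observe that if the left-hand side vanishes then $v_i$ has only the two neighbours $v_{i-1},v_{i+1}$ in $G(x)$, take a shortest $v_{i-1}$--$v_{i+1}$ path avoiding $v_i$, and close it through $v_i$ to get a chordless cycle of length at least four.

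For the facet part, however, your plan contains a genuine error. You write that one can ``add to each the individual missing fill edges one at a time (each superset of a chordal completion is chordal \ldots\ same counting argument as in the dimension proof)''. This claim is false: supersets of chordal completions need not be chordal. For instance, take $k=5$ with the fan triangulation $\{\{v_0,v_2\},\{v_0,v_3\}\}$; adding the single chord $\{v_1,v_4\}$ produces the chordless $4$-cycle $(v_1,v_2,v_3,v_4)$. The counting argument in the dimension proof shows only that the \emph{complete graph minus one edge} is chordal, which is a very different statement. Since your scheme for generating the bulk of the affinely independent family relies on this ``add one edge'' step, the plan as written does not go through, and the ``one subtlety'' you flag (staying on the face) is not actually the binding constraint.

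The paper's facet proof (in the appendix) avoids this issue entirely by working in the dual formulation you mention as an alternative: it fixes $i=1$, assumes $\mu x\ge\mu_0$ is tight on the face, and determines $\mu$ via carefully chosen pairs of solutions that differ in a single coordinate. Concretely, it uses fan solutions $\check{x}^j$ centred at $v_j$ for $j\in\{3,\dots,k-1\}$ (each contributing exactly the edge $\{v_1,v_j\}$ to the inequality), paired with modifications $\check{y}^j$ swapping $\{v_1,v_j\}$ for $\{v_0,v_2\}$, to show all inequality-coefficients are equal; then a second pair $z(f'),z'(f')$ differing only in an edge $f'$ outside the support of (\ref{inq:2}) forces $\mu_{f'}=0$. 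Each of these solutions is verified chordal \emph{individually} (typically by exhibiting a perfect elimination ordering or an isomorphism to a solution already analysed in the proof of Proposition~\ref{prop:inq1}), so no monotonicity of chordality is ever invoked. Your ``contract to the $(k-1)$-cycle and lift'' idea could in principle be made to work, but it too will require explicit chordality checks rather than the monotonicity shortcut.
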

\proof{Proof.}
Suppose that (\ref{inq:2}) is violated by some $x \in \textnormal{conv}(X(G))$, i.e., that for some $\{v_{i-1},v_{i},v_{i+1}\} \subset V$, 
$x_{ \{ v_{i-1},v_{i+1} \} } +  \sum_{f : v_i \in f, \{v_{i-1},v_{i+1}\} \cap f = \emptyset } x_{f} = 0$. 
As $k \geq 4$, 
a shortest path~$P$ from $v_{i-1}$ to $v_{i+1}$ in $G(x)$ that does not include $v_i$ traverses at least two edges. The sequence defined by the concatenation of~$P$ with $(v_{i-1}, v_i, v_{i+1})$ 
defines thus a $k'$-chordless cycle of $G(x)$ for $k' \geq 4$, a contradiction. \Halmos
\endproof

For the next proposition, some additional notation is in order. For any two vertices $v_i$ and $v_j$ with $i < j$, let $d_C(v_i,v_j) := \min \{j-i, k-j+i\}$ be the ``distance'' between $v_i$ and $v_j$ in $C$, and assume $d_C(v_i,v_j) := d_C(v_j,v_i)$ if $i > j$. 

\begin{proposition}
\label{prop:inq3}
If $k \ge 5$, the inequality 
\begin{align*}
&\sum_{f \in \left\{ \{v_i,v_j\} \in \filledges \,:\, d_C(v_i,v_j) = 2 \right\} } x_{f} \geq \: 2 \tag{I3} \label{inq:3}
\end{align*}
is valid and facet-defining for $\textnormal{conv}(X(G))$.
\end{proposition}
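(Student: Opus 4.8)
The plan is to prove the two assertions in turn, establishing validity here (as is this section's convention) and postponing facet-definingness.

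For validity, since $\textnormal{conv}(X(G))$ is the convex hull of the characteristic vectors $x(F)$ of chordal completions $F$ of $G$, it suffices to show that every chordal completion $F$ of the cycle graph $G$ contains at least two distinct fill edges at cyclic distance $2$. Fix such an $F$ and put $H := G+F$, which is chordal and contains every cycle edge in $\exterior(C)$. The engine of the proof is the classical structural fact that a chordal graph on at least two vertices has at least two distinct simplicial vertices (if $H$ is complete, every vertex is simplicial; otherwise a non-complete chordal graph has two non-adjacent simplicial vertices, via the perfect elimination ordering characterization of \cite{Fulkerson1965}). Let $v_a\ne v_b$ be two simplicial vertices of $H$. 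The cycle edges $\{v_{a-1},v_a\},\{v_a,v_{a+1}\}\in E(H)$ place $v_{a-1},v_{a+1}$ in the clique $N_H(v_a)$, so $\{v_{a-1},v_{a+1}\}\in E(H)$; since $k\ge 4$ these vertices are non-consecutive in $C$, hence $\{v_{a-1},v_{a+1}\}\in\interior(C)=\filledges$ and, lying in $E(H)\setminus\exterior(C)$, it belongs to $F$, with $d_C(v_{a-1},v_{a+1})=2$. The identical argument applied to $v_b$ yields $\{v_{b-1},v_{b+1}\}\in F$ at distance $2$.

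To finish validity I would check that these two chords are distinct: reading indices modulo $k$, $\{v_{a-1},v_{a+1}\}=\{v_{b-1},v_{b+1}\}$ forces either $a\equiv b$ (excluded) or simultaneously $a\equiv b+2$ and $a\equiv b-2$, hence $4\equiv 0\pmod k$, which is impossible for $k\ge 5$. Thus $F$ — and therefore, by convexity, every point of $\textnormal{conv}(X(G))$ — carries at least two distance-$2$ chords of $C$, i.e.\ satisfies (\ref{inq:3}). Informally this is the ``two ears'' phenomenon for triangulations of a convex $k$-gon, each ear $v_i$ contributing the chord $\{v_{i-1},v_{i+1}\}$; working with simplicial vertices sidesteps having to argue separately that a minimal chordal completion of a cycle is in fact a polygon triangulation.

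For facet-definingness I would use the standard template for a full-dimensional polytope (recall $\textnormal{conv}(X(G))$ is full-dimensional, \S\ref{sec:polytopeMCCP}): exhibit $m^c=\binom{k}{2}-k$ affinely independent completions lying on the face of (\ref{inq:3}), i.e.\ chordal completions containing exactly two distance-$2$ chords. A natural seed is the fan triangulation at a vertex $v_r$, with chords $\{v_r,v_{r+2}\},\dots,\{v_r,v_{r-2}\}$, whose only distance-$2$ chords are the ears $\{v_r,v_{r+2}\}$ and $\{v_r,v_{r-2}\}$; rotating $r$ and then enlarging such triangulations by carefully chosen longer chords — each time preserving chordality and never introducing a third distance-$2$ chord — should supply the required family. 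This is where the real work lies, and is carried out in the dedicated section: validity collapses to the two-simplicial-vertices fact plus a one-line parity check using $k\ge 5$, whereas assembling enough tight, affinely independent completions without ever creating a third short chord, and then verifying their affine independence, is the genuine obstacle.
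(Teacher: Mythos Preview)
Your validity argument is correct and takes a genuinely different route from the paper's. The paper proves validity by an ad hoc peeling argument: it picks a shortest chord $\{v_0,v_{j_1}\}$ in the completion~$F$, forces $j_1=2$ (else a short chordless cycle survives), then recurses on the contracted cycle $C'=(v_0,v_2,\dots,v_{k-1})$ to locate a second distance-$2$ chord, handling separately the degenerate case where the second shortest chord of $C'$ happens not to be distance~$2$ in~$C$. Your approach instead invokes the classical Dirac-type fact that a chordal graph on at least two vertices has two simplicial vertices; each simplicial $v_a$ immediately forces the ear $\{v_{a-1},v_{a+1}\}\in F$, and a clean modular-arithmetic check (impossible coincidence $4\equiv 0\pmod k$ for $k\ge 5$) shows the two ears are distinct. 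Your argument is shorter, structurally transparent (it is exactly the ``two ears'' phenomenon for polygon triangulations, as you note), and makes the role of the hypothesis $k\ge 5$ explicit in a single line; the paper's argument avoids quoting an external structural lemma but is more case-laden and its write-up is somewhat informal.

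For the facet-defining part, your sketch is on the right track and aligns with what the paper actually does in its appendix: the paper works with the equivalent ``any tight valid inequality $\mu x\ge\mu_0$ is a scalar multiple of (\ref{inq:3})'' template rather than exhibiting $m^c$ affinely independent points directly, but the tight completions it uses are precisely fan triangulations (your seed) and fan triangulations augmented by a single longer chord (your enlargement step), together with the rotational symmetry of~$C$. So your plan would succeed, and the concrete solutions you would end up writing down coincide with the paper's $x'$, $\tilde{x}^i$, and $\tilde{y}^i$.
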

\proof{Proof.}

Inequality~(\ref{inq:3}) states that at least two out of the $|C|$ pairs of vertices
of distance 2 must appear in any chordal completion of~$G$. 
Without loss of generality, let $f' = \{v_0,v_{j_1}\}$ be the edge of $\interior(C)$ composing some completion~$F$ of~$G$ that connects the ``closest'' vertices with respect to $d_C$. 
If $j_1 \geq 3$, then $C' = (v_0,v_1,\ldots,v_{j_1})$ is a chordless cycle in $G + F$, a contradiction; therefore, $j_1 = 2$. 

As $k \geq 5$, $C' = (v_0,v_2,v_3,\ldots,v_{k-1})$ is a chordless  cycle in $G + F$ with at least 4 vertices, so at least one edge of $\mathrm{int}(C')$ must be present in $G + F$. Let~$f'' =  \{v'_i,v'_{j_2}\}$ be the edge of $\interior(C')$ that connects the ``closest'' vertices with respect to $d_{C'}$. 
An argument similar to the one used above shows that~$f''$ connects two vertices of distance 2 in~$C'$, so we have two cases to analyse. First, if~$f'' \neq \{v_0,v_2\}$, the result follows directly. Otherwise, we either have $|C'| = 4$, in which case~$d_{C'}(v_0,v_2) = d_{C}(v_0,v_2) = 2$, as desired, or we have a chordless cycle $C'' = (v_0,v_3,\ldots,v_{k-1})$ in $G + F$ with at least 4 vertices, on which we can apply the same arguments; as an eventual sequence of cycles emerging from this construction will eventually lead to a cycle of length~$4$, the result holds. 
\Halmos
\endproof

\begin{proposition}
\label{prop:inq4}
If $k \ge 5$, the inequality 
\begin{align*}
\sum_{f \in \interior(C) \: \backslash \: \left\{ \left\{ v_{j-1},v_{j+1} \right\} , \left\{ v_j, v_i \right\}  \right\} } x_{f} \geq  |C| - 4, \quad \textnormal{for all} \; i,j \in \{1,\dots,k\}, d_C(\{v_j,v_i\}) \geq 2 \tag{I4} \label{inq:4} 
\end{align*}
is valid and facet defining for $\textnormal{conv}(X(G))$.
\end{proposition}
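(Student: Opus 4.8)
The plan is to prove only the validity of (\ref{inq:4}) here; the facet-defining claim would be deferred to \S\ref{sec:facetProofsAdditionalInequalities}, consistent with the convention of this section. Since an inequality valid for the integer set $X(G)$ is automatically valid for $\textnormal{conv}(X(G))$, I would fix an arbitrary chordal completion $F \subseteq \filledges$ and an arbitrary pair $i,j$ with $d_C(\{v_j,v_i\}) \geq 2$, and show that the characteristic vector $x(F)$ satisfies (\ref{inq:4}). A preliminary observation: the hypothesis $d_C(\{v_j,v_i\}) \geq 2$ guarantees that $\edge{v_j}{v_i}$ is a genuine chord of $C$ distinct from $\edge{v_{j-1}}{v_{j+1}}$, so the summation in (\ref{inq:4}) ranges over $\interior(C)$ with exactly two distinct elements deleted.

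I would argue by contradiction, supposing $\sum_{f \in \interior(C) \setminus \{\edge{v_{j-1}}{v_{j+1}},\, \edge{v_j}{v_i}\}} x(F)_f \leq |C| - 5$. Together with the chordal inequality of Proposition~\ref{prop:inq1} (equivalently, Lemma~\ref{lem:ThreeInteriorEdges}), which gives $\sum_{f \in \interior(C)} x(F)_f \geq |C| - 3$, this forces $x(F)_{\edge{v_{j-1}}{v_{j+1}}} + x(F)_{\edge{v_j}{v_i}} \geq 2$, so both chords lie in $F$. Because $\edge{v_{j-1}}{v_{j+1}} \in F$, the sequence $C' := (v_0,\dots,v_{j-1},v_{j+1},\dots,v_{k-1})$ obtained by deleting $v_j$ from $C$ describes a cycle of $G(x(F))$; its length is $|C| - 1$, which is at least $4$ precisely because $k \geq 5$.

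The crux is then to count the chords of $C'$ forced to appear in $G(x(F))$. Since $G(x(F))$ is chordal, so is its induced subgraph on $V(C')$; this induced subgraph is exactly the cycle graph on $C'$ augmented with the chords of $C'$ contained in $F$, so Lemma~\ref{lem:ThreeInteriorEdges}, applied to that cycle graph, yields $|\interior(C') \cap F| \geq (|C|-1)-3 = |C|-4$. The step I would be most careful about is matching this count with the summation set of (\ref{inq:4}): every element of $\interior(C')$ is a two-element subset of $V \setminus \{v_j\}$ that is not an edge of $C'$, hence $\interior(C') \subseteq \interior(C) \setminus \{\edge{v_{j-1}}{v_{j+1}}\}$ and no element of $\interior(C')$ contains $v_j$, so $\interior(C') \cap F$ is contained in $\interior(C) \setminus \{\edge{v_{j-1}}{v_{j+1}}, \edge{v_j}{v_i}\}$. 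Therefore $\sum_{f \in \interior(C) \setminus \{\edge{v_{j-1}}{v_{j+1}},\, \edge{v_j}{v_i}\}} x(F)_f \geq |\interior(C') \cap F| \geq |C|-4$, contradicting the assumed bound $|C|-5$ and completing the validity proof.

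The validity argument above is essentially bookkeeping; the one place it could go wrong is conflating $\interior(C')$ with $\interior(C)$, or applying the chord-counting lemma directly to $G(x(F))$ instead of to the cycle graph sitting inside its induced subgraph on $V(C')$, which is why I flagged it. The substantive work, as for Propositions~\ref{prop:inq2} and~\ref{prop:inq3}, is the facet-defining direction: one must exhibit $m^c$ affinely independent chordal completions of the cycle graph whose characteristic vectors satisfy (\ref{inq:4}) with equality, and this I would defer to \S\ref{sec:facetProofsAdditionalInequalities}.
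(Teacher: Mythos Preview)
Your validity argument is correct and, like the paper's, proceeds by contradiction from Lemma~\ref{lem:ThreeInteriorEdges}. The route, however, is genuinely different. The paper uses the \emph{long} chord $\edge{v_j}{v_i}$ to split $C$ into two subcycles $C^1=(v_j,\dots,v_i)$ and $C^2=(v_j,v_i,\dots,v_{j-1})$, applies Lemma~\ref{lem:ThreeInteriorEdges} to each, and then sums the bounds, checking that $\interior(C^1)$ and $\interior(C^2)$ are disjoint and both avoid the two excluded edges. You instead use the \emph{short} chord $\edge{v_{j-1}}{v_{j+1}}$ to bypass $v_j$ and form the single $(k-1)$-cycle $C'$; one application of Lemma~\ref{lem:ThreeInteriorEdges} then gives $|C|-4$ chords in $\interior(C')$, and since $\interior(C')$ automatically excludes both $\edge{v_{j-1}}{v_{j+1}}$ (now an exterior edge) and anything touching $v_j$, the bookkeeping is immediate. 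Your decomposition is slightly cleaner---one cycle, one lemma invocation, no disjointness check---and in fact never uses that $\edge{v_j}{v_i}\in F$, so you could drop that conclusion. The paper's splitting, on the other hand, mirrors the inductive structure of Lemma~\ref{lem:ThreeInteriorEdges} itself. Both are perfectly sound; your deferral of the facet-defining direction to \S\ref{sec:facetProofsAdditionalInequalities} matches the paper's convention.
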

\proof{Proof.}
Given vertices $v_i$ and $v_j$ such that $d_C(\{v_j,v_i\}) \geq 2$, inequality~(\ref{inq:4}) enforces the inclusion of at least $|C|-4$ edges of $\interior(C) \setminus \{ \{v_{j-1},v_{j+1}\}, \{v_j,v_i\} \}$
in any chordal completion of $G$. 
Without loss of generality, let $j=0$ and let~$i$ be any value in $[2,k-3]$. Suppose by contradiction that there exists $x^0 \in X(G)$ such that
\begin{eqnarray}\label{prop4:ineq}
\sum_{f \in \interior(C) \: \backslash \: \left\{ \left\{ v_{k-1},v_{1} \right\} , \left\{ v_0, v_i \right\}  \right\} } x^0_{f} < |C| - 4.
\end{eqnarray}

By Lemma~\ref{lem:ThreeInteriorEdges}, we have that 
\[ \sum_{f \in \interior(C)} x^0_f \geq |C|-3. \]
This implies that $x^0_{v_{k-1},v_1} = x^0_{v_0,v_i} = 1$, for otherwise inequality~\ref{prop4:ineq} would be violated. Thus, the sequences 
$C^1 = \left( v_0,v_1,\ldots,v_i \right)$ and $C^2 = \left( v_0,v_i,v_{i+1},\ldots,v_{k-1} \right)$ are cycles in $G(x^0)$.  Again, by Lemma~\ref{lem:ThreeInteriorEdges}, at least $\left|C^\ell\right|-3$ fill edges must be present in $\interior(C^\ell)$, $\ell = 1,2$.  
This is only possible if at least $i-2$ edges of $\interior(C^1)$ and at least $|C|-i-2$ edges of $\interior(C^2)$ belong to the set of fill edges described by~$x^0$.
As $\interior(C^1) \cap \interior(C^2) = \emptyset$ and 
$\interior(C^1) \cup \interior(C^2) \subseteq \interior(C)$, we have that
\[ \sum_{f \in \interior(C) \: \backslash \: \left\{ \left\{ v_{i-1},v_{i+1} \right\} , \left\{ v_0, v_i \right\}  \right\} } x^0_{f} \geq \sum_{f \in \mathrm{int}(C^1)} x^0_f + \sum_{f \in \mathrm{int}(C^2)} x^0_f \geq |C|-4,\]
contradicting thus inequality~(\ref{prop4:ineq}). 
\Halmos
\endproof

\medskip

\begin{example} \label{ex:CycleFacets}
Let~$G$ be a cycle graph associated with 6-cycle $C = (v_0, v_1, v_2, v_3, v_4, v_5)$.
An example of inequality (\ref{inq:2}) with $i = 1$ is
\[ x_{\{v_0,v_2\}} + \left( x_{\{v_1,v_3\}} + x_{\{v_1,v_4\}} + x_{\{v_1,v_5\}} \right) \geq 1, \]
which enforces the inclusion of at least one edge $f  = \{v_1,v_k\}$, $k \in \{3,4,5\}$, for every
chordal completion of~$G$ that does not contain edge $(v_0,v_2)$.


Inequality (\ref{inq:3}) translates to 
\[ x_{\{v_5,v_1\}} + x_{\{v_0,v_2\}} + x_{\{v_1,v_3\}} + x_{\{v_2,v_4\}} + x_{\{v_3,v_5\}} + x_{\{v_4,v_0\}} \geq 2, \]
which enforces that at least 2 of the 6 pairs of vertices that are separated by one vertex must appear in any chordal completion of~$G$.

Finally, inequality (\ref{inq:4}) for $i = 4$ and $j = 1$ is given by
\begin{align*}
x_{\{v_0,v_3\}} + x_{\{v_0,v_4\}} + x_{\{v_1,v_3\}} + x_{\{v_1,v_5\}} +x_{\{v_2,v_4\}} + x_{\{v_2,v_5\}} + x_{\{v_3,v_5\}} \geq 2,
\end{align*}
which enforces that at least 2 of the edges in $\interior(C) \setminus \{  \{v_0,v_2\}, \{v_1,v_4\}  \}$
must be included in any chordal completion of~$G$. 

\end{example}

\section{General Polyhedral Properties}
\label{sec:GeneralPolyhedralProperties}

This section provides theoretical insights into the polyhedral structure of the MCCP polytope.  In particular, the first result, provided in Theorem~\ref{thm:validsubinequality}, shows that any inequality proven to be valid on an induced subgraph 
can be extended into a valid inequality for the original graph.  This result is important for the development of practical solution methodology because it shows that finding valid inequalities/facets on particular substructures, such as cycles, can help in the generation of valid inequalities for larger graphs containing these substructures.  
 Theorem~\ref{thm:FacetsforInducedSubcycles} shows how facets for cycles can be lifted to facets of graphs that are subgraphs of cycles, leading to the result in Corollary~\ref{cor:i2inducedfacet} relating to when the inequalities (\ref{ineq:ci}) in their general form in model (\ref{ipMCCP}) are facet-defining.  The final result of the section, Theorem~\ref{thm:ChordFacets}, proves and describes how facets for small cycles can be lifted to facets of larger cycles.

First, we show a lemma that will be used in the proof of Theorem~\ref{thm:validsubinequality}.

\begin{lemma} \label{lem:InducedSubgraph}
If $G = (V,E)$ is chordal, then $G[W]$ is chordal for any $W \subseteq V$.
\end{lemma}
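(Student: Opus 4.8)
The plan is to argue by contradiction, directly invoking the definitions from \S\ref{sec:notation}. Suppose $G[W]$ is not chordal. Then it admits a chordless cycle $C = (v_0, \dots, v_{k-1})$ with $k \ge 4$; that is, $\exterior(C) \subseteq E(G[W])$ and the induced graph $(G[W])[V(C)]$ contains no chord. The first (and only substantive) step is the observation that induced subgraphs compose: since $V(C) \subseteq W \subseteq V$, one has $(G[W])[V(C)] = G[V(C)]$. This is immediate from the definition $E' = E \cap {V' \choose 2}$, because $\left(E \cap {W \choose 2}\right) \cap {V(C) \choose 2} = E \cap {V(C) \choose 2}$ whenever $V(C) \subseteq W$.

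Given this identity, the remainder is a short deduction. From $E(G[W]) \subseteq E(G)$ and $\exterior(C) \subseteq E(G[W])$ we obtain $\exterior(C) \subseteq E(G)$, so $C$ is a cycle of $G$. Since $G[V(C)] = (G[W])[V(C)]$ contains no chord, $C$ is in fact a chordless cycle of $G$, and it has size $k \ge 4$. This contradicts the hypothesis that $G$ is chordal (whose maximum chordless-cycle size is three), so no such $C$ can exist and $G[W]$ is chordal.

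I do not anticipate any real obstacle: the statement is the standard fact that chordality is a hereditary property, and the whole argument reduces to the bookkeeping identity $(G[W])[V(C)] = G[V(C)]$ together with the cycle-size characterization of chordality already fixed in the paper's terminology. If a shorter presentation is preferred, one can simply note that every cycle of $G[W]$ is a cycle of $G$ on the same vertex set inducing the same subgraph, so a chordless cycle of size $\ge 4$ in $G[W]$ would be one in $G$.
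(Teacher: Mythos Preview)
Your proof is correct and takes essentially the same approach as the paper: both argue that a chordless cycle in $G[W]$ is also a chordless cycle in $G$, contradicting chordality of $G$. The paper's version is simply the one-line summary you offer at the end, while you have spelled out the bookkeeping identity $(G[W])[V(C)] = G[V(C)]$ in more detail.
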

\proof{Proof.}
A chordless cycle $C$ in $G[W]$ must be a chordless cycle in $G$. Thus, if $G[W]$ is not chordal, then $G$ cannot be as well. 
$\Halmos$
\endproof


\begin{theorem}
\label{thm:validsubinequality}
Let $G = (V,E)$ be an arbitrary graph and $W \subseteq V$ be any subset of vertices.  If $a'x' \ge b$ is a valid inequality for $X(G[W])$, then $ax \geq b$ is a valid inequality for~$X(G)$, where $a_f = a'_f$ if $f \in E^C(G[W])$ and $a_f = 0$ otherwise.
\end{theorem}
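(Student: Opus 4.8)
The plan is to reduce validity over $X(G)$ to validity over $X(G[W])$ by the natural ``restriction'' operation: given any chordal completion of $G$, intersect it with the pairs of vertices inside $W$ and argue that what remains is a chordal completion of $G[W]$. Since by construction $a_f$ is nonzero only on fill edges $f$ that lie inside $W$, the linear form $ax$ only depends on the part of a completion living within $G[W]$, so once the restriction argument is in place the hypothesis $a'x' \ge b$ applies verbatim.

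Concretely, I would take an arbitrary $x \in X(G)$, let $F := E(x) \subseteq \filledges$ be the associated chordal completion (so $G+F$ is chordal), and set $F' := F \cap {W \choose 2}$. The first step is to verify that $F'$ is a chordal completion of $G[W]$. It is a legitimate set of fill edges of $G[W]$ because $F' \subseteq \filledges \cap {W \choose 2} = {W \choose 2} \setminus E = E^c(G[W])$. Moreover $(G+F)[W]$ has edge set $(E \cup F)\cap {W \choose 2} = E(G[W]) \cup F'$, i.e. $(G+F)[W] = G[W] + F'$, and by Lemma~\ref{lem:InducedSubgraph} this graph is chordal since $G+F$ is. Hence $x(F') \in X(G[W])$, and the hypothesis yields $a'x(F') \ge b$. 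The second step is the bookkeeping identity $ax = a'x(F')$: since $a_f = 0$ whenever $f \notin E^c(G[W])$ we have $ax = \sum_{f \in F} a_f = \sum_{f \in F \cap E^c(G[W])} a'_f$, and $F \cap E^c(G[W]) = F \cap {W \choose 2} = F'$, so $ax = \sum_{f \in F'} a'_f = a'x(F') \ge b$. As $x \in X(G)$ was arbitrary, $ax \ge b$ is valid for $X(G)$ (and therefore also for $\textnormal{conv}(X(G))$, since an inequality valid on an integer set is valid on its convex hull).

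I do not expect a substantive obstacle. The only points that need a little care are the two set-theoretic identities $\filledges(G)\cap {W \choose 2} = E^c(G[W])$ and $(G+F)[W] = G[W]+F'$, which formalize the statement that ``adding fill edges and then restricting to $W$'' and ``restricting to $W$ and then adding fill edges'' commute; the genuine mathematical content is entirely carried by Lemma~\ref{lem:InducedSubgraph}, which transfers chordality to the induced subgraph. A minor edge case (when $G[W]$ is complete, so $X(G[W])$ is a single empty vector and $a$ is the zero vector) is handled trivially by the same argument.
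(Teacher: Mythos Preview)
Your proposal is correct and follows essentially the same approach as the paper: restrict a chordal completion of $G$ to the pairs inside $W$, invoke Lemma~\ref{lem:InducedSubgraph} to conclude the restriction is a chordal completion of $G[W]$, and use that $a$ is supported on $E^c(G[W])$ to transfer the inequality. The only cosmetic difference is that the paper phrases it by contradiction while you argue directly, and you spell out the set-theoretic identities more carefully than the paper does.
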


\proof{Proof.}
By way of contradiction, suppose that $a'x' \ge b$ is valid for~$X(G[W])$ and let $F$ be a chordal completion of~$G$ such that 
$ax(F) < b$. From the construction of~$a$, we have $ax(F) = a'x(F \cap E(G[W])) < b$, and 
by Lemma~\ref{lem:InducedSubgraph}, $G[W] + F \cap E(G[W])$ must be chordal and, consequently, we must have $a'x(F \cap E(G[W])) \geq b$, establishing thus a contradiction.  $\Halmos$
\endproof

%


We now present a result that goes in the opposite direction of Theorem~\ref{thm:validsubinequality}. Namely, it shows how facet-defining inequalities for a cycle graph~$G$ can be transformed into facet-defining inequalities for subgraphs of~$G$; note that subgraphs of cycle graphs consist of collections of paths. This result allows us to show that inequality~(\ref{ineq:ci}) is facet-defining for subgraphs of cycle graphs.



\begin{theorem}\label{thm:FacetsforInducedSubcycles}
Let $G'=(V,E')$ be  
a cycle graph  associated with the cycle $C = (v_0,v_1,\ldots,v_{k-1}) \in \setcycles$ and $G=(V,E)$ be a subgraph of~$G'$ such that  $G'= G + F_G(C),$  
$F_{G}(C) = E' \setminus E$.
If $ax \geq b$ is facet-defining for $\textnormal{conv}(X(G'))$, $a \geq \textbf{0}$, and
 $a' \in \mathbb{R}^{|E^C|}$, with $a'_f = a_f$ if  $f \in E^C \setminus F_G(C)$ and $a'_f = 0$ otherwise, the inequality
\[ a'x \geq b \left( \sum_{f \in F_G(C)} x_{f} - |F_G(C)| + 1  \right) \]
is facet-defining for $\textnormal{conv}(X(G))$. \Halmos
\end{theorem}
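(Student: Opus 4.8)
The plan is to show the stated inequality is valid for $\mathrm{conv}(X(G))$ and then exhibit $\dim(\mathrm{conv}(X(G)))$ affinely independent feasible points on the face it defines. Validity is the easier half: if $x \in X(G)$ satisfies $\sum_{f \in F_G(C)} x_f < |F_G(C)|$, the right-hand side is $\le 0$ (since $a \ge \mathbf{0}$ and $b \ge 0$, noting $b > 0$ can be assumed as otherwise the face is trivial), so the inequality holds because $a' \ge \mathbf{0}$. Otherwise $\sum_{f \in F_G(C)} x_f = |F_G(C)|$, the right-hand side equals $b$, and then $C$ becomes a cycle in $G(x)$; restricting to the induced subgraph on $V(C) = V$ and invoking Lemma~\ref{lem:InducedSubgraph} together with the fact that $ax \ge b$ is valid for $X(G')$ (where now all external edges of $C$ are present), we get $a'x = ax \ge b$. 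I would be careful that $a_f = a'_f$ on $E^C \setminus F_G(C)$ and the coefficients on $F_G(C)$ vanish in $a'$, so $a'x$ literally equals the $a$-evaluation on the completion $E(x) \cup F_G(C)$ of $G'$.

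For the facet-defining part, the strategy is to lift the $|E^C(G')|$ affinely independent points of the facet $\{x \in X(G') : ax = b\}$ up to points of $X(G)$ on the new face. Since $E^C(G) = E^C(G') \,\dot\cup\, F_G(C)$, a natural dimension count gives $|E^C(G)| = |E^C(G')| + |F_G(C)|$ points needed. The first batch: take each point $y$ on the $G'$-facet and form $\hat y := (y, \mathbf{1}_{F_G(C)})$, i.e. add all fill edges of $F_G(C)$; then $G(\hat y) = G' + E(y)$ is chordal, $\sum_{f \in F_G(C)} \hat y_f = |F_G(C)|$ so the right side is $b$, and $a'\hat y = ay = b$ — so all $|E^C(G')|$ of these lie on the face and remain affinely independent (they differ only in the $E^C(G')$ coordinates). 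The second batch must supply $|F_G(C)|$ more points that break affine independence in the $F_G(C)$-directions while staying on the face; for each $e \in F_G(C)$ I would take $\hat y^e := (z^e, \mathbf{1}_{F_G(C)} - \mathbf{1}_{e})$ for a cleverly chosen $z^e$ on (or giving $a z^e = b$ over) the coordinates of $E^C(G')$ such that the completion is chordal. When one edge $e$ of $F_G(C)$ is absent, $C$ is no longer forced to be a cycle; the graph $G + (\{e' \in F_G(C)\} \setminus \{e\})$ has $C$ broken at $e$, so we need $E(z^e)$ to chordalize this "broken-cycle" graph while hitting $a z^e \le b$ (which suffices, since the right side is now $\le 0$, so the inequality $a' \hat y^e \ge (\text{RHS})$ is automatic — wait, we need $\hat y^e$ on the face, i.e. equality).

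That last point is the crux and deserves a cleaner treatment: when some $e \in F_G(C)$ is missing, the defining equation of the face reads $a'x = b(\sum_{f\in F_G(C)} x_f - |F_G(C)| + 1) = b \cdot 0 = 0$, so we need a chordal completion of the broken-cycle graph whose $a'$-value is exactly $0$; since $a' \ge \mathbf{0}$, this means using only fill edges with zero $a'$-coefficient. The main obstacle, then, is showing that for each $e \in F_G(C)$ the broken-cycle graph admits a chordal completion using only zero-coefficient fill edges, and that the resulting $|F_G(C)|$ points are affinely independent from each other and from the first batch. I expect this to follow from a structural observation: a path (the broken cycle) together with any set of already-present chords can always be triangulated "greedily from one end" using chords $\{v_0, v_i\}$, and one argues these particular chords have zero $a$-coefficient because they are not needed — more carefully, one picks $e$ as an external edge and chordalizes so as to avoid the support of $a$, possibly using that $ax \ge b$ being facet-defining for the full cycle forces $a$ to vanish on enough chords (e.g. chords incident to a single well-chosen vertex). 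Affine independence of the second batch is then immediate since $\hat y^e$ is the unique one among all constructed points with a $0$ in coordinate $e$ of the $F_G(C)$-block. Assembling the $|E^C(G')| + |F_G(C)| = |E^C(G)|$ points and checking the full-rank condition completes the argument.
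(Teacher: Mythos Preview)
Your overall architecture matches the paper's proof exactly: establish validity by the case split on whether all of $F_G(C)$ is present, then build $|E^C(G')|$ lifted points from the original facet (your ``first batch'') together with $|F_G(C)|$ additional points that drop one edge of $F_G(C)$ each (your ``second batch''), and finish with the coordinate-by-coordinate affine independence check.

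The one place you make life harder than necessary is the second batch. You correctly reduce to needing, for each $e \in F_G(C)$, a chordal completion of $G + (F_G(C) \setminus \{e\})$ whose $a'$-value is zero, and you then spend the last paragraph worrying about how to triangulate the ``broken cycle'' using only zero-coefficient chords. But there is nothing to triangulate: $G + (F_G(C) \setminus \{e\}) = G' \setminus \{e\}$ is the cycle with one edge removed, i.e.\ a single path, which is already chordal. So the point $\hat y^e = (\mathbf{0},\, \mathbf{1}_{F_G(C)} - \mathbf{1}_e)$ with $z^e = \mathbf{0}$ works immediately, and $a'\hat y^e = 0$ is automatic. This is precisely what the paper does (in the slightly more general Theorem~\ref{thm:FacetsforInducedSubcycles2}, where the hypothesis that $G + E' \setminus \{f\}$ is chordal for every $f$ plays exactly this role). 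Your speculation about greedy triangulations and zero-coefficient chords incident to a fixed vertex is unnecessary and, as written, not actually an argument; once you observe that a path is chordal, the proof closes with no further work.
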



\begin{figure}[t]
\centering
\begin{minipage}{.25\textwidth}
\begin{tikzpicture}[scale=0.75][font=\sffamily,\scriptsize]

\node (1) [draw,circle] [draw] at (-1.2,0) {$v_0$};
\node (2) [draw,circle] [draw] at (1.2,0) {$v_1$};
\node (3) [draw,circle] [draw] at (2,-2) {$v_2$};
\node (4) [draw,circle] [draw] at (0,-3.5) {$v_3$};
\node (5) [draw,circle] [draw] at (-2,-2) {$v_4$};

\node (6) [draw=none] at (0,-4.5) {($a$)};


\path[-](1) edge (2);
\path[dashed](2) edge (3);
\path[dashed](3) edge (4);
\path[-](4) edge (5);
\path[dashed](5) edge (1);
\end{tikzpicture}
\end{minipage}%
\hspace{2.5ex}
\begin{minipage}{.25\textwidth}
\begin{tikzpicture}[scale=0.75][font=\sffamily,\scriptsize]

\node (1) [draw,circle] [draw] at (-1.2,0) {$v_0$};
\node (2) [draw,circle] [draw] at (1.2,0) {$v_1$};
\node (3) [draw,circle] [draw] at (2,-2) {$v_2$};
\node (4) [draw,circle] [draw] at (0,-3.5) {$v_3$};
\node (5) [draw,circle] [draw] at (-2,-2) {$v_4$};

\node (6) [draw=none] at (0,-4.5) {($b$)};

\path[-](1) edge (2);
\path[-](2) edge (3);
\path[-](3) edge (4);
\path[-](4) edge (5);
\path[-](5) edge (1);
\end{tikzpicture}
\end{minipage}

\caption{(a) A graph $G$ with $V = \{v_0,v_1,v_2,v_3,v_4\}$ and $E = \left\{ \{v_0,v_1\},\{v_3,v_4\} \right\}$.  Solid lines represent graph edges and dashed lines are fill edges whose addition to $G$ makes a chordal cycle of length 5. (b) Cycle graph with 5 vertices.}
\label{fig:InducedSubcycle}
\end{figure}
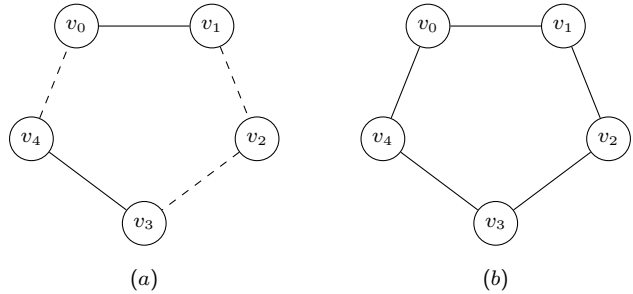

Theorem \ref{thm:FacetsforInducedSubcycles} (proved in Section~\ref{ec:generalProperties}) immediately leads to the following result:
\begin{corollary}
\label{cor:i2inducedfacet}
For any graph $G = (V,E)$ and for any sequence $C \in \setcycles$ such that 
 $\interior(C) \cap E = \emptyset$, the chordal inequality (\ref{ineq:ci}) 
is facet-defining for the MCCP polytope of~$G[V(C)]$.
\end{corollary}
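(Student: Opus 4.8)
The plan is to obtain Corollary~\ref{cor:i2inducedfacet} as a direct application of Proposition~\ref{prop:inq1} and Theorem~\ref{thm:FacetsforInducedSubcycles}; the only real work is checking that the hypotheses of the latter are met by the induced subgraph $G[V(C)]$. Throughout I assume $|C|\ge 4$, since for $|C|\le 3$ one has $\interior(C)=\emptyset$ and the chordal inequality degenerates to $0\ge 0$.

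First I would set $W:=V(C)$ and express the edge sets of $G[W]$ in terms of $C$. Since $\interior(C)\cap E=\emptyset$ by hypothesis and ${W\choose 2}$ is the disjoint union of $\interior(C)$ and $\exterior(C)$, the edge set of $G[W]$ is
\[ E(G[W]) \;=\; E\cap {W\choose 2} \;=\; E\cap\exterior(C) \;=\; \exterior(C)\setminus F(C), \]
whence $E^C(G[W]) = {W\choose 2}\setminus E(G[W])$ is the disjoint union of $\interior(C)$ and $F(C)$. Now let $G'$ be the cycle graph on $W$ with edge set exactly $\exterior(C)$. Then $G[W]$ is a subgraph of $G'$, and $G' = G[W] + F(C)$ with $F(C) = E(G')\setminus E(G[W])$ equal to the set $F_{G[W]}(C)$ of fill edges of $G[W]$ relative to $C$. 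Hence $G'$ and $G[W]$ stand in exactly the relation required by Theorem~\ref{thm:FacetsforInducedSubcycles}.

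Next I would apply Proposition~\ref{prop:inq1} to the cycle graph $G'$: its chordal inequality for $C$ is $\sum_{f\in\interior(C)}x_f \ge |C|-3$, facet-defining for $\textnormal{conv}(X(G'))$. Its coefficient vector $a$ is the all-ones vector on $E^C(G')=\interior(C)$, so $a\ge\mathbf{0}$, and $b=|C|-3\ge 0$. Since $E^C(G[W])\setminus F_{G[W]}(C)=\interior(C)$, the vector $a'$ produced by Theorem~\ref{thm:FacetsforInducedSubcycles} has $a'_f=1$ for $f\in\interior(C)$ and $a'_f=0$ for $f\in F(C)$, and the theorem gives that
\[ \sum_{f\in\interior(C)} x_f \;\ge\; (|C|-3)\Bigl(\sum_{f\in F(C)} x_f - |F(C)| + 1\Bigr) \]
is facet-defining for $\textnormal{conv}(X(G[W]))$. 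Rearranging, this is exactly the chordal inequality (\ref{ineq:ci}) associated with $C$, which is the claim.

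The step needing the most care is the edge-set bookkeeping in the middle paragraph: one must verify that $\interior(C)$, $\exterior(C)$ and $F(C)$ fit together as stated — in particular that $\interior(C)\cap F(C)=\emptyset$, which is immediate from $F(C)\subseteq\exterior(C)$ and $\interior(C)\cap\exterior(C)=\emptyset$, that the notion of fill edges of $G[W]$ relative to $C$ coincides with $F(C)$ computed in $G$, and that the support of the vector $a'$ supplied by Theorem~\ref{thm:FacetsforInducedSubcycles}, which excludes $F_{G[W]}(C)$, matches the coefficient pattern of (\ref{ineq:ci}). Once these identities are in hand the corollary follows with no further combinatorial construction.
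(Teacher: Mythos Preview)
Your proposal is correct and follows exactly the route the paper intends: the paper states that the corollary ``immediately'' follows from Theorem~\ref{thm:FacetsforInducedSubcycles} (together with Proposition~\ref{prop:inq1}), and you have simply spelled out the edge-set bookkeeping showing that $G[V(C)]$ and the cycle graph $G'=(V(C),\exterior(C))$ satisfy the hypotheses of that theorem, and that the lifted inequality coincides with~(\ref{ineq:ci}). Your explicit handling of the degenerate case $|C|\le 3$ is a nice touch the paper leaves implicit.
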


\begin{example} 
\label{ex:InducedSubcycle}
Consider the graph $G$ in Figure~\ref{fig:InducedSubcycle}(a); solid lines represent graph edges in this example. 
As in the statement of Theorem~\ref{thm:FacetsforInducedSubcycles}, we have 
\begin{align*}
& C = \left(v_0,v_1,v_2,v_3,v_4\right), \;E(G) = \left\{ \{v_0,v_1\},\{v_3,v_4\} \right\}, 
\; \textnormal{and} \;  F_G(C) =  \left\{ \{v_1,v_2\},\{v_2,v_3\},\{v_4,v_0\}   \right\}.  \nonumber 
\end{align*}
Graph $G + F(C)$ is 5-chordless cycle.
One facet-defining inequality for this cycle, according to Proposition \ref{prop:inq1}, is the simplified version of the chordal inequality, given by 
\[ x_{\{v_0,v_2\}} + x_{\{v_0,v_3\}} + x_{\{v_1,v_3\}} + x_{\{v_1,v_4\}} + x_{\{v_2,v_4\}} \geq 2.  \]
Corollary~\ref{cor:i2inducedfacet} stipulates that the inequality below is facet defining for $G$:
\begin{align*}
&x_{\{v_0,v_2\}} + x_{\{v_0,v_3\}} + x_{\{v_1,v_3\}} + x_{\{v_1,v_4\}} + x_{\{v_2,v_4\}} \geq 2 \cdot \left(  x_{\{v_1,v_2\}} + x_{\{v_2,v_3\}} + x_{\{v_4,v_0\}} - 3+1 \right).  \nonumber
\end{align*} 
By Theorem~\ref{thm:validsubinequality}, these inequalities will also be valid even if $G$ is a subgraph of a larger graph. 
\end{example}

\medskip


We now define a method for lifting facet-defining inequalities defined on smaller cycles into facet-defining inequalities for large cycles. This is done by considering the inclusion of chords into the inequalities, which reveals a lifting property of MCCPs that can be used to strengthen known inequalities. We present this result in Theorem \ref{thm:ChordFacets}, which is proved in Section~\ref{ec:generalProperties}.

\begin{theorem} \label{thm:ChordFacets}
Let $G=(V,E)$ be a cycle graph associated with the cycle $C = (v_0,v_1,\ldots,v_{k-1})$ 
and let $f^* = \{v_s,v_{t}\} \in \filledges$, $0 \le s < t \le k-1$, be any chord of~$C$. If the inequality $ax \geq b$, $a \geq 0$, is facet-defining for the MCCP polytope of cycle graph $G' = (V',E')$ associated with the cycle 
$C' = (v_s,v_{s+1},\ldots,v_t)$ (i.e., $G' = G[V(C')] + \{f^*\}$), then
\[
a'x \geq b \cdot x_{f^*}
\]
is facet-defining for $\textnormal{conv}(X(G))$,  where 
$a'_f = a_f$ if  $f \in \interior(C')$ and $a'_f = 0$ otherwise. $\Halmos$
\end{theorem}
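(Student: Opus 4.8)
The plan is to prove two things: (i) validity of $a'x \geq b\cdot x_{f^*}$ for $\textnormal{conv}(X(G))$, and (ii) that it induces a facet, i.e.\ it is satisfied at equality by $m^c$ affinely independent points of $X(G)$ and is not the upper-bound facet $x_g\le 1$ in disguise. For validity, I would argue by cases on the value of $x_{f^*}$ for a feasible completion $F$ of $G$. If $f^* \notin F$, the right-hand side is $0$ and the inequality holds trivially because $a' \geq 0$. If $f^* \in F$, then in $G(x(F))$ the sequence $C' = (v_s,\ldots,v_t)$ closed by the chord $f^*$ is a cycle; by Lemma~\ref{lem:InducedSubgraph} the induced graph $G(x(F))[V(C')]$ is chordal, so $F \cap \binom{V(C')}{2}$ restricted to the fill edges of $G'$ gives a chordal completion of $G'$. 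Since $ax \geq b$ is valid (indeed facet-defining, hence valid) for $X(G')$, and since $a'$ agrees with $a$ on $\interior(C')$ and is zero elsewhere, we get $a'x(F) = a \cdot x(F \cap \interior(C')) \geq b = b\cdot x_{f^*}$.

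For the facet argument, let $d = m^c(G)$ and let $d' = m^c(G')$ be the number of fill edges of the short cycle graph. Because $ax\ge b$ is facet-defining for $\textnormal{conv}(X(G'))$, there are $d'$ affinely independent points $x^1,\dots,x^{d'} \in X(G')$ with $a x^j = b$. Each $x^j$ corresponds to a chordal completion $F_j$ of $G'$; I would lift each $F_j$ to a completion of $G$ by adding the chord $f^*$ together with a fixed "padding" set of fill edges of $G$ outside $\interior(C')$ that, combined with $F_j \cup \{f^*\}$, makes $G$ chordal — for instance, triangulating the remaining cycle $(v_t,v_{t+1},\ldots,v_s)$ (closed by $f^*$) by a fan of chords from $v_s$. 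On all these lifted points, $x_{f^*}=1$ and $a'x = ax^j = b$, so they are tight for $a'x \ge b x_{f^*}$. To reach $d$ affinely independent tight points, I then need $d - d'$ more, which I would obtain by setting $x_{f^*}=0$: any completion $F$ of $G$ with $f^* \notin F$ has $a'x(F) \geq 0 = b x_{f^*}$, so I want tight points with $x_{f^*}=0$ and $a'x = 0$, meaning $F$ avoids all of $\interior(C')$ as well. Such completions exist and can be taken to range over enough of the remaining coordinates: e.g.\ start from a chordal completion of $G$ that uses no interior edge of $C'$ and no $f^*$ (triangulate $C$ entirely from outside $V(C')\setminus\{v_s,v_t\}$), then toggle each free fill edge on one at a time, checking chordality is preserved (each such point still has $a'x=0$). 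Finally, affine independence of the full collection follows by a block-triangular argument: the first group varies the $\interior(C')$-coordinates with $x_{f^*}\equiv 1$, the second group fixes those coordinates at $0$ and $x_{f^*}$ at $0$ while spanning the complementary coordinates.

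The main obstacle I anticipate is the bookkeeping in the second group: I must exhibit $d - d' - \text{(something)}$ genuinely affinely independent tight points with $x_{f^*}=0$ and $a'x=0$, and simultaneously confirm that the first, lifted group together with the second spans an affine space of dimension $d-1$, not less. Concretely, one must be careful that the padding chords added when lifting the $x^j$ do not accidentally force linear dependencies, and that the "toggle one fill edge" completions in the second group remain chordal — this requires choosing a convenient global triangulation of $C$ whose chord set is easy to perturb. I would handle this by choosing, once and for all, a fan triangulation of the "outside" cycle rooted at a single vertex, so that adding any one further fill edge keeps every chordless cycle of length $\ge 4$ destroyed; the independence of the resulting incidence vectors is then immediate from the standard identity-matrix-plus-one-block structure. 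A secondary point to check is that $a'x \ge bx_{f^*}$ is not dominated by $x_{f^*}\le 1$ or by nonnegativity — this is ruled out because $a\ge 0$, $b>0$, and $a$ has support of size at least two (it is facet-defining on a cycle of length $\ge 4$), so $a'$ is not a scalar multiple of a single unit vector.
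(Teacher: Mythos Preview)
Your validity argument is correct and coincides with the paper's. Your overall strategy for the facet claim---exhibiting $m^c$ affinely independent tight points---is a legitimate alternative to what the paper does (namely, show that every valid inequality $\mu x\ge\mu_0$ tight on the face is a scalar multiple of $a'x-bx_{f^*}\ge 0$), and your first group of $d'$ points, obtained by lifting the facet witnesses of $G'$ with $x_{f^*}=1$ and a fixed outside triangulation, is sound.

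The gap is in your second group. The key claim---that starting from a fan triangulation of $C$ rooted at some $v_p$ in the outer arc, ``adding any one further fill edge keeps every chordless cycle of length $\ge 4$ destroyed''---is false. Concretely, take $C=(v_0,\ldots,v_7)$, $f^*=\{v_0,v_3\}$, and fan from $v_5$; adding the single edge $\{v_0,v_4\}\in\interior(C'')$ creates the chordless $5$-cycle $(v_0,v_1,v_2,v_3,v_4)$, since every fan chord is incident to $v_5$ and hence lies outside this cycle. So toggling edges \emph{on} from a fan does not preserve chordality. Even setting that aside, the count is short: a fan uses $k-3$ of the $d-d'-1$ coordinates in $\interior(C'')\cup\mathrm{Cross}(f^*)$, so the base point plus one-edge additions yields at most $1+\bigl((d-d'-1)-(k-3)\bigr)=d-d'-k+3<d-d'$ points.

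The paper avoids both problems by working from the other end: to isolate each coordinate $f'\in\interior(C'')\cup\mathrm{Cross}(f^*)$ it builds a tight point with $x_{f^*}=1$ whose $\interior(C'')$ and $\mathrm{Cross}(f^*)$ blocks are \emph{nearly all ones}, and then toggles $f'$ \emph{off}. That direction is safe---removing one edge from a clique (the $V(C'')$-block) keeps the graph chordal, and for a Cross edge $\{v_a,v_b\}$ one chooses a minimal completion of $G'$ in which $v_b$ has degree~$2$, so that $\{v_a,v_b\}$ is not critical (via Lemma~\ref{lem:CriticalEdge}). Only a single tight point with $x_{f^*}=0$ is used, a fan from a vertex of the outer arc, to pin down the constant term. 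If you wish to keep the direct affine-independence route, mirror these constructions: let $x_{f^*}=1$ on the points that span the $\interior(C'')$ and $\mathrm{Cross}(f^*)$ directions, and reserve one $x_{f^*}=0$ fan point to separate the face from the hyperplane $x_{f^*}=1$.
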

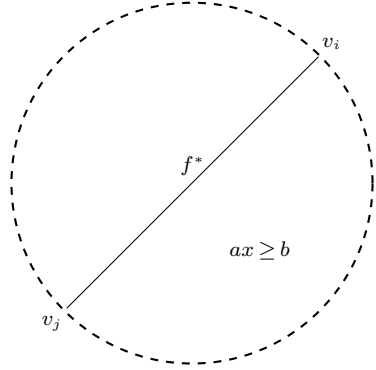
\begin{figure}
\centering
\begin{tikzpicture}[scale=0.6][font=\sffamily,\scriptsize]
\node (i) at (3.1,3.1) {$v_i$};
\node (j) at (-3.1,-3.1) {$v_j$};
\path[-](i) edge node [above] {$f^*$} (j) ;
\node (w) at (1.5,-1.5) {$ax \geq b$};
\draw[thick,dashed] (0,0) circle (4cm);
\end{tikzpicture}
\caption{Depiction of inequality lifting from Theorem~\ref{thm:ChordFacets}.}
\label{fig:ChordFacetProof}
\end{figure}

Figure~\ref{fig:ChordFacetProof} provides a depiction of Theorem~\ref{thm:ChordFacets}.  If $ax \geq b$ is facet-defining for any induced subcycle obtained by adding edge $f^*$, then $ax \geq b \cdot x_{f^*}$ will be facet-defining for the original cycle graph.

\begin{example} \label{ex:ChordFacets}
Consider the graph in Figure \ref{fig:InducedSubcycle}(b), and let~$G' = G[\{v_1, v_2, v_3, v_4\} ] + \{\{v_1,v_4\}\}$ be a cycle graph associated with cycle $C'=(v_1, v_2, v_3, v_4)$. From Proposition~\ref{prop:inq1}, we have that the following chordal inequality is facet-defining for $\textnormal{conv}(X(G'))$:
\begin{align*}
\sum\limits_{f \in \interior(C')}x_f  =  x_{\{v_1,v_3\}} + x_{\{v_2,v_4\}}  \geq |C'|-3 = 1.
\end{align*}
This inequality can be modified in order to become facet-defining for $\textnormal{conv}(X(G))$ through Theorem~\ref{thm:ChordFacets}, yielding
\begin{align*}
x_{\{v_1,v_3\}} + x_{\{v_2,v_4\}} \ge x_{\{v_1,v_4\}} \implies
x_{\{v_1,v_3\}} + x_{\{v_2,v_4\}} - x_{\{v_1,v_4\}} \ge 0.
\end{align*}
\end{example}

\section{Solution Method for the MCCP}
\label{sec:solutionMethod}

We now describe a procedure to solve formulation (\ref{ipMCCP}) for general graphs based on the structural results showed in the previous sections. Since the model has exponentially many constraints, our solution technique is based on a hybrid branch-and-bound procedure that applies separation, lazy-constraint generation, and a primal heuristic to the problem. 

\subsection{Separation complexity}
\label{sec:SeparationComplexity}

Given a graph $G=(V,E)$, we first consider the problem of identifying inequalities violated by a completion of $G$. Recall that, for a vector $x \in \{0,1\}^{m^c}$,
we define $E(x) := \left\{ f \in \filledges: x_f = 1 \right\}$. 

\begin{proposition}
\label{prop:lazycuts}
Let $x^* \in \{0,1\}^{m^c}$ be a solution vector and $G' = (V,E') =  
 G + E(x^*)$. If $G'$ is not chordal, then at least one inequality of each family (\ref{ineq:ci})-(\ref{inq:3}) (and possibly one of family~(\ref{inq:4})) violated by $x^*$ can be found in $O(|V|^3(|E'| + |V|\log|V|))$.
\end{proposition}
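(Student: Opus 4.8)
The plan is to proceed constructively: given the non-chordal graph $G' = G + E(x^*)$, first locate a chordless cycle, then read off violated inequalities of each family from it. The main workhorse is a classical $O(|E'| + |V|\log|V|)$ chordality test (e.g., Lex-BFS or maximum cardinality search with a fill-in check), which either certifies chordality or, when $G'$ is not chordal, returns a witness: a vertex $v$ together with two of its earlier-visited neighbors $u,w$ that are non-adjacent and that lie on a common chordless cycle through $v$. From such a witness one extracts an explicit chordless cycle $C = (v_0, v_1, \ldots, v_{\ell-1})$ of $G'$ with $\ell \ge 4$; this costs at most $O(|V|^2)$ per extraction because recovering the cycle amounts to a bounded number of shortest-path / BFS computations in $G'$. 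I would organize the argument as: (i) invoke the chordality recognition algorithm; (ii) from its failure certificate produce a concrete chordless cycle $C$ in $G'$; (iii) show that $C$ yields a violated inequality of each requested family.

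For step (iii), fix the chordless cycle $C = (v_0,\ldots,v_{\ell-1})$ in $G' = G(x^*)$. Since $C$ is chordless in $G'$, we have $x^*_f = 0$ for every $f \in \interior(C)$, while $\exterior(C) \subseteq E'$ means $\sum_{f \in F(C)} x^*_f = |F(C)|$, so the chordal inequality~(\ref{ineq:ci}) for the sequence $C$ reads $0 \ge |C|-3 \ge 1$ — violated. Because $\ell \ge 4$, pick any three consecutive vertices $v_{i-1}, v_i, v_{i+1}$ of $C$; all of $\{v_{i-1},v_{i+1}\}$ and every chord at $v_i$ lie in $\interior(C)$ and hence have $x^*$-value $0$, so inequality~(\ref{inq:2}) for index $i$ is violated (its left-hand side is $0 < 1$). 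If $\ell \ge 5$, the distance-$2$ edges of $C$ all lie in $\interior(C)$, so (\ref{inq:3}) has left-hand side $0 < 2$ and is violated; if $\ell = 4$ the cycle is a chordless $4$-cycle and (\ref{inq:3}) need not apply to $C$, which is why the proposition only promises family~(\ref{inq:3}) "of each" among those whose size threshold is met — here one may instead note that a chordless $4$-cycle already gives violations of (\ref{ineq:ci}) and (\ref{inq:2}), and search briefly for a longer chordless cycle if one insists on (\ref{inq:3}). Family~(\ref{inq:4}) requires additionally a chord $\{v_j,v_i\}$ and vertex pair to be singled out; when $\ell \ge 5$, picking $j=0$ and any admissible $i$, the remaining interior edges of $C$ are again all zero under $x^*$, so (\ref{inq:4}) is violated — but only "possibly", since one must check that the designated cycle has length at least $5$, hence the hedged phrasing in the statement.

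For the running-time bound I would argue as follows. A single chordality test and cycle extraction costs $O(|V|^2 + |E'| + |V|\log|V|) = O(|V|^2 + |E'|)$, which is dominated by the claimed bound; the factor $|V|^3$ (times $|E'| + |V|\log|V|$) is generous and is the cost one would pay if, to guarantee a cycle of length $\ge 5$ for families (\ref{inq:3}) and (\ref{inq:4}), one iterates the extraction over candidate witnesses or over vertices — at most $O(|V|)$ witnesses, each requiring $O(|V|)$ shortest-path computations of cost $O(|E'| + |V|\log|V|)$ (Dijkstra with a heap on the unit-weight graph, or BFS), giving the stated $O(|V|^3(|E'| + |V|\log|V|))$. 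Once the cycles are in hand, writing out the inequalities of each family is linear in their support, i.e. $O(|V|^2)$, which is again absorbed. I expect the only genuinely delicate point to be step (ii): cleanly turning the recognition algorithm's failure certificate into a guaranteed chordless cycle (and, where needed, one of length $\ge 5$) without inflating the complexity — this is standard but must be stated carefully, and it is the reason the bound is not tighter and the reason family~(\ref{inq:4}) is only conditionally guaranteed.
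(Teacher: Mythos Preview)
Your approach differs substantially from the paper's, and your complexity accounting is where the mismatch shows. The paper does not invoke Lex-BFS or MCS at all: it simply enumerates every ordered triple $(v,w,u)$ of vertices with $\{v,w\},\{w,u\}\in E'$ and $\{v,u\}\notin E'$, and for each such triple runs Dijkstra to find a shortest $v$--$u$ path avoiding $w$; the cycle $(p,w)$ is then (taken over all triples) a chordless cycle of $G'$ of length at least~4. This is exactly where the stated bound $O(|V|^3(|E'|+|V|\log|V|))$ comes from: $|V|^3$ triples times one Dijkstra each. Your route via a linear-time chordality recognizer is legitimate and in fact asymptotically cheaper, but it leaves you improvising to explain the $|V|^3$ factor (``iterate over candidate witnesses or over vertices'') in a way that never really adds up to the claimed product; you are retrofitting a bound that your method does not naturally produce.

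Two smaller points. First, your reading of the statement is off: the proposition does \emph{not} hedge on family~(\ref{inq:3}); it asserts a violated (\ref{ineq:ci})--(\ref{inq:3}) unconditionally and hedges only on (\ref{inq:4}). The paper's proof likewise claims (\ref{inq:3}) from any chordless cycle found, so your digression about $\ell=4$ versus $\ell\ge 5$ and ``searching briefly for a longer chordless cycle'' is not how the paper handles it (the paper simply asserts that a violated inequality of each of the first three types follows in linear time from the cycle). Second, once a chordless cycle $C$ of $G'$ is in hand, your derivation of the violated inequalities is essentially the same as the paper's implicit one: all interior edges of $C$ lie in $E^c$ with $x^*$-value $0$, and all edges of $F(C)$ have $x^*$-value $1$, so the left side of each inequality vanishes while the right side is positive. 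That part is fine.
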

\proof{Proof.}
An immediate consequence of Proposition \ref{prop:correctnessIPMCCP} is that some inequality (\ref{ineq:ci}) is violated if and only if $G'$ has a chordless cycle. Such a cycle can be identified according to the following procedure: For each triple of vertices $(v,w,u)$ in~$V$ such that  $\{\{v,w\},\{w,u\}\} \subseteq E'$ and $\{v,u\} \not \in E'$, find the shortest path $p$ between $v$ and $u$ that does not traverse $w$. If such a path exists, cycle $C := (p, w)$ is chordless 
and has length at least 4.
Given such a chordless cycle~$C$, one violated inequality of each type (\ref{ineq:ci})-(\ref{inq:3}) (and type~(\ref{inq:4}) as well if $|C| \ge 5$) can be derived in linear time in $|C|$. Since there are $|V|^3$ triples and the shortest path between $u$ and $v$ (that does not include $w$) can be found using Dijkstra's algorithm in time $O(|E'| + |V|\log|V|)$
\citep{Cormen2009}, the result follows. \Halmos
\endproof

The separation problem of (\ref{ineq:ci})-(\ref{inq:4}) over fractional points is,
however, much more challenging. We state the results below concerning this question.

\begin{theorem}
\label{thm:separation}
Given a fractional point $x^* \in [0,1]^{m^c}$:

$\hspace{2ex}$ a. The separation problem of (\ref{ineq:ci}) is NP-Complete. 

$\hspace{2ex}$ b. Inequalities (\ref{inq:2}) can be separated in $O(|V|^5)$.

$\hspace{2ex}$ c. Inequalities (\ref{inq:3}) can be separated in $O(|V|^8)$.

$\hspace{2ex}$ d. The separation problem of (\ref{inq:4}) is NP-Complete.

\end{theorem}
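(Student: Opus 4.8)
\textbf{Proof plan for Theorem~\ref{thm:separation}.}

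The plan is to treat the four parts separately, establishing the two polynomial separations by explicit enumeration-plus-shortest-path arguments and the two NP-completeness results by reduction from a suitable hard problem. For part (b), the inequalities (\ref{inq:2}) are indexed by a single vertex $v_i$ (with $v_{i-1}$ and $v_{i+1}$ being its cycle-neighbors in the relevant substructure), but since we are now separating over a general fractional point and an arbitrary graph, I would reinterpret the family as being indexed by an \emph{ordered triple} $(u, v, w)$ of distinct vertices, with the inequality reading $x_{\{u,w\}} + \sum_{f:\,v\in f,\ \{u,w\}\cap f=\emptyset} x_f \ge 1$. There are $O(|V|^3)$ such triples, and for each the left-hand side is a fixed linear form that can be evaluated in $O(|V|)$ time; scanning all triples and reporting any with LHS value below $1$ gives $O(|V|^4)$, so the claimed $O(|V|^5)$ is comfortable (the extra factor presumably absorbs bookkeeping or a looser analysis). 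The only subtlety is arguing that this family, restricted to triples, is exactly what needs to be separated for the formulation; I would invoke the fact that (\ref{inq:2}) is derived from chordless-cycle structure and that every instance is captured by choosing the three consecutive vertices, so enumerating triples is exhaustive.

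For part (c), inequality (\ref{inq:3}) in its cycle-graph form sums $x_f$ over all distance-2 pairs along a cycle $C$ and bounds it below by $2$. To separate over a general graph and fractional $x^*$, the natural move is: a violated (\ref{inq:3})-type inequality corresponds to a cycle $C = (v_0, v_1, \ldots, v_{k-1})$ whose exterior edges are present in $G$ and for which $\sum_{i} x^*_{\{v_{i-1}, v_{i+1}\}} < 2$. I would look for such a cycle by fixing a short ``seed'' — since the distance-2 pairs are what carry weight, I would fix an ordered 4-tuple $(v_{k-1}, v_0, v_1, v_2)$ of consecutive vertices, pay the cost $x^*_{\{v_{k-1},v_1\}} + x^*_{\{v_0,v_2\}}$ for the two distance-2 chords straddling $v_0$ and $v_1$, and then complete the cycle from $v_2$ back to $v_{k-1}$ along a path in $G$ minimizing the sum of the remaining distance-2-chord weights. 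That last part is a shortest-path computation in an auxiliary graph whose vertices are \emph{edges} of $G$ (so that consecutive-pair weights become arc costs), giving $O(|V|^2)$ vertices and hence $O(|V|^4)$ per shortest-path call times $O(|V|^4)$ seeds — matching $O(|V|^8)$. The obstacle here, and the step I expect to require the most care, is making the auxiliary-graph construction actually model the objective correctly: the cost of a chord $\{v_{i-1}, v_{i+1}\}$ depends on two consecutive cycle steps, so the shortest-path state must remember the previous vertex, the weights must not be double-counted at the seed/closure interface, and one must ensure the resulting walk is a genuine (vertex-simple) cycle rather than a closed walk — handling simplicity may force a relaxation or an argument that the optimal closed walk can be taken simple.

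For parts (a) and (d), the plan is reduction from a known NP-hard problem, and the natural candidate is the problem of finding a long chordless (induced) cycle, or more directly a reduction in the spirit of those used for separating cycle-type inequalities; a clean route is to reduce from the NP-complete problem of deciding whether a graph has an induced cycle through a specified vertex of length at least a given bound, or from a covering/packing problem. For (a), given the coefficient structure of (\ref{ineq:ci}) — the $(|C|-3)$ multiplier on the $F(C)$ term — I would set $x^*$ to well-chosen rational values so that a (\ref{ineq:ci})-violation for sequence $C$ forces $C$ to be (essentially) a cycle with all interior variables near zero and many vertices, i.e.\ detecting violation is equivalent to finding a long chordless cycle, which is NP-hard. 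Part (d) is analogous: (\ref{inq:4}) is again indexed by a cycle plus two designated vertices, and by an appropriate choice of $x^*$ its separation reduces to the same long-chordless-cycle detection. Membership in NP is immediate in both cases: a violating sequence $C$ is a polynomial-size certificate whose violation is checkable in polynomial time. I expect the main obstacle to be pinning down the exact source problem and massaging the fractional point so that the arithmetic of the $(|C|-3)$ coefficient cleanly isolates ``long chordless cycle'' as the only way to violate the inequality; once that calibration is right, the reduction itself should be routine.
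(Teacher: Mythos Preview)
Your plan for part~(c) is essentially what the paper does: fix a short seed (the paper uses an ordered $4$-tuple), build an auxiliary digraph whose vertices are ordered pairs from $V$ so that arc weights encode the distance-$2$ chord cost plus the penalty $(1-x_f)$ for missing exterior edges, and close the cycle by a shortest-path call. Your worry about simplicity of the resulting walk is exactly the delicate point, and the paper dispatches it by a minimality argument on the shortest path; you have the right picture.

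For parts~(a) and~(d) your route is genuinely different from the paper's. The paper does \emph{not} reduce from long-induced-cycle detection; instead it introduces an intermediate problem, the $\alpha$-Quadratic Shortest Cycle Problem, proves it NP-complete by a fairly elaborate reduction from Quadratic Assignment, and then reduces that to the separation of~(\ref{ineq:ci}) (and a variant to~(\ref{inq:4})). Your idea --- take $x^*$ constant equal to some $\alpha$ and observe that a chordless cycle $C$ violates~(\ref{ineq:ci}) iff $\alpha\cdot\tfrac{|C|(|C|-3)}{2}<|C|-3$, i.e.\ iff $|C|>2/\alpha$ --- is more direct and, for~(a) at least, appears to work cleanly once you invoke NP-hardness of detecting a chordless cycle of length~$\geq k$. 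What the paper's heavier machinery buys is a self-contained hardness proof that does not rest on that (somewhat less standard) source problem, and a template that transfers with minor modification to~(\ref{inq:4}); your sketch for~(d) leaves the calibration of the two excluded edges unaddressed.

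Part~(b), however, has a real gap. The family~(\ref{inq:2}) is \emph{not} indexed merely by a triple: for a general graph the relevant inequality is the lift of~(\ref{inq:2}) via Theorem~\ref{thm:FacetsforInducedSubcycles}, and it is parameterized by an entire sequence $C$ through the triple $(v_{i-1},v_i,v_{i+1})$. The left-hand side sums $x_{\{v_i,t\}}$ only over $t\in V(C)$, and the right-hand side carries the $F(C)$ penalty term, so two different cycles through the same triple give two different inequalities. Your ``reinterpreted'' inequality, summing over all $t\in V\setminus\{u,v,w\}$, is a strictly weaker statement (larger left-hand side) and is not even valid in general --- take $G$ equal to $K_4$ minus one edge and pick the triple around the missing edge. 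The paper's separation therefore requires, for each triple, finding the \emph{best} cycle through it, and this is done by a shortest-path computation in a complete weighted auxiliary graph on $|V|$ nodes with carefully designed edge weights that combine the $x_{\{v_i,\cdot\}}$ contributions with the $(1-x_f)$ penalties for missing exterior edges. That shortest-path call, at $O(|V|^2)$ per triple over $O(|V|^3)$ triples, is exactly where the $O(|V|^5)$ bound comes from; it is not slack in a triple-enumeration argument.
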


\begin{proof}{Proof.} Due to space limitations, we present below only proof sketches for these results. The full version of each proof is presented in Section \ref{ec:separationProofsAlgorithms} of the online supplement.

$\hspace{2ex}$ \textbf{a.}  The proof reduces the \emph{quadratic assignment problem} (QSCP), a classical and well-studied NP-hard problem, to the \emph{$\alpha$-quadratic shortest cycle problem} ($\alpha$-QSCP), introduced in this paper. In the QSCP, we are given a graph $G = (V, E)$ and a quadratic cost function $q:V \times V \rightarrow [0,1]$, with $q(u,v) = 0$ if $(u,v) \in E$. A feasible solution of QSCP is a simple chordless cycle $C = (v_1,v_2,\ldots, v_{|C|})$  whose cost is $p(C) = \sum_{\{u,v\} \in E(G[C])^C}q(u,v)   -|C|$. The $\alpha$-QSCP is  the decision version of QSCP  in which the goal is to decide whether~$G$ has  a simple chordless cycle~$C$ such that $p(C) < \alpha$.  We employ a reduction of the quadratic assignment problem to $(-3)$-QSCP that resembles the ones used by~\cite{rostami2015} for the quadratic shortest path problem. Finally, the $-3$-QSCP is reduced to the problem of separating the inequality (\ref{ineq:ci}), completing the proof.

$\hspace{2ex}$ \textbf{b.} An auxiliary graph $G'$, a complete digraph on $|V(G)|$ nodes, is constructed for which the separation problem is reduced to finding, for every triple of vertices $(v_1,v_2,v_3)$, the shortest path from $v_1$ to $v_3$ that does not include $v_2$.   The number of sequences for which this verification needs to be performed is $O(|V(G)|^3)$, and the identification of such a path can be made in time $O(|V(G)|)^2$. 

$\hspace{2ex}$ \textbf{c.} As in \textbf{b}, An auxiliary graph $G'$, specifically a complete digraph on $|V(G)|^2$ nodes, is constructed for which the separation problem is reduced to finding at most ~$O(|V(G)|^4)$ shortest paths, each of which can be performed in polynomial time. 

$\hspace{2ex}$ \textbf{d.} The proof is similar to that of \textbf{a}, except that we use a reduction from $-4$-QSCP*, a slight variant of $-3$-QSCP. $\Halmos$
 
\end{proof}

\subsection{Heuristic separation algorithms}

In view of Proposition \ref{prop:lazycuts} and Theorem \ref{thm:separation}, we tackle model (\ref{ipMCCP}) by applying a typical branch-and-bound procedure that alternates between heuristic separation and lazy-constraint generation.

For the lazy generation part, at every integer node of the branching tree we apply the procedure presented in the proof of Proposition \ref{prop:lazycuts} to separate at least one violated inequality (\ref{ineq:ci})-(\ref{inq:4}), similar to a
combinatorial Benders methodology \citep{Codato2006}. 
Propositions \ref{prop:correctnessIPMCCP} and \ref{prop:lazycuts} 
ensures that this approach 
yields an (feasible and) optimal solution to (\ref{ipMCCP}), since a violated inequality is not found if and only if the resulting graph is chordal.

Nonetheless, adding violated inequalities only at integer points typically yield weak bounds at intermediate nodes of the branching tree. Since a complete separation
of fractional points is not viable due to Theorem \ref{thm:separation}, we consider a heuristic \textit{threshold} procedure. Namely, 
given a point $x^* \in [0,1]^{m^c}$ and a threshold $\delta \in (0,1)$, let
\[
E^\delta(x) := \left\{ f \in \filledges: x_f \ge \delta \right\}.
\]
We can use the procedure from Proposition \ref{prop:lazycuts} to find violating inequalities for the graph $G+E^\delta(x^*)$. Such inequalities may not be necessarily violated by $x^*$, and require thus a (simple) extra verification testing step. Even though the threshold policy does not guarantee that at least one violated inequality is found, it can be performed efficiently and, as our numerical experiments indicate, it is a fundamental component for the good performance of the proposed solution technique. 

\subsection{Primal Heuristic}
\label{sec:PrimalHeuristic}

We have also incorporated a primal heuristic to be applied at infeasible integer
nodes of the branching tree. The method is based on the state-of-the-art heuristic for the problem, designed by \cite{GeoLiu1989}. Specifically, the vertices of the graph
are sorted in ascending order according to their 
degree, thereby defining a sequence $S = (v_1, v_2, \dots, v_{|V|})$. The vertices are then picked one at a time, in the order indicated by~$S$. For each vertex $v_i$, edges are added to $G$ so that $S$ defines
a \textit{perfect elimination ordering}, i.e., $v_i$ and its neighbours on 
set $\{v_{i+1}, v_{i+2}, \dots, v_{|V|}\}$ induce a clique, which makes $G$ chordal.
This procedure has complexity $O(|V|^2\,|E|)$.

For any integer point $x \in \{0,1\}^{m^c}$ found during the branch-and-bound procedure, if
$G' := G+E(x)$ is not chordal, we can apply \cite{GeoLiu1989}'s heuristic in order to chordalize~$G'$ and obtain a
feasible solution to the problem.
The application of this procedure at the root node ensures we can identify solutions which are at least as good as those provided by the heuristic.

\section{Numerical Experiments}
\label{sec:numericalExperiments}

In this section we present an experimental evaluation of the solution methods introduced in this paper.  
The experiments ran on an Intel(R) Xeon(R) CPU E5-2640 v3 at 2.60GHz
with 128 GB RAM. We used the integer programming solver IBM
ILOG CPLEX 12.6.3 \citep{CPLEXRef} in all experiments, with a time-limit of 3,600 seconds and one thread.

\subsection{Instances}
\label{sec:Instance}

Four family of instances were used for the experimental evaluation: \emph{relaxed caveman graphs}, \emph{grid graphs}, \emph{queen graphs}, and \emph{DIMACS} graphs. They are described as follows.

Relaxed caveman graphs \citep{Judd2011} represent typical social networks, where small pockets of individuals are tightly connected and have sporadic connections to other groups. This family of instances has been employed  previously in the evaluation of algorithms for combinatorial optimization problems \citep{BerCir2016}.  Each instance is generated randomly based on three parameters, $\alpha, \beta \in \mathbb{Z}^+$ and $\gamma \in (0,1)$.  Starting from a set of $\beta$ disjoint cliques of size $\alpha$, each edge is examined and, with probability $\gamma$, one of its endpoints is switched to a vertex belonging to another clique; all operations are made uniformly at random.  An example of a relaxed caveman graph is depicted in Figure~\ref{fig:caveman}, where $\alpha = \beta = 6$ and $\gamma = 0.2$. 
\begin{figure}[h!]
\centering
\includegraphics[scale=0.40]{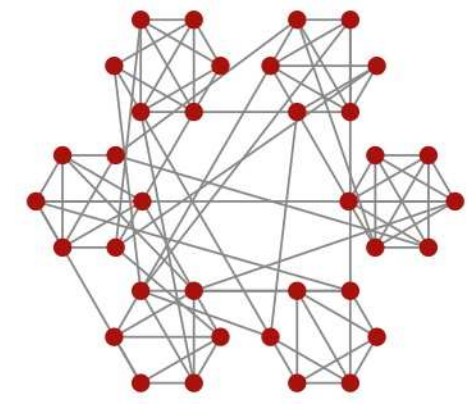}
\caption{Relaxed caveman graph. Picture from Judd et al \cite{Judd2011}.}
\label{fig:caveman}
\end{figure}

The structure of relaxed caveman graphs is particularly useful for evaluating algorithms for chordal completions. Namely, the modifications in the edges lead to large chordless cycles, enforcing thus the inclusion of several edges in chordal completions. For our experiments, ten instances of each possible configuration involving $\alpha,\beta \in \{4,5,6,7,8\}$ and $\gamma = 0.30$  were generated.  This set of graphs will be henceforth denoted simply by caveman instances.

The next set of instances, \emph{grid graphs}, correspond to graphs whose node set can be partitioned into a set of $r$ \textit{rows} $R_1,\dots,R_{r}$ and $c$ \textit{columns} $C_1,\dots,C_{c}$. Each vertex is denoted by $v_{r,c}$ if $v_{r,c} \in R_r \cap C_c$.  Vertices $v_{i,j}$ and $v_{k,l}$ are adjacent if and only if either $i = k$ and $|j - l| = 1$, or  $j = l$ and $|i - k| = 1$.  Note that grid graphs also contain large chordless cycles. 



We also used \emph{queen graphs} for our experiments. The queen graphs are extensions of grid graphs with additional edges representing longer hops as well as diagonal movements.  More precisely, there exists an edge connecting $v_{i,j}$ to $v_{i',j'}$ if and only if one of the three following conditions is satisfied for some $k$: (1) $i = i' \pm k$ and $j = j' \pm k$, (2) $i = i'$ and $j = j' \pm k$, or (3) $i = i' \pm k$ and $j = j'$.  The configurations of grid graphs and of queen graphs used in our experiments are equivalent to those used by \cite{Yuceoglu2015}.

The final set of instances consists of the classical DIMACS graph coloring instances, which can be downloaded from \href{http://dimacs.rutgers.edu/Challenges/}{http://dimacs.rutgers.edu/Challenges/}. These instances are frequently used in computational evaluations of graph algorithms.  

\subsection{Other Approaches}
\label{sec:OtherAlgos}

The state-of-the-art approaches to the MCCP reported in the literature are a branch-and-cut approach by \cite{Yuceoglu2015} and a Benders decomposition approach by \cite{BerRaghu15}, henceforth denoted by \texttt{YUC} and \texttt{BEN}, respectively.  

\texttt{YUC} is based on a perfect elimination ordering (PEO) model of the MCCP. The model finds a PEO that minimizes the number of fill-in edges, and can be written as follows.
\begin{align}
\min        \quad         &\sum_{(i,j) \in {E^c}} y_{ij} + y_{ji} \nonumber \\
\textnormal{s.t.}\quad    &x_{ij} + x_{ji} = 1, \quad &\textnormal{for all $\{i,j\}  \in {E}$} 
\label{peo1} \\
                          &x_{ij} + x_{ji} \le 1, \quad &\textnormal{for all $\{i,j\} \in {E^c}$} \label{peo2} \\
                          &x_{ij} + x_{jk} - x_{ik} \le 1, \quad &\textnormal{for all $i,j,k \in V, i \neq j, j \neq k, i \neq k$} \label{peo3} \\
                          &y_{ij} \le x_{ij}, y_{ji} \le x_{ji}, \quad &\textnormal{for all $i,j \in V, i \neq j$} \label{peo4} \\
                          &y_{ij} = x_{ij}, y_{ji} = x_{ji}, \quad &\textnormal{for all $\{i,j\} \in E$} \label{peo5} \\
                          &x_{jk} + y_{ij} + y_{ik} - y_{jk} \le 2, \quad &\textnormal{for all $i,j,k \in V, i \neq j, j \neq k, i \neq k$} \label{peo6} \\
                          &y_{ij}, x_{ij} \in \{0,1\}, \quad &\textnormal{for all $i,j \in V, i \neq j$} \label{peo7}
\end{align}
In the model above, denoted by \texttt{PEO}, a binary variable $y_{ij}$ indicates whether edge $\{i,j\}$ is added to  $G$ and binary variable $x_{ij}$ indicates whether vertex $i$ precedes $j$ in the resulting ordering. Constraints (\ref{peo1}) enforce the existence of a precedence relation between vertices $i$ and $j$ if $\{i,j\} \in E$, whereas constraints (\ref{peo2}) prevent $i$ and~$j$ from preceding each other  simultaneously in an elimination ordering. Constraints (\ref{peo3}) ensure the transitive closure of precedence relations is satisfied. Constraints (\ref{peo4}) and (\ref{peo5}) indicate that a precedence relation between edges $i$ and $j$ can exist if and only if $\{i,j\} \in E$. Finally, constraints (\ref{peo6}) impose that the final ordering must be a perfect elimination ordering. 

\texttt{YUC} employs a branch-and-cut approach that is based on a polyhedral analysis of the convex hull of solutions to \texttt{PEO}. Additionally, the algorithm also considers valid inequalities for special structured graphs, such as grid and queen graphs.

The other approach tested against is \texttt{BEN}, the precursor of the approach described in the present work.  In \cite{BerRaghu15}, a formulation consisting only of inequalities (\ref{ineq:ci}) is used in a pure Benders decomposition approach. That is, \texttt{BEN} solves to optimality an IP using the current set of inequalities (\ref{ineq:ci}) (i.e., the Benders cuts) found up to the current iteration.  If the solution contains no chordless cycles, its optimality is proven and the procedure stops. Otherwise, a collection of chordless cycles is found and new Benders cuts (i.e., inequalities (\ref{ineq:ci}) violated by the current solution) are added to the model, and the procedure repeats. 

Also of interest is to compare our approach with state-of-the-art heuristics in terms of solution quality, assessing thus how significant the differences between exact and heuristic solutions are. We consider the state-of-the-art heuristic developed 
by \cite{GeoLiu1989} (described in Section \ref{sec:PrimalHeuristic}), which will be henceforth denoted by \texttt{MDO}.

The methodology proposed in this paper will be henceforth denoted by \texttt{BC}, as
it can also be classified as a \textit{branch-and-cut} algorithm.

\subsection{Algorithmic Enhancements}
\label{sec:AlgorithmicEnhancements}

In the first set of experiments, we test the following algorithmic enhancements to \texttt{BC}: (1) using  only inequalities (\ref{ineq:ci}) versus using all inequalities (\ref{ineq:ci})-(\ref{inq:4}); (2) 
separating the inequalities only at integer solutions or at each search-tree node; and (3) 
invoking \texttt{MDO} as a primal heuristic. In particular, \texttt{BC-Base} is an implementation of \texttt{BC} where only inequality (\ref{ineq:ci}) is considered, the separation algorithm is invoked only at integer search-tree nodes, and no primal heuristic is applied.  \texttt{BC-Enh} is an implementation of \texttt{BC} where all enhancements are applied. The caveman graphs are used for this evaluation.

In order to verify whether an individual enhancement leads to a statistically significant reduction in the solution times, a two-sample paired t-test was employed. Specifically, the null hypothesis indicates whether the solutions times are  equivalent with or without the enhancements. 
Solution times can differ by orders of magnitude across the 250 instances of caveman graphs 
(e.g., 0.001 seconds versus nearly 3,600 seconds), so all comparisons were made in logarithmic scale, i.e., we applied $\log(1+t)$ transformations to run times~$t$ (given in seconds).

For enhancement (1), (2), and (3) taken individually, the tests resulted in a p-value of 0.070, 0.00057, and 0.0013, respectively.  
This shows that each enhancement provides considerable reductions in run time, with the heuristic being
perhaps the most effective among them.
When comparing \texttt{BC-Base} versus \texttt{BC-Enh}, the test yields a p-value 0.0000067, showing strong statistical significance of the results indicating reductions on run times caused by the enhancements.

A plot comparing the solution times of \texttt{BC-Base} and \texttt{BC-Enh} is provided in Figure~\ref{fig:enhancements}.  Each point in the scatter plot of Figure~\ref{fig:enhancements}(a) corresponds to one instance, with the radii  indicating the sizes of the cliques and the color representing the number of cliques (dark red/blue corresponding to instances with smallest/largest number of cliques, respectively).  The $x$-axis is the run time in seconds in log-scale for \texttt{BC-Base} and the $y$-axis contains  the respective values for \texttt{BC-Enh}. Figure~\ref{fig:enhancements}(b) presents the cumulative distribution plot of performance of both algorithms, indicating in the $y$-axis how many instances were solved within the amount of time indicated in the $x$-axis. Both figures shows that \texttt{BC-Enh} typically outperforms \texttt{BC-Base}, which becomes more prominent with harder instances.

\begin{figure}[h!]
\centering
\begin{subfigure}{0.45\textwidth}
\centering
\includegraphics[height=26ex]{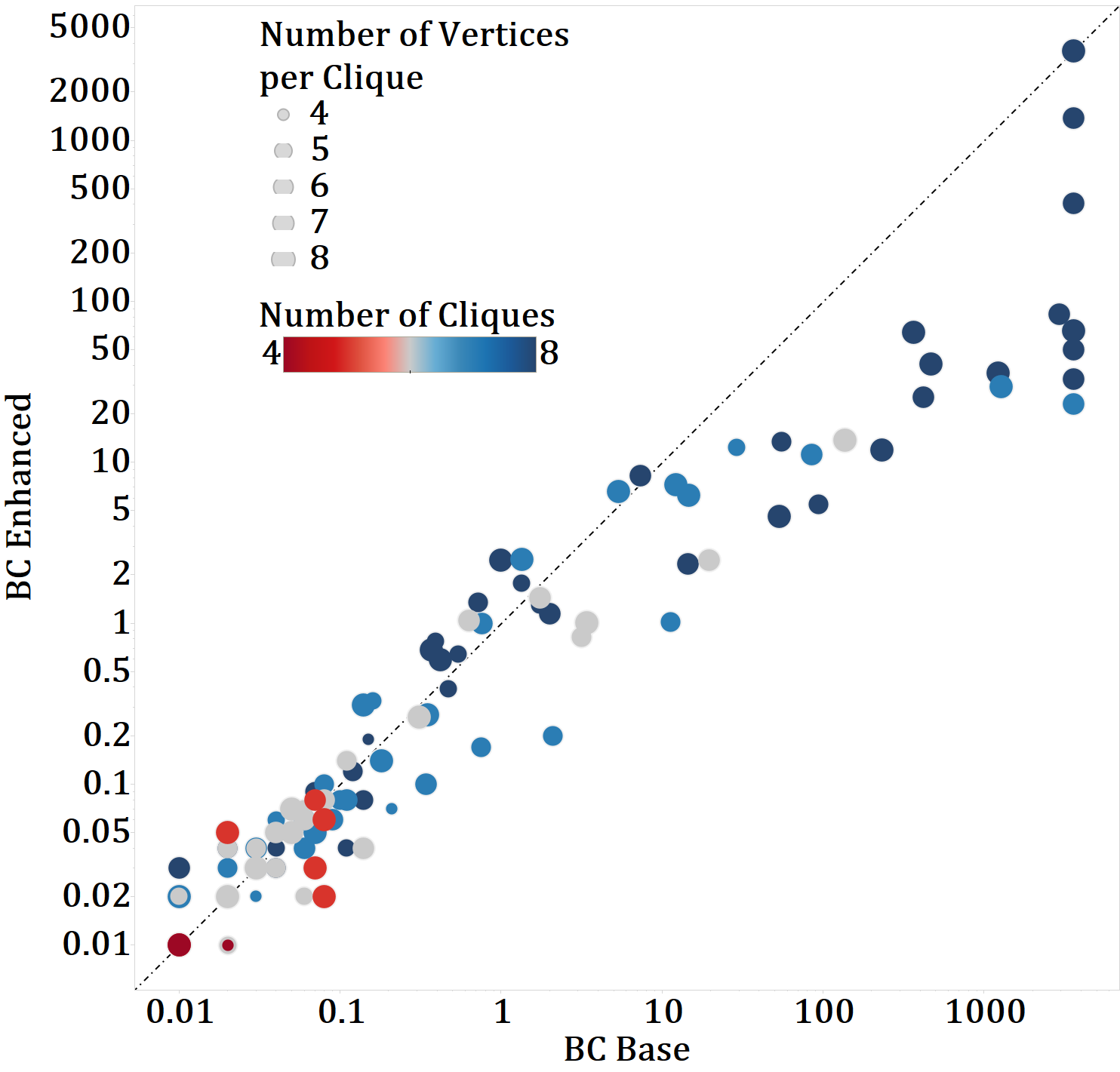}
\caption{Scatter plot}
\end{subfigure}
\hspace{5ex}
\begin{subfigure}{0.45\textwidth}
\centering
\includegraphics[height=26ex]{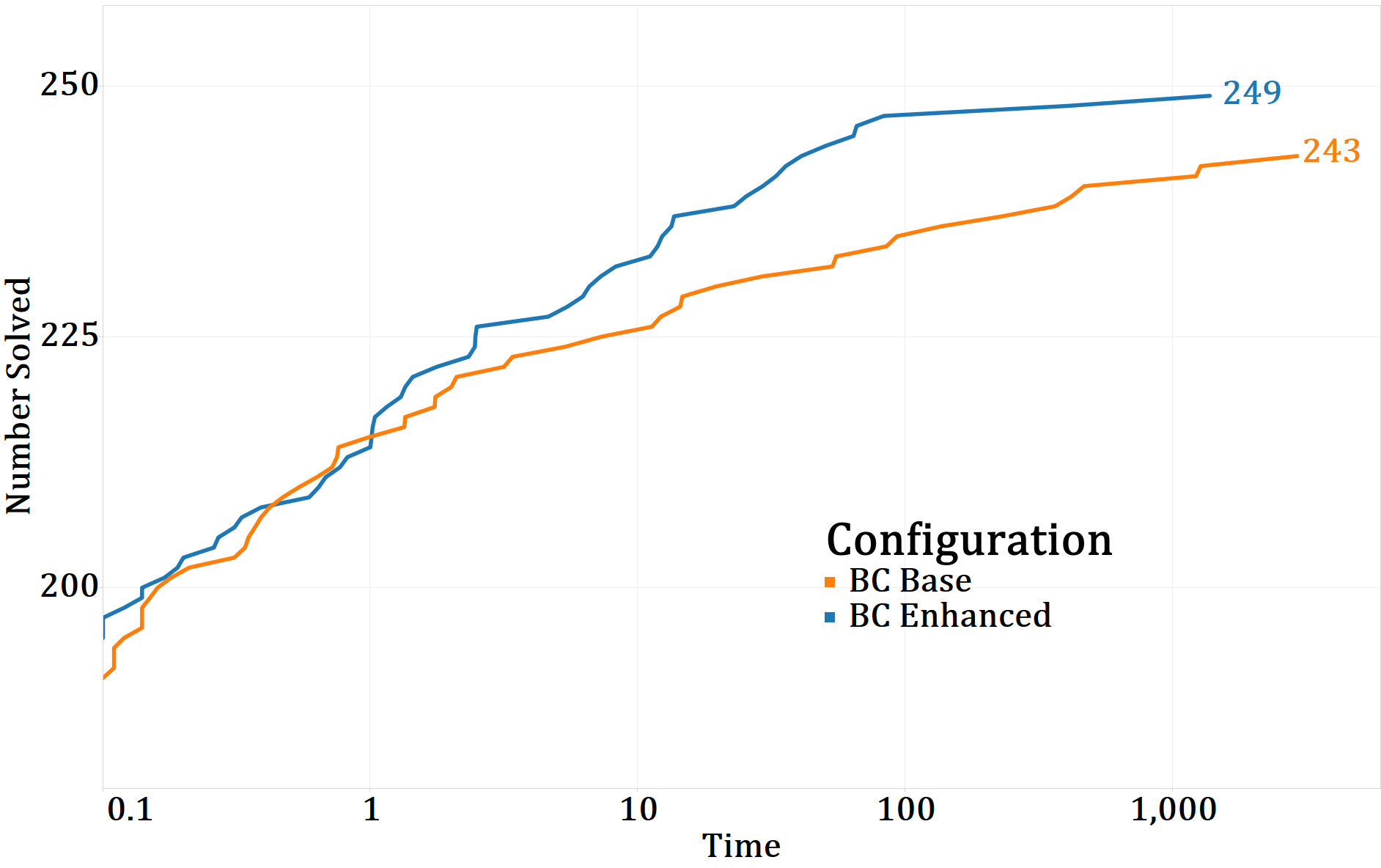}
\caption{Cumulative distribution plots of performance}
\end{subfigure}
\caption{Comparing \texttt{BC} with and without algorithmic enhancements on caveman graphs.  (a) Scatter plot comparing solution times, with the size of the dot corresponding to the number of vertices in each clique and the color (red to blue) the relative number of cliques. (b) Cumulative distribution plot of performance showing number of instances solved by a certain time limit. } 
\label{fig:enhancements}
\end{figure}

The algorithmic enhancements provide clear computational advantages, and so we use them for the remaining experiments, referring to \texttt{BC-Enh} simply as \texttt{BC}.

\subsection{Comparison with Heuristics} 
\label{sec:compHeur}

In the second set of experiments, we investigate the  improvements brought by   
\texttt{BC} upon the solutions obtained alone by \texttt{MDO}. Note that, since \texttt{BC} uses \texttt{MDO} at the root and throughout search, the solutions will always be at least as good as  those achieved by \texttt{MDO}.

Our first comparison, involving the   
relaxed caveman graphs, is presented in Table~\ref{table:cavemanHeur}.  For each configuration, the averages of the upper bounds provided by both \texttt{BC} and \texttt{MDO} are presented, as well as the average percentage decreases in the upper bound from \texttt{MDO} to \texttt{BC} and the average times to compute the chordal completion. As \texttt{MDO} achieves solutions in under a hundredth of a second in all cases, its running times are not reported.  These results readily show the advantage of seeking optimal solutions.  The heuristic can be far from the optimal solution (eventually by up to 70\%) and, on the relaxed caveman graphs, the running time of \texttt{BC} 
is almost always small (only one instance is not proven optimal within 3,600 seconds).

\begin{table}[h]
\begin{footnotesize}
\begin{center}
\caption{Comparison between \texttt{MDO} and \texttt{BC} on caveman graphs.} \label{table:cavemanHeur}
{\def\arraystretch{0.65}
\begin{tabular}{ C{10ex} C{10ex} C{15ex} C{15ex} C{15ex}  C{15ex} }
$\alpha$	&	$\beta$	&	\texttt{BC} UB	&	\texttt{MDO} UB	&	\% Dec	&	\texttt{BC} Time	\\ \hline \hline
4	&	4	&	0.9	&	1.4	&	15	&	0	\\
4	&	5	&	1.7	&	2.8	&	35.1	&	0	\\
4	&	6	&	5.3	&	7.4	&	42.8	&	0	\\
4	&	7	&	13.2	&	17.3	&	25.1	&	0	\\
4	&	8	&	19	&	26.3	&	29.3	&	0	\\ \hline
5	&	4	&	1.6	&	1.8	&	6.3	&	0	\\
5	&	5	&	1.7	&	4.4	&	51.8	&	0	\\
5	&	6	&	7.7	&	10.3	&	40.3	&	0	\\
5	&	7	&	15.5	&	21.7	&	35.7	&	1.3	\\
5	&	8	&	25.1	&	37.1	&	34.8	&	0.5	\\ \hline
6	&	4	&	0.6	&	1.7	&	32.1	&	0	\\
6	&	5	&	4.1	&	8.4	&	54.9	&	0	\\
6	&	6	&	12.2	&	16.9	&	35.4	&	0.1	\\
6	&	7	&	18.1	&	24.2	&	25.3	&	0.1	\\
6	&	8	&	28.9	&	43	&	35.8	&	2.0	\\ \hline
7	&	4	&	1.1	&	2.5	&	50.8	&	0	\\
7	&	5	&	3.6	&	6.5	&	42.9	&	0	\\
7	&	6	&	17.1	&	23.6	&	35.8	&	0.5	\\
7	&	7	&	24.8	&	35.4	&	37.6	&	3.5	\\
7	&	8	&	55.9	&	74.1	&	27.6	&	198.9	\\ \hline
8	&	4	&	0.3	&	3.3	&	70.8	&	0	\\
8	&	5	&	5.5	&	9.9	&	58.7	&	0	\\
8	&	6	&	14.6	&	23.3	&	39.9	&	1.5	\\
8	&	7	&	29.7	&	45	&	37.6	&	5.2	\\
8	&	8	&	52.4	&	69.8	&	26.2	&	382.2	\\
\end{tabular}
}
\end{center}
\end{footnotesize}
\end{table}

The same data for grid graphs, queen graphs, and DIMACS graphs are presented in Tables~\ref{table:grid}, \ref{table:queen}, and~\ref{table:dimacs}, respectively.
These tables first report the graph characteristics, including number of vertices and number of edges, and for all algorithms the resulting lower bounds, upper bounds, and solution times (in seconds if solved to optimality in 3,600s, or a mark "-" otherwise), respectively. Bold face for the upper bound indicates that the algorithm found the best-known solution for the instance.
In general, \texttt{BC} can deliver better solutions than \texttt{MDO} but requires much more time in harder instances, showing therefore the trade-off in computational time and solution quality.

\subsection{Comparison with Other Techniques}
	\label{sec:yucComparison}

This section provides a comparison of \texttt{BC} with \texttt{YUC} and \texttt{BEN}. For these evaluations, 
we employed all instances reported upon in \cite{Yuceoglu2015} and \cite{BerRaghu15}
and compared the solution times and objective function bounds obtained  by all solution methods. The reported numbers for \texttt{YUC} were obtained directly from \cite{Yuceoglu2015}, who uses IBM ILOG CPLEX 12.2 and a processor with similar clock (2.53 GHz), but runs the parallel version of the solver with 4 cores, and not with 1 core, as we do in this work. 

First, we report on grid graphs, which were used in the computational results of \cite{Yuceoglu2015} for~\texttt{YUC}. This approach enhances the \texttt{PEO} formulation with cuts tailored for graphs containing grid structures, so these instances are particularly well-suited for \texttt{YUC}. 
The results are presented in Table~\ref{table:grid}. \texttt{BC} typically finds the best-known solutions, and only in 4 cases out of 22 the relaxation bound  for \texttt{YUC} outperforms that of \texttt{BC}. 

\begin{table}[h]
\begin{footnotesize}
\begin{center}
\caption{Results for grid graphs.  Note: Graph grid5\_5 was not included in the results reported on in \cite{Yuceoglu2015}. }\label{table:grid}
{\def\arraystretch{0.65}
\begin{tabular}{ C{9ex} C{9ex} C{9ex} C{1ex} C{9ex} C{9ex} C{9ex} C{1ex} C{9ex} C{9ex} C{9ex} C{1ex} C{9ex} }
 & Instance & & & & \texttt{YUC} & & & & \texttt{BC}  & &  & \texttt{MDO} \\
 name & $|V|$ & $|E|$ & & $LB$ & $UB$ & $t$ & &  $LB$ & $UB$ & $t$ && $UB$ \\
 \cmidrule{1-3} \cmidrule{5-7} \cmidrule{9-11} \cmidrule{13-13} 
  grid3\_3  	&	9	&	12	&	&	5	&	\textbf{	5	}	&	0.01	&	&	5	&	\textbf{	5	}	&	0	&&	5	\\
  grid3\_4 	&	12	&	17	&	&	9	&	\textbf{	9	}	&	0	&	&	9	&	\textbf{	9	}	&	0.01	&&	9	\\
  grid3\_5 	&	15	&	22	&	&	13	&	\textbf{	13	}	&	0.02	&	&	13	&	\textbf{	13	}	&	0.07	&&	13	\\
  grid3\_6 	&	18	&	27	&	&	17	&	\textbf{	17	}	&	0.02	&	&	17	&	\textbf{	17	}	&	0.17	&&	17	\\
  grid3\_7 	&	21	&	32	&	&	21	&	\textbf{	21	}	&	0.01	&	&	21	&	\textbf{	21	}	&	0.22	&&	21	\\
  grid3\_8 	&	24	&	37	&	&	25	&	\textbf{	25	}	&	0.02	&	&	25	&	\textbf{	25	}	&	1.39	&&	25	\\
  grid3\_9 	&	27	&	42	&	&	29	&	\textbf{	29	}	&	0.02	&	&	29	&	\textbf{	29	}			&	9.13	&&	33	\\
  grid3\_10 	&	30	&	47	&	&	33	&	\textbf{	33	}	&	0.03	&	&	33	&	\textbf{	33	}	&	20.39	&&	37	\\
  grid4\_4 	&	16	&	24	&	&	18	&	\textbf{	18	}	&	1.23	&	&	18	&	\textbf{	18	}	&	2.33	&&	18	\\
  grid4\_5 	&	20	&	31	&	&	25	&	\textbf{	25	}	&	18.11	&	&	25	&	\textbf{	25	}	&	8.35	&&	25	\\
  grid4\_6 	&	24	&	38	&	&	32.2	&	\textbf{	34	}	&	-	&	&	34	&	\textbf{	34	}	&	216.71	&&	34	\\
  grid4\_7 	&	28	&	45	&	&	39	&	\textbf{	41	}	&	-	&	&	41	&	\textbf{	41	}	&	304.85	&&	41	\\
  grid4\_8 	&	32	&	52	&	&	45.5	&		52		&	-	&	&	48.2	&	\textbf{	50	}	&	-	&&	50	\\
  grid4\_9 	&	36	&	59	&	&	52.5	&		58		&	-	&	&	54.2	&	\textbf{	57	}	&	-	&&	57	\\
  grid4\_10  	&	40	&	66	&	&	59.3	&		66		&	-	&	&	59.1	&		66		& -	&&	66	\\
  grid5\_5	&	25	&	40	&	&	*	&		*		&	*	&	&	37	&		37		&	115.98	&&	37	\\
  grid5\_6 	&	30	&	49	&	&	46.2	&		53		&	-	&	&	48.7	&	\textbf{	50	}	&	-	&&	52	\\
  grid5\_7 	&	35	&	58	&	&	56.9	&		65		&	-	&	&	56.5	&	\textbf{	62	}	&	-	&&	68	\\
  grid5\_8 	&	40	&	67	&	&	67.5	&		77		&	-	&	&	65.2	&	\textbf{	75	}	&	-	&&	80	\\
  grid5\_9  	&	45	&	76	&	&	33.3	&		90		&	-	&	&	73.2	&	\textbf{	89	}	&	-	&&	93	\\
  grid6\_6 	&	36	&	60	&	&	60.9	&		77		&	-	&	&	59.2	&	\textbf{	69	}	&	-	&&	71	\\
  grid6\_7 	&	42	&	71	&	&	31	&		94		&	-	&	&	69.1	&	\textbf{	88	}	&	-	&&	92	\\
  grid7\_7 	&	49	&	84	&	&	37	&		125		&	-	&	&	80.3	&	\textbf{	112	}	&	-	&&	119	\\
\end{tabular}
}
\end{center}
\end{footnotesize}
\end{table}

\begin{table}[h]
\begin{footnotesize}
\begin{center}
\caption{Results for queen graphs. } \label{table:queen}
{\def\arraystretch{0.65}
\begin{tabular}{ C{12ex} C{9ex} C{9ex} C{1ex} C{9ex} C{9ex} C{9ex} C{1ex} C{9ex} C{9ex} C{9ex} C{1ex} C{9ex} }
 & Instance & & & & \texttt{YUC} & & & & \texttt{BC}  & &  & \texttt{MDO} \\
 name & $|V|$ & $|E|$ & & $LB$ & $UB$ & $t$ & &  $LB$ & $UB$ & $t$ && $UB$ \\
 \cmidrule{1-3} \cmidrule{5-7} \cmidrule{9-11} \cmidrule{13-13} 
queen3\_3	&	9	&	28	&&	5	&	\textbf{	5	}	&	0	&&	5	&	\textbf{	5	}	&	0	&&	5	\\
queen3\_4	&	12	&	46	&&	12	&	\textbf{	12	}	&	0.01	&&	12	&	\textbf{	12	}	&	0.01	&&	12	\\
queen3\_5	&	15	&	67	&&	22	&	\textbf{	22	}	&	0.31	&&	22	&	\textbf{	22	}	&	0.03	&&	22	\\
queen3\_6	&	18	&	91	&&	36	&	\textbf{	36	}	&	1.03	&&	36	&	\textbf{	36	}	&	0.25	&&	36	\\
queen3\_7	&	21	&	118	&&	53	&	\textbf{	53	}	&	2.17	&&	53	&	\textbf{	53	}	&	0.91	&&	53	\\
queen3\_8	&	24	&	148	&&	74	&	\textbf{	74	}	&	8.49	&&	74	&	\textbf{	74	}	&	2.27	&&	74	\\
queen3\_9	&	27	&	181	&&	98	&	\textbf{	98	}	&	15.77	&&	98	&	\textbf{	98	}	&	4.97	&&	98	\\
queen3\_10	&	30	&	217	&&	126	&	\textbf{	126	}	&	65.91	&&	126	&	\textbf{	126	}	&	22.29	&&	126	\\
queen4\_4	&	16	&	76	&&	26	&	\textbf{	26	}	&	0.19	&&	26	&	\textbf{	26	}	&	0.03	&&	28	\\
queen4\_5	&	20	&	110	&&	51	&	\textbf{	51	}	&	4.54	&&	51	&	\textbf{	51	}	&	0.75	&&	53	\\
queen4\_6	&	24	&	148	&&	83	&	\textbf{	83	}	&	16.54	&&	83	&	\textbf{	83	}	&	6.68	&&	83	\\
queen4\_7	&	28	&	190	&&	119	&	\textbf{	119	}	&	68.22	&&	119	&	\textbf{	119	}	&	34.5	&&	121	\\
queen4\_8	&	32	&	236	&&	164	&	\textbf{	164	}	&	636.28	&&	164	&	\textbf{	164	}	&	445.78	&&	167	\\
queen4\_9	&	36	&	286	&&	209.8	&	\textbf{	217	}	&	-	&&	211.4	&	\textbf{	217	}	&	-	&&	222	\\
queen4\_10	&	40	&	340	&&	255.5	&	\textbf{	278	}	&	-	&&	259.7	&	\textbf{	278	}	&	-	&&	286	\\
queen5\_5	&	25	&	160	&&	93	&	\textbf{	93	}	&	41.03	&&	93	&	\textbf{	93	}	&	14.02	&&	94	\\
queen5\_6	&	30	&	215	&&	144	&	\textbf{	144	}	&	185.81	&&	144	&	\textbf{	144	}	&	186.93	&&	154	\\
queen5\_7	&	35	&	275	&&	203.1	&	\textbf{	214	}	&	-	&&	204.2	&	\textbf{	214	}	&	-	&&	223	\\
queen5\_8	&	40	&	340	&&	265.8	&	\textbf{	293	}	&	-	&&	265.5	&	\textbf{	293	}	&	-	&&	306	\\
queen5\_9	&	45	&	410	&&	339.8	&		393		&	-	&&	338.8	&	\textbf{	386	}	&	-	&&	398	\\
queen5\_10	&	50	&	485	&&	424.9	&		501		&	-	&&	424	&	\textbf{	492	}	&	-	&&	503	\\
queen6\_6	&	36	&	290	&&	214.9	&		232		&	-	&&	218.1	&	\textbf{	231	}	&	-	&&	244	\\
queen6\_7	&	42	&	371	&&	299.2	&		351		&	-	&&	296.4	&	\textbf{	338	}	&	-	&&	352	\\
queen6\_8	&	48	&	458	&&	400.7	&		481		&	-	&&	396.5	&	\textbf{	461	}	&	-	&&	482	\\
queen6\_9	&	54	&	551	&&	521.4	&		622		&	-	&&	514.3	&	\textbf{	619	}	&	-	&&	633	\\
queen6\_10	&	60	&	650	&&	656.7	&	\textbf{	786	}	&	-	&&	646.6	&		787		&	-	&&	826	\\
queen7\_7	&	49	&	476	&&	423.7	&		520		&	-	&&	422.3	&	\textbf{	495	}	&	-	&&	515	\\
queen7\_8	&	56	&	588	&&	577.6	&		710		&	-	&&	567.2	&	\textbf{	680	}	&	-	&&	687	\\
queen7\_9	&	63	&	707	&&	751.8	&		935		&	-	&&	736.8	&	\textbf{	897	}	&	-	&&	919	\\
queen7\_10	&	70	&	833	&&	948.5	&		1177		&	-	&&	926.6	&	\textbf{	1141	}	&	-	&&	1149	\\
queen8\_8	&	64	&	728	&&	782.1	&		965		&	-	&&	766.9	&	\textbf{	939	}	&	-	&&	970
\end{tabular}
}
\end{center}
\end{footnotesize}
\end{table}

Next, Table~\ref{table:queen} reports on queen graphs. 
As previously mentioned, these instances are also  well-suited to \texttt{YUC} 
because of their grid-like structures.  Nonetheless, the results show that \texttt{BC} typically outperforms \texttt{YUC} both in terms of optimality gap and solution time. In particular, for almost all instances, the obtained solution is at least as good as the one found by \texttt{YUC}. In the only exception, the solution obtained by \texttt{BC} contains only one fill edge more than \texttt{YUC}.  

Finally, Table~\ref{table:dimacs} reports on DIMACS graphs.  The 12 instances above the double horizontal line are those reported on in \cite{Yuceoglu2015}, whereas the others are the remaining graphs in the benchmark set with  fewer than 150 vertices. Our results show that instances of the first group are solved orders of magnitude faster by \texttt{BC} and, for those in which \texttt{YUC} was not able to prove optimality,  better objective function bounds are obtained. In particular, \texttt{BC} was able to close entirely the optimality gap of four instances of this dataset that were still open: \texttt{david}, \texttt{miles250}, \texttt{miles750}, and \texttt{myciel5}.
These results can be explained by the fact that DIMACS graphs do not necessarily have grid-like structures, which makes them more challenging  for \texttt{YUC}. 
For the remaining instances, 9 are solved to optimality and for many of the other instances, the best solutions obtained by \texttt{BC} employed substantially fewer fill edges than those obtained by the traditional heuristic \texttt{MDO}.

\begin{table}[h!]
\begin{footnotesize}
\begin{center}
\caption{Results for dimacs graphs.} \label{table:dimacs}
{\def\arraystretch{0.70}
\begin{tabular}{ C{15ex} C{7ex} C{7ex} C{1ex} C{9ex} C{9ex} C{9ex} C{1ex} C{9ex} C{9ex} C{9ex} C{1ex} C{9ex} }
 & Instance & & & & \texttt{YUC} & & & & \texttt{BC}  & &  & \texttt{MDO} \\
 name & $|V|$ & $|E|$ & & $LB$ & $UB$ & $t$ & &  $LB$ & $UB$ & $t$ && $UB$ \\
 \cmidrule{1-3} \cmidrule{5-7} \cmidrule{9-11} \cmidrule{13-13} 
anna	&	138	&	493	&&	47	&	\textbf{	47	}	&	1386.04	&&	47		&	\textbf{	47		}	&	1.02	&&	47	\\
david	&	87	&	406	&&	59.5	&		65		&	-	&&	64		&	\textbf{	64		}	&	0.4	&&	66	\\
games120	&	120	&	638	&&	496.4	&		1626		&	-	&&	886.7		&	\textbf{	1503		}	&	-	&&	1513	\\
huck	&	74	&	301	&&	5	&	\textbf{	5	}	&	2.92	&&	5		&	\textbf{	5		}	&	0.04	&&	9	\\
jean	&	80	&	254	&&	16	&	\textbf{	16	}	&	6.13	&&	16		&	\textbf{	16		}	&	0.09	&&	19	\\
miles250	&	128	&	387	&&	45.7	&		61		&	-	&&	53		&	\textbf{	53		}	&	0.4	&&	61	\\
miles500	&	128	&	1170	&&	196.4	&		447		&	-	&&	327.487		&	\textbf{	376		}	&	-	&&	446	\\
miles750	&	128	&	2113	&&	352.1	&		954		&	-	&&	471		&	\textbf{	471		}	&	537.65	&&	723	\\
myciel3	&	11	&	20	&&	10	&	\textbf{	10	}	&	0	&&	10		&	\textbf{	10		}	&	0	&&	10	\\
myciel4	&	23	&	71	&&	46	&	\textbf{	46	}	&	0.06	&&	46		&	\textbf{	46		}	&	0.03	&&	46	\\
myciel5	&	47	&	236	&&	189.7	&		197		&	-	&&	196		&	\textbf{	196		}	&	28.93	&&	197	\\ \hline \hline
1-FullIns\_3	&	30	&	100	&&		&				&		&&	80		&	\textbf{	80		}	&	2.42	&&	80	\\
1-FullIns\_4	&	93	&	593	&&		&				&		&&	657.9		&	\textbf{	785		}	&	-	&&	839	\\
1-Insertions\_4	&	67	&	232	&&		&				&		&&	303.6		&	\textbf{	365		}	&	-	&&	394	\\
2-FullIns\_3	&	52	&	201	&&		&				&		&&	230.4		&	\textbf{	248		}	&	-	&&	273	\\
2-Insertions\_3	&	37	&	72	&&		&				&		&&	85.1		&	\textbf{	99		}	&	-	&&	103	\\
2-Insertions\_4	&	149	&	541	&&		&				&		&&	659.3		&	\textbf{	1585		}	&	-	&&	1588	\\
3-FullIns\_3	&	80	&	346	&&		&				&		&&	407.1		&	\textbf{	577		}	&	-	&&	661	\\
3-Insertions\_3	&	56	&	110	&&		&				&		&&	118.4		&	\textbf{	192		}	&	-	&&	198	\\
4-FullIns\_3	&	114	&	541	&&		&				&		&&	691.8		&	\textbf{	1094		}	&	-	&&	1274	\\
4-Insertions\_3	&	79	&	156	&&		&				&		&&	155.8		&	\textbf{	330		}	&	-	&&	331	\\
DSJC125.1	&	125	&	736	&&		&				&		&&	1752.3		&	\textbf{	2618		}	&	-	&&	2618	\\
DSJC125.5	&	125	&	3891	&&		&				&		&&	2381.7		&	\textbf{	3240		}	&	-	&&	3240	\\
DSJC125.9	&	125	&	6961	&&		&				&		&&	600.6		&	\textbf{	734		}	&	-	&&	734	\\
miles1000	&	128	&	3216	&&		&				&		&&	535		&	\textbf{	535		}	&	331.2	&&	700	\\
miles1500	&	128	&	5198	&&		&				&		&&	218		&	\textbf{	218		}	&	1.65	&&	308	\\
mug100\_1	&	100	&	166	&&		&				&		&&	64		&	\textbf{	64		}	&	0.3	&&	91	\\
mug100\_25	&	100	&	166	&&		&				&		&&	64		&	\textbf{	64		}	&	0.51	&&	93	\\
mug88\_1	&	88	&	146	&&		&				&		&&	56		&	\textbf{	56		}	&	0.22	&&	82	\\
mug88\_25	&	88	&	146	&&		&				&		&&	56		&	\textbf{	56		}	&	0.49	&&	84	\\
myciel6	&	95	&	755	&&		&				&		&&	741.3		&	\textbf{	753		}	&	-	&&	753	\\
r125.1	&	125	&	209	&&		&				&		&&	11		&	\textbf{	11		}	&	0.17	&&	15	\\
r125.1c	&	125	&	7501	&&		&				&		&&	207		&	\textbf{	207		}	&	26.83	&&	207	\\
r125.5	&	125	&	3838	&&		&				&		&&	895.4		&	\textbf{	1231		}	&	-	&&	1231	
\end{tabular}
}
\end{center}
\end{footnotesize}
\end{table}


We conclude this section by comparing our results with those presented in 
\cite{BerRaghu15}.  With the exception of some queen instances, \texttt{BC} always provides better solutions and objective bounds than \texttt{BEN}. In the exceptional cases, the bounds provided by \texttt{BEN} were slightly better. Note also that \texttt{BEN} does not provide any feasible solution until the algorithm terminates. 

\subsection{Cuts Found}
\label{sec:CutsFounds}

This section provides an analysis of the types of cuts found by \texttt{BC} during the solution process across all experiments. Figure~\ref{fig:distributionCuts}~(a) shows an area plot depicting the distribution of the number of inequalities of each type that was identified and added to the model in \texttt{BC}. We present only the 88 instances for which at least 10,000 cuts were added, where all graph classes were considered, and the instances are ordered by total number of cuts found.
This plot readily shows that most of the cuts added were of type (\ref{inq:2}) and (\ref{inq:4}).

Figure~\ref{fig:distributionCuts}~(b) shows an area plot depicting a similar comparison, but between those cuts added at integer nodes and those added by the threshold separation procedure from \S \ref{sec:solutionMethod}.  For the majority of instances, the cuts are predominantely found through threshold cuts.  The far right portion of the plot corresponds to relatively large instances, hence only a few branching nodes were explored.

\begin{figure}[h!]
\centering
\begin{subfigure}{0.4\textwidth}
\centering
\includegraphics[height=30ex]{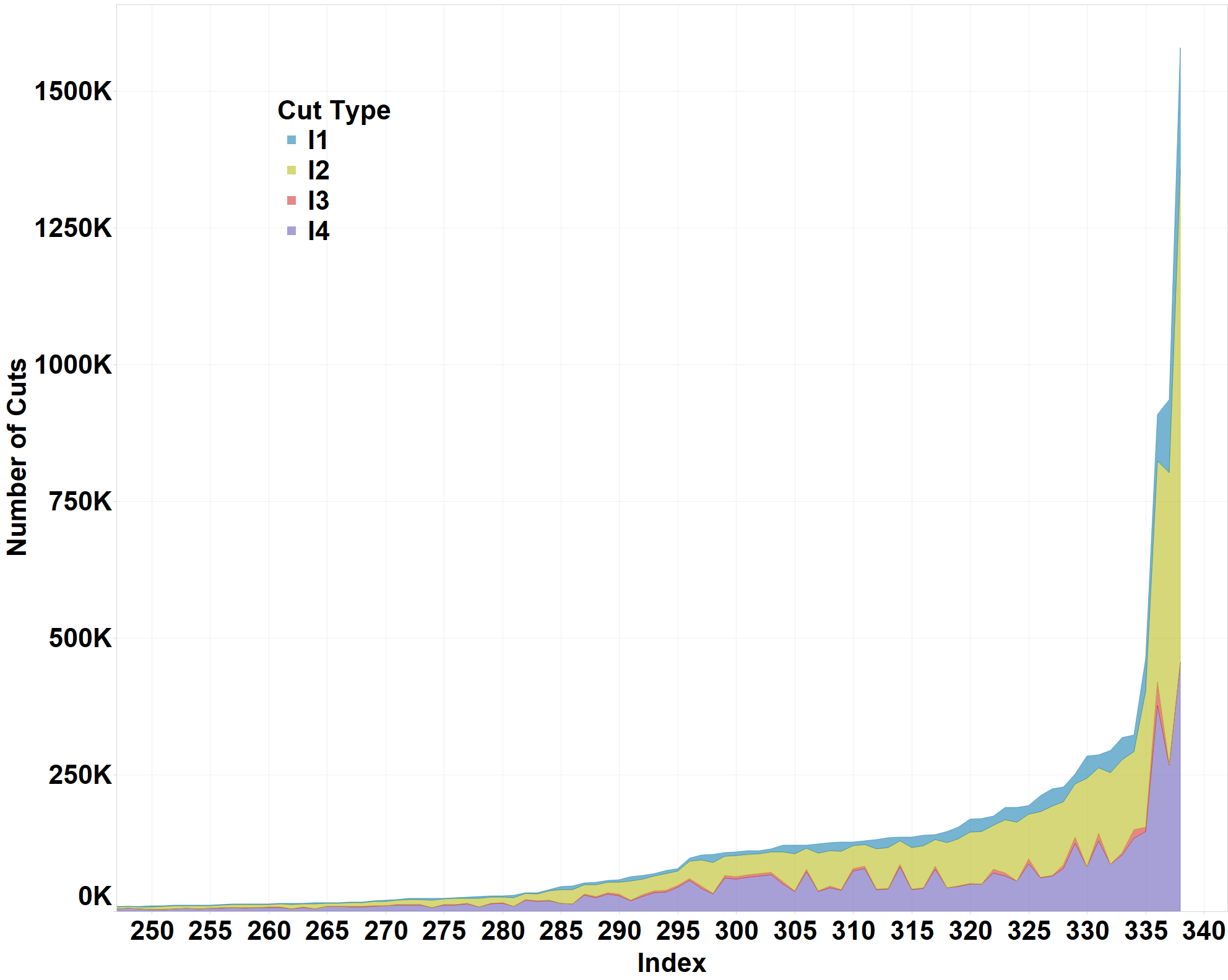}
\caption{Comparing counts of inequalities}
\end{subfigure}
\hspace{5ex}
\begin{subfigure}{0.4\textwidth}
\centering
\includegraphics[height=30ex]{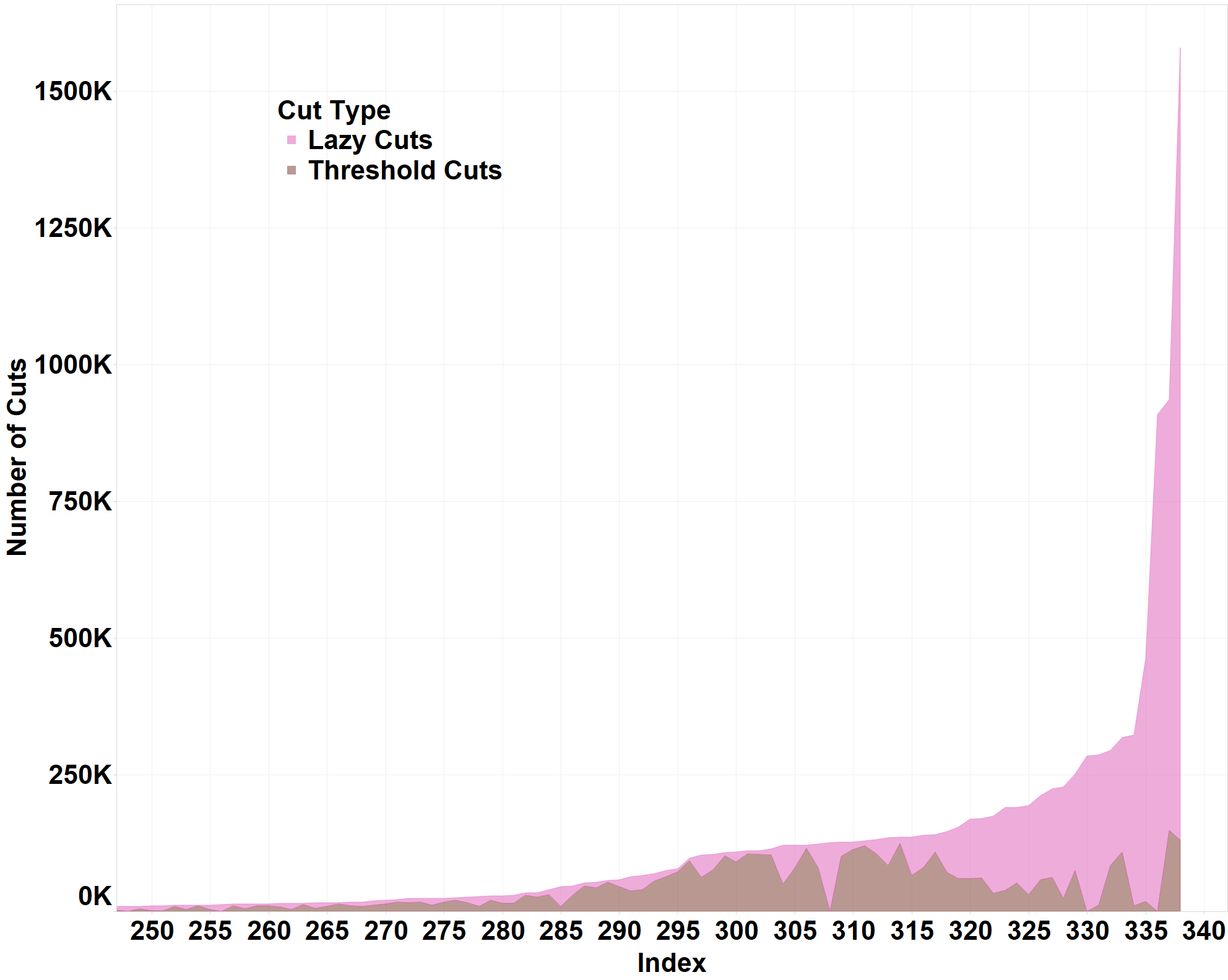}
\caption{Comparing mode of cut identification}
\end{subfigure}
\caption{Area plots displaying types of cuts found and added by \texttt{BC} across all instances.  } 
\label{fig:distributionCuts}
\end{figure}

\section{Conclusion}

In this paper we described a new mathematical programming formulation for the MCCP
and investigated some key properties of its polytope. The constraints employed in  our model correspond to lifted inequalities of induced cycle graphs, and our theoretical results show that this lifting procedure can be generalized to derive other facets of the MCCP polytope of cycle graphs. Finally, we proposed a hybrid solution technique that considers both a lazy-constraint generation and a heuristic separation method based on a threshold rounding, and also presented a simple primal heuristic for the problem. A numerical study indicates that our approach substantially outperforms existing methods, often by orders of magnitude, and, in particular, solves many benchmark graphs to optimality for the first time.



%
%
%


\bibliographystyle{informs2014} 


\ECSwitch


\ECHead{Online Supplement - Proofs of Statements}

\section{Additional Proofs for Section~\ref{sec:cycleGraphFacets}}
\label{sec:facetProofsAdditionalInequalities}

\proof{Facet-defining proof of Proposition \ref{prop:inq1}.}
Let $F^I = \left\{ x \in X(G) : \sum_{ f \: \in \: \interior(C)} x_f = |C|-3\right\}$ and $\mu x \geq \mu_0$ be a valid inequality for $\mathrm{conv}(X(G))$ which is satisfied at equality by each $x \in F^I$. It suffices to show that there exists some $\lambda$ for which $\mu_f = \lambda$ and $\mu_0 = (|C|-3)\lambda$.

Let $x' \in \{0,1\}^{m^c}$ be defined by
\[
x'_f = 
\left\{
\begin{array}{ll}
1, &  \quad f = \{v_0,v_j\}, j = 2, \ldots, k-2; \\
0, &  \quad \mbox{otherwise}.
\end{array}
\right.
\]

\begin{claim}
$x' \in X(G)$, i.e., $G(x')$ is chordal, and $x' \in F^I$.
\end{claim}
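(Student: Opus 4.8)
The plan is to establish the two parts of the claim in turn: first that $G(x')$ is chordal, then that $x'$ satisfies $\sum_{f\in\interior(C)} x'_f = |C|-3$. The vector $x'$ adds precisely the ``fan'' of chords $\{v_0,v_j\}$ for $j=2,\dots,k-2$, so $G(x')$ is the cycle $C$ together with every chord incident to $v_0$; all of these chords lie in $\interior(C)$ and are genuine fill edges of $G$, so $x'$ is well-defined. For chordality I would exhibit a perfect elimination ordering and invoke the Fulkerson--Gross characterization cited earlier in the paper. Take the ordering $v_1,v_2,\dots,v_{k-2},v_{k-1},v_0$. When $v_i$ with $1\le i\le k-2$ is eliminated, its not-yet-removed neighbours are exactly $v_0$ and $v_{i+1}$ (the only neighbours of $v_i$ in $G(x')$ are $v_{i-1}$, $v_{i+1}$, and $v_0$), and $\{v_0,v_{i+1}\}$ is an edge of $G(x')$: it is one of the added chords when $i\le k-3$, and the exterior edge $\{v_{k-1},v_0\}$ when $i=k-2$. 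Hence each such $v_i$ is simplicial when its turn comes; after these eliminations only $v_{k-1}$ and $v_0$ remain, and $\{v_{k-1},v_0\}\in E$ is a clique. So $G(x')$ is chordal, i.e., $x'\in X(G)$.

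For the second part I would just count: the coordinates of $x'$ equal to $1$ correspond to the edges $\{v_0,v_j\}$, $j=2,\dots,k-2$, of which there are $k-3$, and all lie in $\interior(C)$; since $x'_f=0$ otherwise, $\sum_{f\in\interior(C)}x'_f = k-3 = |C|-3$. Combined with $x'\in X(G)$, this yields $x'\in F^I$, as claimed.

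The only nontrivial point — and hence the main obstacle — is the chordality verification; the rest is bookkeeping. Within that argument the delicate part is making sure the elimination ordering closes uniformly, in particular the boundary case $i=k-2$, where $\{v_0,v_{i+1}\}$ is not an added chord but is already an exterior edge, and the degenerate case $k=4$, where after one elimination only a triangle remains. If that case analysis becomes cumbersome I would instead use the fact that adding a universal vertex preserves chordality: $G(x')-v_0$ is the path $(v_1,\dots,v_{k-1})$, which is a forest and hence chordal, and $v_0$ is adjacent to every other vertex of $G(x')$, so $G(x')$ is chordal.
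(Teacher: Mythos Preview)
Your proof is correct and follows essentially the same approach as the paper: exhibit a perfect elimination ordering for $G(x')$ and then count the $k-3$ added chords. The paper verifies instead that in the ordering $v_0,v_1,\dots,v_{k-1}$ each $v_j$ has $\{v_0,v_{j-1}\}$ as its neighbours among the earlier vertices, but this is the same idea with the roles of ``earlier'' and ``later'' swapped; your alternative universal-vertex argument is also sound and arguably cleaner.
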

\proof{Proof.}
For every $j \in [2,k-1]$, let $\bar{V}_j = \{v_0,v_1,\ldots,v_{j}\}$.
By construction, set~$N_{G \left[\bar{V}_{j} \right]}(v_{j}) = \{v_0,v_{j-1}\}$ induces a clique in $G[\bar{V}_{j}]$.  Therefore, $v_0,v_1,\ldots,\allowbreak v_{k-1}$ is a perfect elimination ordering of~$V(G(x'))$, thereby proving that $G(x')$ is chordal.  Additionally, since exactly $|C|-3$ edges in $\interior(C)$ are in $G(x'), x' \in F^I$.$\Halmos$

Consider now the solutions~$\tilde{x}^i \in \{0,1\}^{m^c}$ for $i = 3, \ldots, k-1$, defined by
\[
\tilde{x}^i_f = 
\left\{
\begin{array}{ll}
1, &  \quad f = \{v_1,v_j\}, j = 3, \ldots, i; \\
1, &  \quad f = \{v_0,v_j\}, j = i,i+1, \ldots, k-2; \\
0, &  \quad \mbox{otherwise.}
\end{array}
\right.
\]
Figure~\ref{fig:inq2proof}~(a) provides a depiction of $G(\tilde{x}^i)$.
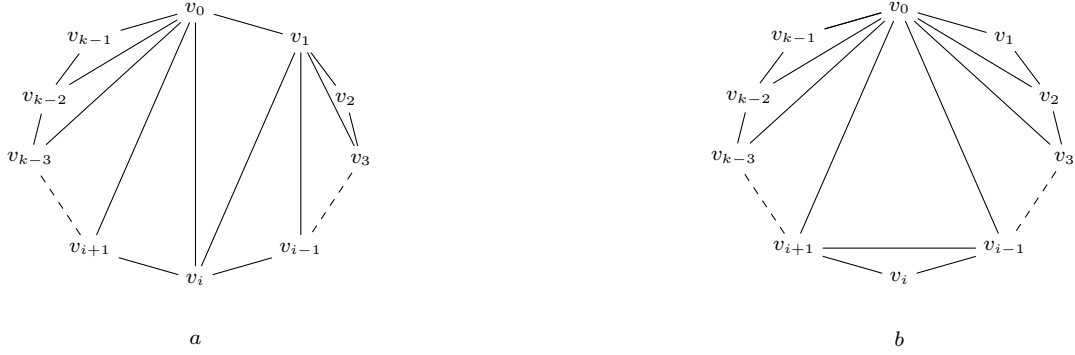
\begin{figure}
\begin{minipage}{0.4\textwidth}
\begin{tikzpicture}[scale=0.4][font=\sffamily,\scriptsize]

\node (0) at (0,0) {$v_0$};
\node (1) at (3.5,-1) {$v_1$};
\node (2) at (5,-3) {$v_2$};
\node (3) at (5.5,-5) {$v_3$};
\node (i-1) at (3.5,-8) {$v_{i-1}$};
\node (i)  at (0,-9) {$v_{i}$};
\node (i+1) at (-3.5,-8) {$v_{i+1}$};
\node (k-1) at (-3.5,-1) {$v_{k-1}$};
\node (k-2) at (-5,-3) {$v_{k-2}$};
\node (k-3) at (-5.5,-5) {$v_{k-3}$};

\node (a)  at (0,-11) {$a$};


\path[-](0) edge (1);
\path[-](1) edge (2);
\path[-](2) edge (3);
\path[dashed](3) edge (i-1);
\path[-](i-1) edge (i);
\path[-](i+1) edge (i);
\path[dashed](i+1) edge (k-3);
\path[-](0) edge (k-1);
\path[-](k-1) edge (k-2);
\path[-](k-2) edge (k-3);

\path[-](3) edge (1);
\path[-](i-1) edge (1);
\path[-](i) edge (1);

\path[-](i) edge (0);

\path[-](k-2) edge (0);
\path[-](k-3) edge (0);
\path[-](i+1) edge (0);
\end{tikzpicture}
\end{minipage}
\hspace{15ex}
\begin{minipage}{0.4\textwidth}
\begin{tikzpicture}[scale=0.4][font=\sffamily,\scriptsize]

\node (0) at (0,0) {$v_0$};
\node (1) at (3.5,-1) {$v_1$};
\node (2) at (5,-3) {$v_2$};
\node (3) at (5.5,-5) {$v_3$};
\node (i-1) at (3.5,-8) {$v_{i-1}$};
\node (i)  at (0,-9) {$v_{i}$};
\node (i+1) at (-3.5,-8) {$v_{i+1}$};
\node (k-1) at (-3.5,-1) {$v_{k-1}$};
\node (k-2) at (-5,-3) {$v_{k-2}$};
\node (k-3) at (-5.5,-5) {$v_{k-3}$};

\node (b)  at (0,-11) {$b$};


\path[-](0) edge (1);
\path[-](1) edge (2);
\path[-](2) edge (3);
\path[dashed](3) edge (i-1);
\path[-](i-1) edge (i);
\path[-](i+1) edge (i);
\path[dashed](i+1) edge (k-3);
\path[-](0) edge (k-1);
\path[-](k-1) edge (k-2);
\path[-](k-2) edge (k-3);

\path[-](0) edge (2);
\path[-](0) edge (3);
\path[-](0) edge (i-1);
\path[-](0) edge (i+1);
\path[-](0) edge (k-3);
\path[-](0) edge (k-2);
\path[-](0) edge (k-1);

\path[-](i-1) edge (i+1);

\end{tikzpicture}
\end{minipage}

\caption{The graph (a) $G(\tilde{x}^i)$ and the graph (b) $G(\hat{x}^i)$ defined in the proof of Proposition~\ref{prop:inq1}.}
\label{fig:inq2proof}
\end{figure}

\begin{claim}
Each $\tilde{x}^i \in X(G)$, i.e., each $G(\tilde{x}^i)$ is chordal, and $\tilde{x}^i \in F^I$.
\end{claim}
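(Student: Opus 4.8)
The plan is to prove the two assertions of the claim separately. The membership $\tilde{x}^i\in F^I$ is really a counting statement: $\tilde{x}^i$ has $1$'s precisely on the $i-2$ chords $\{v_1,v_j\}$, $j=3,\dots,i$, and on the $k-1-i$ chords $\{v_0,v_j\}$, $j=i,\dots,k-2$. These two families of edges are disjoint, and every member of either family lies in $\interior(C)$ (the cycle-neighbours of $v_1$ are $v_0,v_2$ and those of $v_0$ are $v_1,v_{k-1}$, none of which is ever hit), so $\sum_{f\in\interior(C)}\tilde{x}^i_f=(i-2)+(k-1-i)=k-3=|C|-3$. Hence $\tilde{x}^i\in F^I$ the moment we know $\tilde{x}^i\in X(G)$, i.e.\ that $G(\tilde{x}^i)$ is chordal.

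To show $G(\tilde{x}^i)$ is chordal I would use the same device as in the proof of the previous claim, namely exhibit a perfect elimination ordering. The conceptual point that motivates the ordering is that the $k-3$ added chords are pairwise non-crossing and hence triangulate the polygon $C$ into a ``$v_1$-fan'' over $v_1,v_2,\dots,v_i$, the triangle $v_0v_1v_i$, and a ``$v_0$-fan'' over $v_0,v_i,v_{i+1},\dots,v_{k-1}$, as drawn in Figure~\ref{fig:inq2proof}(a). Peeling ears of this triangulation suggests eliminating the vertices in the order
\[
\sigma=(v_2,v_3,\dots,v_{i-1},\;v_{k-1},v_{k-2},\dots,v_{i+1},\;v_i,\;v_1,\;v_0),
\]
and I would verify that $\sigma$ is a PEO by three checks. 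When $v_j$ with $3\le j\le i-1$ is eliminated, the vertices $v_2,\dots,v_{j-1}$ are already gone, so its surviving neighbourhood is $\{v_1,v_{j+1}\}$, which is a clique since $\{v_1,v_{j+1}\}$ is one of the added chords ($j+1\le i$). When $v_j$ with $i+1\le j\le k-1$ is eliminated, $v_{k-1},\dots,v_{j+1}$ are already gone, so its surviving neighbourhood is $\{v_{j-1},v_0\}$, a clique because $\{v_0,v_{j-1}\}$ is an edge (it is a chord when $i\le j-1\le k-2$, and here $j-1\ge i$). Finally the last three vertices induce exactly the triangle $v_0v_1v_i$. Since at every step the neighbours of the eliminated vertex form a clique, $\sigma$ is a perfect elimination ordering and $G(\tilde{x}^i)$ is chordal, which gives $\tilde{x}^i\in X(G)$ and completes the proof.

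I expect the only delicate part to be the extreme values of $i$. For $i=3$ the $v_1$-fan degenerates to the single triangle $v_1v_2v_3$ and the first block of $\sigma$ collapses to the single vertex $v_2$; for $i=k-2$ the $v_0$-fan is a single triangle; and for $i=k-1$ the $v_0$-fan is empty, $v_0$ receives no added chord, and the second block of $\sigma$ is empty. In each of these degenerate configurations the three neighbourhood computations above still go through once one carefully tracks which vertices have already been removed, but they deserve to be written out explicitly. The rest of the argument is routine.
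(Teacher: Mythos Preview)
Your proposal is correct and follows essentially the same approach as the paper: both exhibit a perfect elimination ordering and then count the chords to get $|C|-3$. The paper uses the ordering $(v_1,\dots,v_i,v_0,v_{i+1},\dots,v_{k-1})$ with the ``backward'' convention (each vertex is simplicial among its predecessors), whereas you use $(v_2,\dots,v_{i-1},v_{k-1},\dots,v_{i+1},v_i,v_1,v_0)$ with the standard elimination convention; the two orderings are different but the verification is the same triangle-fan argument in both cases. One small point: your case analysis for the first block is stated for $3\le j\le i-1$ and so omits $v_2$, but the same computation (surviving neighbourhood $\{v_1,v_3\}$, which is a chord since $3\le i$) covers it.
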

\proof{Proof.}
For every $j \in [2,k-1]$, let $\bar{V}_j$ be the set of vertices belonging to the subsequence of $(v_1,\ldots,v_i,v_0,v_{i+1},\ldots,v_{k-1})$ finishing at element~$v_j$. By construction, we have
\[
N_{G\left[\bar{V}_j\right]}(v_j) = 
\left\{
\begin{array}{ll}
\left\{v_1,v_{i}\right\}, &  \quad j = 0; \\
\left\{v_1\right\}, &  \quad j = 2; \\
\left\{v_1,v_{j-1}\right\}, &  \quad 3 \leq j \leq i; \\
\left\{v_0,v_{j-1}\right\}, &  \quad i+1 \leq j \leq k-1,
\end{array}
\right.
\]
which in each case is a clique.  Therefore, $v_1,\ldots,v_i,v_0,v_{i+1},\ldots,v_{k-1}$  is a perfect elimination ordering of~$V(G(\tilde{x}^i))$, showing thus that~$G(\tilde{x}^i)$ is chordal. Moreover, exactly $|C|-3$ edges of~$\interior(C)$ are in $G(\tilde{x}^i)$, so $\tilde{x}^i \in F^I$.$\Halmos$

Let $\lambda_2 = \mu_{\{v_0,v_2\}}$. Solutions $\tilde{x}^i$ and $x'$ belong to~$F_I$, so $\mu \tilde{x}^i = \mu x' = \mu_0$.  By subtracting equation $\mu \tilde{x}^3 = \mu_0$ from $\mu x' = \mu_0$, we obtain $\mu_{\{v_0,v_2\}} = \mu_{\{v_1,v_3\}}$. Additionally, shift operations on the order of the vertices (to the left or to the right) lead to the same cycle~$C$.
Therefore, $\coef{v_0}{v_2} = \coef{v_j}{v_{(j + 2) \mod k}}$ for any $j \in [k-1]$, implying thus that $\coef{v_0}{v_2} = \lambda_2$ for every $f \in \interior(C)$ containing vertices whose indices in~$C$ differ by 2.

The same operation involving $\tilde{x}^{i-1}$ and $\tilde{x}^i$ for $i = 4, \ldots, k-1$ yields $\coef{v_0}{v_{i-1}} = \coef{v_1}{v_{i}}$.  Again, as the ordering around~$C$ can be arbitrarily shifted to the left and to the right, all edges $f \in \interior(C)$ containing vertices whose indices in $C$ differ by $i-1$ have the same coefficient in $\mu$; let $\lambda_{i-1}$ be this common value.   We conclude thus that $\coef{v_j}{v_{j'}} = \lambda_{j' - j}$ for any $f = \{v_j,v_{j'}\}$ (assuming $j < j'$).


Consider now the solutions~$\hat{x}^i \in \{0,1\}^{m^c}$ for $i = 2, \ldots, k-2$, defined by
\[
\hat{x}^i_f = 
\left\{
\begin{array}{ll}
1, &  \quad f = \{v_0,v_j\}, j = 2, \ldots, i-1,i+1,\ldots,k-2 \\
1, &  \quad f = \{v_{i-1},v_{i+1}\} \\
0, &  \quad \mbox{otherwise}
\end{array}
\right.
\]
Figure~\ref{fig:inq2proof}~(b) provides a depiction of $G(\hat{x^i})$.

\begin{claim}
Each $\hat{x}^i \in X(G)$, i.e., each $G(\hat{x}^i)$ is chordal, and $\hat{x}^i \in F^I$.
\end{claim}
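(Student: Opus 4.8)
The plan is to follow the pattern of the two preceding claims essentially verbatim: exhibit a perfect elimination ordering of $V(G(\hat{x}^i))$ to get $\hat{x}^i \in X(G)$, and then count the edges of $\interior(C)$ that $\hat{x}^i$ uses to get the equality defining $F^I$. First I would record the shape of $G(\hat{x}^i)$. By construction $v_0$ is adjacent to every vertex of the cycle except $v_i$ — to $v_1$ and $v_{k-1}$ by exterior edges, and to each $v_j$ with $j\in\{2,\dots,k-2\}\setminus\{i\}$ by a chord — whereas $v_i$ has exactly the two exterior edges $\{v_{i-1},v_i\},\{v_i,v_{i+1}\}$ and no incident chord. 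Because the chord $\{v_{i-1},v_{i+1}\}$ is present, $N_{G(\hat{x}^i)}(v_i)=\{v_{i-1},v_{i+1}\}$ is a clique, so $v_i$ is simplicial; deleting it leaves a graph in which $v_0$ is universal and the remaining vertices form the path $v_1 v_2\cdots v_{i-1} v_{i+1}\cdots v_{k-1}$ (the bridge $v_{i-1}v_{i+1}$ supplied by that same chord), i.e.\ a fan graph, which is chordal. Hence $G(\hat{x}^i)$ is chordal.

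To stay in the style of the neighbouring claims I would instead exhibit the explicit ordering
\[
v_0,\;v_1,\;\dots,\;v_{i-1},\;v_{i+1},\;\dots,\;v_{k-1},\;v_i,
\]
and, letting $\bar V_j$ denote the set of vertices up to and including $v_j$ in this sequence, verify by a short case check that $N_{G[\bar V_j]}(v_j)$ is always a clique: it is empty for $v_0$; it is $\{v_0\}$ for $v_1$; for each intermediate vertex it is $\{v_0,u\}$ with $u$ the immediate predecessor in the ordering, and those two are joined by an exterior edge or by a chord; and for the last vertex $v_i$ it is $\{v_{i-1},v_{i+1}\}$, joined by the chord $\{v_{i-1},v_{i+1}\}$. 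Thus the displayed sequence is a perfect elimination ordering of $V(G(\hat{x}^i))$, so $\hat{x}^i\in X(G)$.

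For $\hat{x}^i\in F^I$ I would simply count: $\hat{x}^i$ sets to $1$ exactly the $k-4$ chords $\{v_0,v_j\}$ with $j\in\{2,\dots,k-2\}\setminus\{i\}$ together with the one chord $\{v_{i-1},v_{i+1}\}$, which is distinct from all of the former since $i-1\ge 1$ and $i+1\le k-1$; that is $(k-4)+1=k-3=|C|-3$ edges of $\interior(C)$, so $\sum_{f\in\interior(C)}\hat{x}^i_f=|C|-3$ and $\hat{x}^i\in F^I$.

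Since this is structurally identical to the two preceding claims, there is no conceptual obstacle; the only thing that needs care is the index bookkeeping in the case analysis, in particular the degenerate values $i=2$, $i=k-2$ and $k=4$, where some of the ranges above become empty and some edges that are chords for generic $i$ are actually exterior edges — one must check in each such case that every edge invoked in the argument genuinely lies in $G(\hat{x}^i)$.
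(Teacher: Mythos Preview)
Your proposal is correct and takes essentially the same approach as the paper: you use the very same perfect elimination ordering $(v_0,v_1,\dots,v_{i-1},v_{i+1},\dots,v_{k-1},v_i)$, verify by a short case analysis that each $N_{G[\bar V_j]}(v_j)$ is a clique, and then count the $|C|-3$ interior edges to obtain $\hat{x}^i\in F^I$. Your additional observation that $v_i$ is simplicial and that deleting it leaves a fan graph is a pleasant shortcut, but it is just a repackaging of the same PEO argument rather than a different route.
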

\proof{Proof.}
For every $j \in [2,k-1]$, let $\bar{V}_j$ be the set of vertices belonging to the subsequence of $(v_0,v_1,\ldots,v_{i-1},v_{i+1},\ldots, v_{k-1},v_i)$ finishing at element~$v_j$. By construction, we have
\[
N_{G\left[\bar{V}_j\right]}(v_j) = 
\left\{
\begin{array}{ll}
\left\{v_0\right\}, &  \quad j = 2; \\
\left\{v_0,v_{j-1}\right\}, &  \quad 1 \leq j \leq i-1;\\
\left\{v_{i-1},v_{i+1}\right\}, &  \quad j = i,
\end{array}
\right.
\]
which in each case is a clique. Consequently, $(v_0,v_1,\ldots,v_{i-1},v_{i+1},\ldots, v_{k-1},v_i)$ is a perfect elimination ordering, so $\hat{x}^i \in X(G)$. Finally, as $|C|-3$ edges from $\interior(C)$ are included in $G(\hat{x}^i), \hat{x}^i \in F^I$. $\Halmos$

By subtracting $\mu x' = \mu_0$ from $\mu \hat{x}^i = \mu_0$ for any $i = 2,\ldots,k-2$, we obtain $\coef{v_0}{v_i} = \coef{v_{i-1}}{v_{i+1}}$. Therefore, we have that $\lambda_i = \lambda_2$ for any $i = 2, \ldots, k-2$. If $\lambda = \lambda_2$, $\mu x = \mu_0$ can be rewritten $\sum_{f \in \interior(C)} \lambda x_f = \mu_0$.  Finally, substituting $x'$ in this equation yields $\mu_0 =  \left(|C|-3\right) \lambda$, as desired.
$\Halmos$
\endproof

\proof{Facet-defining proof of Proposition \ref{prop:inq2}.}


Let $I:= ax \geq b$ be the inequality of type (\ref{inq:2}) associated with $i = 1$, $F^I$ be the set of points in $\mathrm{conv}(X(G))$ that satisfy $I$ at equality, and $\mu x \geq \mu_0$ be a valid inequality for $\mathrm{conv}(X(G))$ satisfied at equality for each $x \in F^I$.  Let $\mu_{v_{0},v_2} = \lambda$.

\begin{claim}
$\forall i \in \{3,4,\ldots,k-1 \}, \mu_{\{ v_1,v_i \}} = \lambda.$
\end{claim}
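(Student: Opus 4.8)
The plan is to mimic the structure used in the facet-defining proof of Proposition~\ref{prop:inq1}: exhibit a collection of feasible chordal completions lying on the face $F^I$ and subtract their defining equations to force the coefficients $\mu_{\{v_1,v_i\}}$ to all equal the single value $\lambda = \mu_{\{v_0,v_2\}}$. Recall that $I$ is the inequality $x_{\{v_0,v_2\}} + \sum_{j=3}^{k-1} x_{\{v_1,v_j\}} \ge 1$, so a completion lies on $F^I$ precisely when exactly one of the edges $\{v_0,v_2\}, \{v_1,v_3\}, \ldots, \{v_1,v_{k-1}\}$ is used among that set (using zero is infeasible since $C$ is chordless, and we want equality so we use exactly one).

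First I would construct, for each $i \in \{3,\ldots,k-1\}$, a chordal completion $y^i$ that uses the single ``$I$-edge'' $\{v_1,v_i\}$ and then triangulates the two chordless cycles that this chord splits $C$ into — namely $(v_1,v_2,\ldots,v_i)$ and $(v_1,v_i,v_{i+1},\ldots,v_{k-1},v_0)$ — by fanning from $v_1$ in each piece (i.e.\ adding $\{v_1,v_\ell\}$ for the appropriate ranges of $\ell$). Wait: fanning from $v_1$ in the second piece would add more edges of the form $\{v_1,v_\ell\}$, which is fine, but one must be careful that only $\{v_1,v_i\}$ lies in the $I$-set while the others $\{v_1,v_{i+1}\},\ldots$ are not in the $I$-set — but they \emph{are}, since the $I$-set for $i=1$ is all of $\{\{v_0,v_2\}\}\cup\{\{v_1,v_j\}: j=3,\ldots,k-1\}$. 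So fanning from $v_1$ everywhere would put multiple $I$-edges in, violating the equality. The fix is to triangulate the pieces by fanning from a vertex \emph{other} than $v_1$: in the piece $(v_1,v_2,\ldots,v_i)$ fan from $v_2$ (adding $\{v_2,v_4\},\ldots,\{v_2,v_i\}$), and in the piece $(v_1,v_i,v_{i+1},\ldots,v_0)$ fan from $v_0$ (adding $\{v_0,v_i\},\{v_0,v_{i+1}\},\ldots,\{v_0,v_{k-2}\}$ — note $\{v_0,v_1\}$ and $\{v_0,v_{k-1}\}$ are already edges). This way the only $I$-edge present is $\{v_1,v_i\}$, so $y^i \in F^I$, and chordality follows from a perfect-elimination-ordering argument exactly as in the Claims of Proposition~\ref{prop:inq1}'s proof. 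I would also let $z$ be the analogous completion that uses $\{v_0,v_2\}$ as its single $I$-edge: split $C$ into the triangle-free part and triangulate by fanning from $v_0$, which gives $z \in F^I$.

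The heart of the argument is then the telescoping: consecutive completions $y^i$ and $y^{i+1}$ should differ in a controlled way. Comparing $y^i$ and $y^{i+1}$, the edge $\{v_1,v_i\}$ is in $y^i$ but replaced by $\{v_1,v_{i+1}\}$ in $y^{i+1}$, while in the $v_0$-fan the set $\{v_0,v_i\}$ appears in $y^i$ but not $y^{i+1}$ (and conversely) — so $\mu y^i = \mu y^{i+1} = \mu_0$ yields a relation among $\mu_{\{v_1,v_i\}}, \mu_{\{v_1,v_{i+1}\}}, \mu_{\{v_0,v_i\}}, \mu_{\{v_0,v_{i+1}\}}$. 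The main obstacle, and the step requiring the most care, is designing these completions so that the symmetric difference $y^i \,\triangle\, y^{i+1}$ is \emph{exactly} a constant-size set of edges whose $\mu$-coefficients are already known (e.g.\ all equal, by a prior claim, to the ``difference-$2$ coefficient'' $\lambda_2$ from Proposition~\ref{prop:inq1}'s style of argument), so that the relation collapses to $\mu_{\{v_1,v_i\}} = \mu_{\{v_1,v_{i+1}\}}$. If the paper has already (in an earlier claim of this same proof, not shown) established that edges at cyclic distance $2$ all share a coefficient, one can route the telescoping through those; otherwise one builds a second family of completions pinning down the auxiliary coefficients first.

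Finally, having shown $\mu_{\{v_1,v_i\}} = \mu_{\{v_1,v_{i+1}\}}$ for all $i \in \{3,\ldots,k-2\}$ and, via the pair $(z, y^3)$, that $\mu_{\{v_0,v_2\}} = \mu_{\{v_1,v_3\}}$, we conclude $\mu_{\{v_1,v_i\}} = \lambda$ for every $i \in \{3,\ldots,k-1\}$, which is the claim. The broader facet proof then continues by showing every other coefficient vanishes and $\mu_0 = \lambda$, but that is beyond this claim's scope. I expect the bookkeeping of \emph{which} edges lie in each $y^i$ (and verifying the perfect elimination ordering in each case) to be the only genuinely laborious part; the linear algebra is a routine telescoping once the completions are correctly specified.
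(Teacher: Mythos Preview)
Your proposal has a genuine gap in the telescoping step. With your completions, the symmetric difference $y^i \,\triangle\, y^{i+1}$ is $\{\{v_1,v_i\},\{v_1,v_{i+1}\},\{v_2,v_{i+1}\},\{v_0,v_i\}\}$, so $\mu y^i = \mu y^{i+1}$ only yields
\[
\mu_{\{v_1,v_i\}} - \mu_{\{v_1,v_{i+1}\}} \;=\; \mu_{\{v_2,v_{i+1}\}} - \mu_{\{v_0,v_i\}}.
\]
The right-hand side involves coefficients that are \emph{not} known at this point: there is no prior claim in this proof, and $\{v_2,v_{i+1}\}$ and $\{v_0,v_i\}$ are not both distance-$2$ edges (their cycle distances are $i-1$ and $i$), so your hoped-for ``distance-$2$ coefficient'' shortcut does not apply. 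Your fallback (``build a second family pinning down the auxiliary coefficients first'') is not carried out, and in fact those coefficients are precisely the ones handled in the \emph{next} claim of the proof, so you would be inverting the whole argument.

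The paper sidesteps this entirely by choosing a different fan vertex. For each $i\in\{3,\dots,k-1\}$ it takes $\check{x}^i$ to be the fan \emph{from $v_i$} (i.e.\ add every fill edge incident to $v_i$). In that completion the only $I$-edge present is $\{v_1,v_i\}$, so $\check{x}^i\in F^I$. Then it defines $\check{y}^i$ by deleting $\{v_1,v_i\}$ and inserting $\{v_0,v_2\}$; the result is still chordal (it is isomorphic to the $\hat{x}$-type solution from the proof of Proposition~\ref{prop:inq1}) and again lies on $F^I$. Since $\check{x}^i$ and $\check{y}^i$ differ in exactly those two edges, $\mu\check{x}^i=\mu\check{y}^i$ gives $\mu_{\{v_1,v_i\}}=\mu_{\{v_0,v_2\}}=\lambda$ directly, with no telescoping and no auxiliary coefficients. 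Your $(z,y^3)$ pair is in fact exactly this idea for $i=3$; the fix for general $i$ is simply to fan from $v_i$ rather than from $v_0$ and $v_2$.
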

\begin{proof}{Proof.}
For $i \in \{3,4,\ldots,k-1 \}$ let $\check{x}^i$ be the solution
\[
\check{x}^i_{f} = 
\left\{
\begin{array}{ll}
1, &  \quad \{v_i,v_j\}, j \in [k-1] \backslash \{i-1,i,i+1\}, \\
0, &  \quad \mbox{otherwise},
\end{array}
\right.
\]
and $\check{y}^i$ be the solution  such that $\check{y}^i_{\{ v_1,v_i \}} = 0$, $\check{y}^i_{\{ v_0,v_2 \}} = 1$, and $\check{y}^i_f = \check{x}^i_f$ for the remaining edges in~$E(G)^c$; both families of solutions are depicted in Figure~\ref{fig:inq4proof}. We have that both $\check{x}^i$ and $\check{y}^i$ belong to $X(G)$, $i \in \{3,4,\ldots,k-1 \}$, as 
$G(\check{x}^i)$ is isomorphic to $G(x')$ and $\check{y}^i$ is isomorphic to $G(\hat{x}^{i-1})$; note that $G(\check{x}^i)$ and $G(\hat{x}^{i-1})$ were  defined and shown to be associated with chordal completions in the proof of Proposition~\ref{prop:inq1}.  Additionally, note that $a_{f}\check{x}^i_{f} = 1$ only for $f = \{ v_1,v_i \}$ and $a_{f}\check{y}^i_{f} = 1$ only for $f = \{ v_0,v_2 \}$, so both solutions satisfy $I$ at equality and, by definition, 
$\mu \check{x}^i = \mu_0  = \mu \check{y}^i$. Therefore, we must have $\mu_{\{ v_1,v_i \}} = \mu_{\{ v_0,v_2 \}} = \lambda$ for $i \in \{3,4,\ldots,k-1 \}$, as desired. $\Halmos$
\begin{figure}
\begin{minipage}{0.4\textwidth}
\begin{tikzpicture}[scale=0.4][font=\sffamily,\scriptsize]

\node (1) at (0,0) {$v_1$};
\node (2) at (3.5,-1) {$v_2$};
\node (3) at (5,-3) {$v_3$};
\node (i-2) at (5,-6) {$v_{i-2}$};
\node (i-1) at (3.5,-8) {$v_{i-1}$};
\node (i)  at (0,-9) {$v_{i}$};
\node (i+1) at (-3.5,-8) {$v_{i+1}$};
\node (i+2) at (-5,-6) {$v_{i+2}$};
\node (k-1) at (-5,-3) {$v_{k-1}$};
\node (0) at (-3.5,-1) {$v_{0}$};

\node (a)  at (0,-11) {$a$};


\path[-](1) edge (2);
\path[-](2) edge (3);
\path[dashed](3) edge (i-2);
\path[-](i-2) edge (i-1);
\path[-](i-1) edge (i);
\path[-](i) edge (i+1);
\path[-](i+1) edge (i+2);
\path[dashed](i+2) edge (k-1);
\path[-](k-1) edge (0);
\path[-](0) edge (1);

\path[-](2) edge (i);
\path[-](3) edge (i);
\path[-](i-2) edge (i);
\path[-](i-1) edge (i);
\path[-](i+1) edge (i);
\path[-](i+2) edge (i);
\path[-](k-1) edge (i);
\path[-](0) edge (i);

\path[-](1) edge (i);

%
%
%
\end{tikzpicture}
\end{minipage}
\hspace{15ex}
\begin{minipage}{0.4\textwidth}
\begin{tikzpicture}[scale=0.4][font=\sffamily,\scriptsize]
\node (1) at (0,0) {$v_1$};
\node (2) at (3.5,-1) {$v_2$};
\node (3) at (5,-3) {$v_3$};
\node (i-2) at (5,-6) {$v_{i-2}$};
\node (i-1) at (3.5,-8) {$v_{i-1}$};
\node (i)  at (0,-9) {$v_{i}$};
\node (i+1) at (-3.5,-8) {$v_{i+1}$};
\node (i+2) at (-5,-6) {$v_{i+2}$};
\node (k-1) at (-5,-3) {$v_{k-1}$};
\node (0) at (-3.5,-1) {$v_{0}$};

\node (b)  at (0,-11) {$b$};


\path[-](1) edge (2);
\path[-](2) edge (3);
\path[dashed](3) edge (i-2);
\path[-](i-2) edge (i-1);
\path[-](i-1) edge (i);
\path[-](i) edge (i+1);
\path[-](i+1) edge (i+2);
\path[dashed](i+2) edge (k-1);
\path[-](k-1) edge (0);
\path[-](0) edge (1);

\path[-](2) edge (i);
\path[-](3) edge (i);
\path[-](i-2) edge (i);
\path[-](i-1) edge (i);
\path[-](i+1) edge (i);
\path[-](i+2) edge (i);
\path[-](k-1) edge (i);
\path[-](0) edge (i);

\path[-](0) edge (2);
\end{tikzpicture}
\end{minipage}
\caption{The graph (a) $G(\check{x}^i)$ and the graph (b) $G(\check{y}^i)$ defined in the proof of Proposition~\ref{prop:inq2}.}
\label{fig:inq4proof}
\end{figure}
\end{proof}

\begin{claim}
For each $f' \in \interior(C) 
\backslash \left(\{v_0,v_2\} \cup \bigcup\limits_{i=3,\ldots,k-1} \{ v_1,v_i \} \right)$,  $\mu_{f'} = 0$.
\end{claim}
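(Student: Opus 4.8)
The plan is to establish $\mu_{f'}=0$ for each such chord $f'$ in the usual way: exhibit two chordal completions $F_1\subset F_2$ of $G$ with $F_2=F_1\cup\{f'\}$, \emph{both of which lie on the face $F^I$} (i.e.\ $ax(F_1)=ax(F_2)=b$). Since $\mu x(F_1)=\mu x(F_2)=\mu_0$ and $x(F_2)=x(F_1)+e^{f'}$, this forces $\mu_{f'}=0$. The recurring simplification will be a \emph{universal-vertex reduction}: if a completion makes $v_0$ adjacent to every other vertex, then (by Lemma~\ref{lem:InducedSubgraph} in one direction, and because any chordless cycle of length $\ge 4$ through $v_0$ would have a chord to $v_0$, in the other) the resulting graph is chordal iff its restriction to $\{v_1,\dots,v_{k-1}\}$ is chordal. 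This turns every chordality check below into a ``path-plus-fan'' check whose perfect elimination ordering is immediate.

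I would split the non-support chords into two cases. \emph{Case 1: $f'=\{v_0,v_j\}$ with $3\le j\le k-2$.} Take $F_1=\hat x^{j}$, the completion already shown to be chordal and to have $|C|-3$ interior edges in the proof of Proposition~\ref{prop:inq1}, and $F_2=\hat x^{j}\cup\{\{v_0,v_j\}\}$; then $G(F_2)$ is the full $v_0$-fan together with the single distance-two chord $\{v_{j-1},v_{j+1}\}$, hence chordal. Both completions contain $\{v_0,v_2\}$ (since $j\neq 2$) and no edge incident to $v_1$ (as $j-1\neq1$), so the left-hand side of $I$ equals $1$ for both, i.e.\ $F_1,F_2\in F^I$, and $\mu_{f'}=0$. \emph{Case 2: $f'=\{v_a,v_b\}$ with $2\le a<b\le k-1$.} Take $F_1=\{\{v_0,v_\ell\}:2\le\ell\le k-2\}\cup\{\{v_a,v_\ell\}:a+2\le\ell\le b-1\}$ (the $v_0$-fan plus a partial fan at $v_a$), and $F_2=F_1\cup\{f'\}$. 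In $G(F_2)$, $v_0$ is universal and the restriction to $\{v_1,\dots,v_{k-1}\}$ is the path $v_1\!-\!\cdots\!-\!v_{k-1}$ with the arc $v_a,\dots,v_b$ fully triangulated by a fan at $v_a$, hence chordal; $G(F_1)$ is the same minus $\{v_a,v_b\}$, which only removes $v_b$'s spoke and leaves $v_b$ with its path-neighbours inside that vertex set, still chordal. All added edges besides the $v_0$-spokes are incident to $v_a$ with $a\ge 2$, so $\{v_0,v_2\}$ is the unique support edge used by either completion; thus $F_1,F_2\in F^I$ and $\mu_{f'}=0$. These two cases exhaust $\interior(C)\setminus\big(\{v_0,v_2\}\cup\bigcup_{i=3}^{k-1}\{v_1,v_i\}\big)$, since chords incident to $v_1$ are support edges, the remaining $v_0$-chords fall under Case~1, and every other chord under Case~2 (the distance-two subcase $b=a+2$ being the degenerate instance where $F_1$ is just the $v_0$-fan).

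The main obstacle is the bookkeeping in the chordality verifications, especially checking that \emph{deleting} the single edge $f'$ from the triangulated graph $G(F_2)$ does not create a chordless cycle of length $\ge 4$; the universal-vertex reduction is what keeps this routine, since afterward one only peels the degree-two ends of a path and then the fan apex's neighbours (each simplicial because consecutive spokes of a fan close a triangle). The one place a completion other than the plain $v_0$-fan is needed is $f'=\{v_0,v_j\}$, because the $v_0$-fan itself cannot be perturbed in a single coordinate while staying chordal — deleting a spoke $\{v_0,v_j\}$ reopens the chordless $4$-cycle $(v_{j-1},v_j,v_{j+1},v_0)$ — and routing through $\hat x^{j}$, which pre-triangulates exactly that spot, resolves it.
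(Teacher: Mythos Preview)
Your proof is correct, and it takes a genuinely different route from the paper's argument.

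The paper handles all non-support chords with a single construction $z(f')$: for $f'=\{v_j,v_{j'}\}$ it builds a fan centred at $v_{j'}$ (covering one arc of the cycle) together with a partial fan at $v_j$ (covering the other arc) and an extra bridging edge $\{v_{j+1},v_{j'+1}\}$, so that the unique support edge present is $\{v_1,v_{j'}\}$; it then argues $G(z(f'))$ is isomorphic to the earlier graph $G(\tilde{x}^i)$ from Proposition~\ref{prop:inq1}, and for $z'(f')=z(f')+e^{f'}$ writes out an explicit perfect elimination ordering. By contrast, you always keep $\{v_0,v_2\}$ as the unique active support edge and build everything on top of the $v_0$-fan $x'$: Case~2 simply stacks a partial fan at $v_a$ onto $x'$, while Case~1 (chords through $v_0$) routes through $\hat{x}^j$ precisely because removing a spoke of $x'$ is not chordal. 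Your universal-vertex reduction ($v_0$ is adjacent to everyone, so chordality of $G(F_\ell)$ reduces to chordality of a path with a fan segment on $\{v_1,\dots,v_{k-1}\}$) is exactly Lemma~\ref{lem:ChordalExtensions} read backwards, and it makes the chordality verifications very light compared to the paper's explicit PEO for $z'(f')$.

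What each approach buys: the paper's construction is uniform (no case split) but the graph is more intricate and the PEO check for $z'(f')$ takes some care; your two-case argument is slightly longer to state but each chordality check is a one-line ``peel the path ends, then peel the fan'' argument, and the bookkeeping for which support edges appear is trivial since every added chord is incident to $v_0$ or to some $v_a$ with $a\ge 2$. Both are perfectly valid; yours is arguably the cleaner of the two.
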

\begin{proof}{Proof.}
First, note that $\interior(C) 
\backslash \left(\{v_0,v_2\} \cup \bigcup\limits_{i=3,\ldots,k-1} \{ v_1,v_i \} \right) \neq \emptyset$ only if $k \geq 5$. Let $f' = \left\{v_j,v_{j'}\right\}$ be such an edge, and assume without loss of generality that $v_j \neq v_0$ (i.e., $v_{j'}$ can be equal to~$v_2$).  Let $z(f')$ be the solution in~$\{0,1\}^{m^c}$ presented in
Figure~\ref{fig:inq4proof2} (part a) defined by
\[
z(f')_{f} = 
\left\{
\begin{array}{ll}
1, & \quad f = \{v_i,v_{j'}\}, k = 0, \text{ } 2 \leq i \leq j'-2, \text{ and } j+1 \leq i \leq k-1, \\
1, & \quad f = \{v_{j+1},v_{j'+1}\},\\
1, & \quad f = \{v_j,v_{i}\},  j'+1 \leq i \leq j-2, \\
0, &  \quad \mbox{otherwise}.
\end{array}
\right.
\]
Solution~$z(f')$ satisfies $I$ at equality, as $a_f z(f')_f = 1$ for $f = \{v_1,v_{j'}\}$ and $a_f z(f')_f  = 0$ for all the other edges in $G(z(f'))$. Moreover, $G(z(f'))$ is isomorphic to the graph presented in Figure~\ref{fig:inq2proof} (part a), so $z(f') \in X(G)$.

Let now~$z'(f')$ be the solution in~$\{0,1\}^{m^c}$ such that $z'(f')_{f'} = 1$ and $z'(f')_{f} = z(f')_{f}$ for the remaining edges; this solution is presented in Figure~\ref{fig:inq4proof2} (part b). The same argument used for~$z(f')$ shows that $z'(f')$ satisfies $I$ at equality, and sequence $(v_j,v_{j'},v_{j+1},v_{j+2},\ldots,v_{k-1},v_{0},v_1,\ldots,v_{j'-1},v_{j'+1},\ldots,v_{j-1})$ is a perfect elimination ordering of~$V(G)$ for $G(z'(f'))$, which shows that $z'(f') \in X(G)$.

Finally, because $\mu z'(f') = \mu_0 = \mu z(f')$, it follows that $\mu_{f'} = 0$, as desired. $\Halmos$

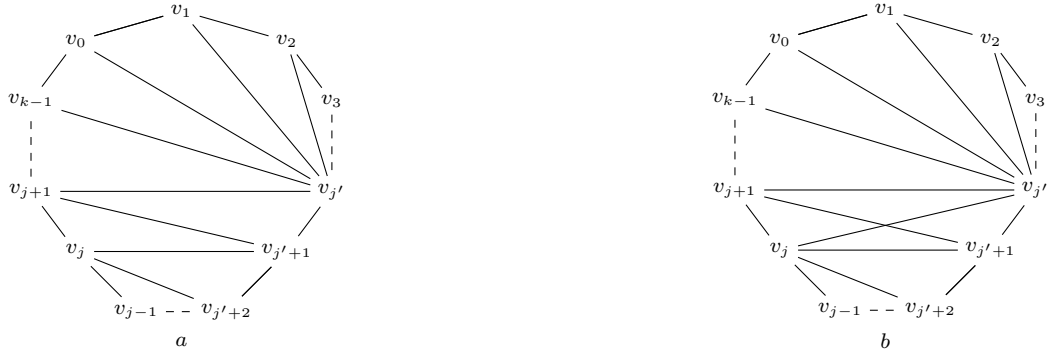
\begin{figure}
\begin{minipage}{0.4\textwidth}
\begin{tikzpicture}[scale=0.4][font=\sffamily,\scriptsize]

\node (1) at (0,0) {$v_1$};
\node (2) at (3.5,-1) {$v_2$};
\node (3) at (5,-3) {$v_3$};
\node (j') at (5,-6) {$v_{j'}$};
\node (j'+1) at (3.5,-8) {$v_{j'+1}$};
\node (j'+2) at (1.5,-10) {$v_{j'+2}$};
\node (j-1) at (-1.5,-10) {$v_{j-1}$};
\node (j+1) at (-5,-6) {$v_{j+1}$};
\node (j) at (-3.5,-8) {$v_{j}$};
\node (k-1) at (-5,-3) {$v_{k-1}$};
\node (0) at (-3.5,-1) {$v_{0}$};

\node (a)  at (0,-11) {$a$};


\path[-](0) edge (1);
\path[-](1) edge (2);
\path[-](2) edge (3);
\path[dashed](3) edge (j');
\path[-](j') edge (j'+1);
\path[-](j'+1) edge (j'+2);
\path[-](j-1) edge (j);
\path[-](j'+1) edge (j'+2);
\path[-](j+1) edge (j');
\path[-](j) edge (j+1);
\path[dashed](j+1) edge (k-1);
\path[-](k-1) edge (0);
\path[-](0) edge (1);
\path[dashed](j-1) edge (j'+2);

\path[-](1) edge (j');
\path[-](2) edge (j');
\path[-](0) edge (j');
\path[-](k-1) edge (j');
\path[-](j) edge (j'+1);
\path[-](j) edge (j'+2);

\path[-](j+1) edge (j'+1);

\end{tikzpicture}
\end{minipage}
\hspace{15ex}
\begin{minipage}{0.4\textwidth}
\begin{tikzpicture}[scale=0.4][font=\sffamily,\scriptsize]

\node (1) at (0,0) {$v_1$};
\node (2) at (3.5,-1) {$v_2$};
\node (3) at (5,-3) {$v_3$};
\node (j') at (5,-6) {$v_{j'}$};
\node (j'+1) at (3.5,-8) {$v_{j'+1}$};
\node (j'+2) at (1.5,-10) {$v_{j'+2}$};
\node (j-1) at (-1.5,-10) {$v_{j-1}$};
\node (j+1) at (-5,-6) {$v_{j+1}$};
\node (j) at (-3.5,-8) {$v_{j}$};
\node (k-1) at (-5,-3) {$v_{k-1}$};
\node (0) at (-3.5,-1) {$v_{0}$};

\node (b)  at (0,-11) {$b$};


\path[-](0) edge (1);
\path[-](1) edge (2);
\path[-](2) edge (3);
\path[dashed](3) edge (j');
\path[-](j') edge (j'+1);
\path[-](j'+1) edge (j'+2);
\path[-](j-1) edge (j);
\path[-](j'+1) edge (j'+2);
\path[-](j+1) edge (j');
\path[-](j) edge (j+1);
\path[dashed](j+1) edge (k-1);
\path[-](k-1) edge (0);
\path[-](0) edge (1);

\path[dashed](j-1) edge (j'+2);

\path[-](1) edge (j');
\path[-](2) edge (j');
\path[-](0) edge (j');
\path[-](k-1) edge (j');
\path[-](j) edge (j'+1);
\path[-](j) edge (j'+2);

\path[-](j) edge (j');
\path[-](j+1) edge (j'+1);

\end{tikzpicture}
\end{minipage}
\caption{The graph (a) $G(z(f'))$ and the graph (b) $G(z'(f'))$ defined in the proof of Proposition~\ref{prop:inq2}.}
\label{fig:inq4proof2}
\end{figure}

\end{proof}

Direct inspection on any solution $\check{x}$ (e.g., $\check{x}^3$) allows us to see  that
$\mu_0 = \lambda$. Therefore, there exists a $\lambda$ such that $\forall f \in E^c, \mu_f = \lambda a_f$ and $\mu_0 = b \lambda$, completing the proof that $I$ is facet-defining. 
$\Halmos$
\endproof

\bigskip

\proof{Facet-defining proof of Proposition \ref{prop:inq3}.}
Let $I:= ax \geq b$ be any inequality (\ref{inq:3}). Moreover, let $F^I$ be the set of points in $\mathrm{conv}(X(G))$ that satisfy $I$ at equality and $\mu x \geq \mu_0$ be a valid inequality for $\mathrm{conv}(X(G))$ satisfied at equality for each $x \in F^I$.  Let $\lambda = \coef{v_0}{v_2}$.

\begin{claim}
$\exists \lambda \neq 0 \mbox{ {such that} } \forall f \in \interior(C) \mbox{ with } d_C(f) = 2,  \mu_f = \lambda$.
\end{claim}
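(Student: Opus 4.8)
The plan is to run the usual facet argument, as in the proofs of Propositions~\ref{prop:inq1} and~\ref{prop:inq2}: having fixed the valid inequality $\mu x \ge \mu_0$ that is tight on every point of $F^I$, I want to pin down the entries of $\mu$ on the distance-$2$ chords by producing, for each of them, two feasible points lying on $F^I$ whose characteristic vectors differ in exactly two coordinates. The feasible points I would use are \emph{triangulations} of the $k$-gon associated with $C$, i.e.\ sets of $k-3$ pairwise non-crossing chords; each such set is a chordal completion of the cycle graph (a triangulated polygon is a $2$-tree, hence chordal), and for a triangulation $T$ the quantity $a\,x(T)$ equals the number of distance-$2$ chords of $T$, which is exactly the number of ``ears'' of $T$ (a triangle $v_{j-1}v_jv_{j+1}$, whose chord $\{v_{j-1},v_{j+1}\}$ has $d_C(v_{j-1},v_{j+1}) = 2$). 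Thus a triangulation lies on $F^I$ iff it has exactly two ears. In particular, for every $i$ the \emph{fan} $T_i := \{\{v_i,v_{i+2}\},\{v_i,v_{i+3}\},\dots,\{v_i,v_{i-2}\}\}$ (indices taken mod~$k$) is a triangulation whose only distance-$2$ chords are $\{v_i,v_{i+2}\}$ and $\{v_i,v_{i-2}\}$, so $x(T_i) \in F^I$; chordality of $T_i$ is the argument applied to $x'$ in the proof of Proposition~\ref{prop:inq1} (after cyclically relabelling the vertices).

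The key step is a single \emph{diagonal flip}. In $T_i$ the chord $\{v_i,v_{i+2}\}$ is shared by the triangles $v_iv_{i+1}v_{i+2}$ and $v_iv_{i+2}v_{i+3}$, whose union is the quadrilateral $v_iv_{i+1}v_{i+2}v_{i+3}$; swapping to the other diagonal produces the triangulation $T_i' := \bigl(T_i \setminus \{\{v_i,v_{i+2}\}\}\bigr) \cup \{\{v_{i+1},v_{i+3}\}\}$, which is again chordal. A direct check shows $T_i'$ also has exactly two ears --- the new one at apex $v_{i+2}$ with chord $\{v_{i+1},v_{i+3}\}$, and the one at apex $v_{i-1}$ with chord $\{v_i,v_{i-2}\}$ inherited from $T_i$ --- so $x(T_i') \in F^I$. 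Since $x(T_i)$ and $x(T_i')$ agree in every coordinate except $\{v_i,v_{i+2}\}$ and $\{v_{i+1},v_{i+3}\}$, the equalities $\mu\,x(T_i) = \mu_0 = \mu\,x(T_i')$ force $\coef{v_i}{v_{i+2}} = \coef{v_{i+1}}{v_{i+3}}$. Letting $i$ run over $\{0,1,\dots,k-1\}$ chains together all $k$ distance-$2$ coefficients of $\mu$ into one common value, which by definition is $\lambda = \coef{v_0}{v_2}$; the requirement $\lambda \neq 0$ is a harmless normalisation, failing only for $\mu \equiv 0$, i.e.\ the trivial inequality, which may be excluded at the outset of the facet argument (alternatively, once the companion claims show that all non-distance-$2$ entries of $\mu$ vanish, $\mu = \lambda a$ and $\mu_0 = 2\lambda$, and $\lambda = 0$ would give the degenerate inequality $0 \ge 0$, which cannot define a facet of the full-dimensional $\mathrm{conv}(X(G))$).

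I expect the main difficulty to be bookkeeping rather than anything conceptual: one must check that after the flip $T_i'$ differs from $T_i$ in exactly two coordinates and still uses exactly two distance-$2$ chords. The only place where I would re-verify the counting is the boundary case $k = 5$ (required by the proposition), where every interior chord has $d_C$-distance $2$ and the flip quadrilateral spans four of the five vertices; there the same fan-plus-flip construction still applies verbatim --- flip $\{v_i,v_{i+2}\}$ to $\{v_{i+1},v_{i+3}\}$ in the fan at $v_i$, leaving the shared chord $\{v_i,v_{i+3}\}$ in place --- so no separate argument is needed.
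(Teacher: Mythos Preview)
Your argument is correct and is essentially the paper's proof in different language: your fan $T_0$ is the paper's solution $x'$, and your flipped fan $T_0'$ is exactly the paper's $\tilde{x}^3$ (both consist of $\{v_1,v_3\}$ together with $\{v_0,v_j\}$ for $j=3,\dots,k-2$), so the subtraction $\mu\,x(T_i)=\mu_0=\mu\,x(T_i')$ is the same step the paper carries out, and your chaining over $i$ is the paper's appeal to cyclic symmetry. Your triangulation/ear/diagonal-flip phrasing is a nice way to package the bookkeeping, and your remark on $\lambda\neq 0$ makes explicit a point the paper leaves implicit.
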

\begin{proof}{Proof.}
 Consider solutions $x'$ and $\tilde{x}^3$ presented in the proof of Proposition~\ref{prop:inq1}; graph~$G(x')$ is 
isormorphic to the one shown  in Figure~\ref{fig:inq4proof} (part a) and has~$v_0$ as the neighbour of all vertices in~$V(G)$, whereas
graph~$G(\tilde{x}^3)$ is shown in Figure~\ref{fig:inq2proof} (part a).
By construction, both solutions are in $F^I$.  Subtracting $\mu x' = \mu_0$ from $\mu \tilde{x}^3 = \mu_0$ and cancelling like terms yields $\coef{v_0}{v_2} = \coef{v_1}{v_3}$.  By applying sequentially this procedure starting from any fill edge of $C$ with $d_C(v_i,v_j) = 2$, we obtain the desired result. 
$\Halmos$
\end{proof}


\begin{claim}
$\forall f \in \interior(C) \mbox{ s.t. } d_C(f) \geq 3, \mu_f = 0$.
\end{claim}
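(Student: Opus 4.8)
I would prove this the same way the other coefficients are pinned down in this paper: by producing, for each fill edge in question, a pair of feasible chordal completions that both lie on the face $F^I$ and whose characteristic vectors differ in exactly one coordinate, so that subtracting the two tightness equations $\mu x = \mu_0$ forces that coordinate to have coefficient $0$. Fix a fill edge $f' \in \interior(C)$ with $d_C(f') \ge 3$; using the dihedral symmetry of $C$ write $f' = \{v_0, v_p\}$ with $3 \le p$ and $k-p \ge 3$, so that both arcs of $C$ joining $v_0$ to $v_p$ contain at least four vertices. The plan is to construct a chordal completion $F$ of the cycle graph $G$ with three properties: (i) $F$ uses exactly two distance-2 fill edges, so $x(F) \in F^I$; (ii) $f' \notin F$; and (iii) $F \cup \{f'\}$ is again a chordal completion. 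Since $f'$ has distance $\ge 3$, properties (i) and (iii) give $x(F \cup \{f'\}) \in F^I$ as well, so $\mu x(F) = \mu_0 = \mu x(F \cup \{f'\})$, and subtracting yields $\mu_{f'} = 0$; as $f'$ was an arbitrary distance-$\ge 3$ fill edge, this is exactly the claim.

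The completion $F$ would be a serpentine (``fan-like'') triangulation of $C$, deliberately routed around $f'$ so that $f'$ can be reinserted without creating a chordless cycle. Concretely, one places the two ears of $F$ at $v_1$ and at $v_{p+1}$ — the corresponding ear chords $\{v_0,v_2\}$ and $\{v_p,v_{p+2}\}$ being the two, and only two, distance-2 chords of $F$ — and triangulates the rest of $C$ using only distance-$\ge 3$ chords so that $v_0$ and $v_p$ become opposite apices of a quadrilateral of $F$ whose other diagonal is present; adding $f'$ then merely upgrades that quadrilateral to a $K_4$ and keeps the graph chordal. This mirrors the $z(f'),z'(f')$ construction in the facet proof of Proposition~\ref{prop:inq2}; in each case chordality of $F$ and of $F \cup \{f'\}$ is checked by exhibiting an explicit perfect elimination ordering (peel off first the ears and the vertices interior to the two arcs, then the resulting clique). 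A few corner cases — small $k$, or $f'$ joining two vertices that are close along one arc — are handled by the same idea with a different choice of the two ears; for instance, when $k=6$ and $f'=\{v_0,v_3\}$ one may simply take $F = \{\{v_0,v_2\},\{v_2,v_5\},\{v_3,v_5\}\}$, which is already a triangulation and for which $F\cup\{f'\}$ is a $K_4$ on $\{v_0,v_2,v_3,v_5\}$ with two triangles attached.

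The only real difficulty is the construction of $F$; the obvious candidates fail in instructive ways. A plain fan triangulation either already contains $f'$, or, once $f'$ is added back, spawns a long chordless cycle on the far arc; and the natural repairs — triangulating the newly created quadrilateral with a short direct chord — introduce a third distance-2 chord, which pushes $F$ off the face $F^I$. Making the three requirements hold simultaneously (chordality of $F$, chordality of $F\cup\{f'\}$, and exactly two distance-2 chords with $f'\notin F$), together with the low-$k$/``short-arc'' special cases, is where essentially all the work of this claim lies. Once it is done the remaining bookkeeping is immediate: combining with the preceding claim ($\mu_f=\lambda$ on every distance-2 fill edge) and evaluating $\mu x=\mu_0$ at any point of $F^I$ — which forces $\mu_0 = 2\lambda$, since exactly two distance-2 edges are used — gives $(\mu,\mu_0)=\lambda\,(a,b)$, so $I$ is facet-defining and the proof of Proposition~\ref{prop:inq3} is complete.
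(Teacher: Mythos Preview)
Your strategy is correct and is the same one the paper uses: for each distance-$\ge 3$ fill edge, exhibit two chordal completions lying on $F^I$ that differ only in that coordinate, and subtract. The difference lies entirely in the construction, and here the paper is considerably simpler than what you propose. Rather than building a bespoke serpentine triangulation ``routed around $f'$'' with hand-tuned ears and acknowledged corner cases, the paper just recycles the solutions $\tilde{x}^i$ already built in the facet proof of Proposition~\ref{prop:inq1}. Each $\tilde{x}^i$ (for $3\le i\le k-3$) is a two-fan triangulation with ears at $v_2$ and $v_{k-1}$, hence exactly the two distance-$2$ chords $\{v_1,v_3\}$ and $\{v_0,v_{k-2}\}$, so $\tilde{x}^i\in F^I$. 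Setting $\tilde{y}^i := \tilde{x}^i + e^{\{v_1,v_{i+1}\}}$ adds a single edge of distance $\min(i,k-i)\ge 3$, so $\tilde{y}^i\in F^I$ as well; chordality of $G(\tilde{y}^i)$ is immediate because $\{v_0,v_1,v_i,v_{i+1}\}$ becomes a $K_4$ (equivalently, the same PEO $(v_1,\dots,v_i,v_0,v_{i+1},\dots,v_{k-1})$ still works). Subtracting gives $\mu_{\{v_1,v_{i+1}\}}=0$ for every $i$ in the range, and cyclic symmetry finishes the claim.

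So the paper sidesteps exactly the part you flag as ``where essentially all the work lies'': there are no corner cases, no small-$k$ analysis, and no need to engineer a quadrilateral containing $f'$, because the parametrized family $\tilde{x}^i$ already sits on $F^I$ and already has the right local structure for the one-edge augmentation. Your construction would work once fully specified, but it is doing by hand what the paper gets for free by reusing prior work plus rotational symmetry.
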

\begin{proof}{Proof.}
Let $\tilde{y}^i$ be the set of solutions given by $\tilde{y}^i = \tilde{x}^i + e^{\{v_1,v_{i+1}\} }$, $3 \leq i \leq k-3$, with $\tilde{x}$ being again the solutions defined in the proof of Proposition~\ref{prop:inq1}.
Each solution~$\tilde{y}^i$ satisfies~$I$ at equality. Moreover, 
$(v_1,v_2,\ldots,v_i,v_0,v_{i+1},v_{i+2},\ldots,v_{k-1})$ is a perfect elimination ordering for each~$G(\tilde{y}^i)$, showing thus that each~$\tilde{y}^i$ is a valid solution.  Therefore, $\mu \tilde{y}^i = \mu_0 = \mu \tilde{x}^i = \mu_0$, and as $\tilde{y}^i$ and $\tilde{x}^i$ only differ on the coordinate corresponding to fill edge $\{v_1,v_{i+1}\}$, we must have $\mu_{\{v_1,v_{i+1}\}} = 0$ for any edge index $i, 3 \leq i \leq k-2$.  This implies, due to cyclic symmetry, that $\mu_f = 0$ for any edge $f  \in \interior(C)$ such that $d_C(f) \geq 3$.
$\Halmos$
\end{proof}

Finally, we have $\mu_0 = \mu x' = \mu_{\{v_0,v_2\}} + \mu_{\{v_0,v_2\}} = 2\lambda $. Therefore, there exists a $\lambda$ such that $\mu_0 = b \lambda$ and $\mu_f = \lambda a_f$ for every $f$  in $E^c$, which shows that~$I$ is facet-defining. $\Halmos$



\endproof

\bigskip

\proof{Facet-defining proof of Proposition \ref{prop:inq4}.}

Let $I:= ax \geq b$ be any inequality (\ref{inq:4}).  Without loss of generality, let $j=0$ and $i$ be any value in $[2,k-3]$.
Let $F^I$ be the set of points in $\mathrm{conv}(X(G))$ that satisfy $I$ at equality, and let $\mu x \geq \mu_0$ be a valid inequality for $\mathrm{conv}(X(G))$ satisfied at equality for each $x \in F^I$.

\begin{claim}
$\mu_{\{v_0,v_i\}} =  \mu_{\{v_{i-1},v_{i+1}\}}  = 0$.
\end{claim}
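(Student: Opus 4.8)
The plan is to produce three chordal completions that all lie on the face $F^I$ and then read off the two coefficient equalities by toggling a single coordinate at a time. Throughout I would reuse the completions $x'$ (the full fan from $v_0$, with $x'_{\{v_0,v_j\}}=1$ for $j=2,\dots,k-2$) and $\hat{x}^i$ (the same fan with the chord $\{v_0,v_i\}$ replaced by $\{v_{i-1},v_{i+1}\}$) from the proof of Proposition~\ref{prop:inq1}, both of which were already shown there to be chordal and to use exactly $|C|-3$ interior edges.

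First I would check that both $x'$ and $\hat{x}^i$ belong to $F^I$. Since $2\le i\le k-3$, the indices force $v_{i-1}\neq v_0\neq v_{i+1}$, so in $x'$ the chord $\{v_0,v_i\}$ is present while $\{v_{i-1},v_{i+1}\}$ is absent; the left-hand side of $I$ at $x'$ is therefore $(|C|-3)-1=|C|-4$. Symmetrically, $\hat{x}^i$ contains $\{v_{i-1},v_{i+1}\}$ but not $\{v_0,v_i\}$, again giving $|C|-4$. Hence $x',\hat{x}^i\in F^I$.

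The crucial step is to introduce the completion $w:=x'+e^{\{v_{i-1},v_{i+1}\}}$, i.e.\ the full fan from $v_0$ together with the extra chord $\{v_{i-1},v_{i+1}\}$; equivalently $w=\hat{x}^i+e^{\{v_0,v_i\}}$. I would show $w\in X(G)$ via the perfect elimination ordering $(v_1,\dots,v_{i-1},v_i,v_{i+1},\dots,v_{k-1},v_0)$: eliminating $v_1,\dots,v_{i-2}$ proceeds exactly as in the fan, and when $v_{i-1}$ is reached its surviving neighbours $\{v_0,v_i,v_{i+1}\}$ form a clique because $w$ contains $\{v_0,v_i\}$, $\{v_0,v_{i+1}\}$ and $\{v_i,v_{i+1}\}$, so the added chord merely completes the clique $K_4$ on $\{v_0,v_{i-1},v_i,v_{i+1}\}$ and creates no new chordless cycle. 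Since $w$ uses $|C|-2$ interior edges with both excluded edges present, its value in $I$ is $(|C|-2)-2=|C|-4$, so $w\in F^I$ as well.

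Finally I would read off the claim. The completions $x'$ and $w$ differ in the single coordinate $\{v_{i-1},v_{i+1}\}$ and both lie on $F^I$, so $\mu x' = \mu_0 = \mu w$ forces $\mu_{\{v_{i-1},v_{i+1}\}}=0$; likewise $\hat{x}^i$ and $w$ differ only in the coordinate $\{v_0,v_i\}$, giving $\mu_{\{v_0,v_i\}}=0$. I expect the only genuinely delicate point to be the chordality of $w$, namely that adjoining $\{v_{i-1},v_{i+1}\}$ to the fan only closes a $K_4$ rather than opening a new chordless cycle; once its perfect elimination ordering is verified, the two equalities follow immediately from the single-coordinate comparisons, which is the standard device for certifying that an excluded edge has zero coefficient.
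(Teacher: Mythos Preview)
Your proposal is correct and matches the paper's own argument almost exactly: the paper uses the same three completions $x'$, $\hat{x}^i$, and $w = x' + e^{\{v_{i-1},v_{i+1}\}}$, first comparing $x'$ with $\hat{x}^i$ to obtain $\mu_{\{v_0,v_i\}} = \mu_{\{v_{i-1},v_{i+1}\}}$ and then comparing $x'$ with $w$ to get $\mu_{\{v_{i-1},v_{i+1}\}} = 0$, whereas you perform the two single-coordinate comparisons $x'$ vs.\ $w$ and $\hat{x}^i$ vs.\ $w$ directly. The paper certifies the chordality of $w$ via the ordering $(v_0,v_1,\dots,v_{k-1})$ (in its convention), which is essentially the same PEO you exhibit.
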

\begin{proof}{Proof.}
Consider solutions $x'$ and $\hat{x}^i$ presented in the proof of Proposition~\ref{prop:inq1}. Direct inspection allows us to see that both belong to~$F^I$ and differ only on coordinates $\sigma(\{v_0,v_i\})$ and $\sigma(\{v_{i-1},v_{i+1}\})$, so that $\mu x' = \mu_0 = \mu \hat{x}^i$ 
implies that $\mu_{\{v_0,v_i\}} =  \mu_{\{v_{i-1},v_{i+1}\}}$.

Additionally, the solution $x' + e^{\left\{ v_{i-1},v_{i+1} \right\}}$ also belongs to~$F^I$: it satisfies $I$ at equality and the sequence $(v_0,v_1,\ldots,v_{k-1})$ is a perfect elimination order of $V(G)$.  Therefore, as $\mu \left(x' + e^{\left\{ v_{i-1},v_{i+1} \right\}}\right) = \mu x' + \mu_{\left\{ v_{i-1},v_{i+1} \right\}} = \mu_0$, it follows that $\mu_{\left\{ v_{i-1},v_{i+1} \right\}} = 0$.
 $\Halmos$
\end{proof}

Let $C' = (v_0, v_1, \ldots, v_{i-1}, v_{i+1}, v_{i+2}, \ldots, v_{k-1})$ and let $\lambda = \mu_{v_{1},v_{k-1}}$.

\begin{claim}
$\exists \ \lambda \mbox{ such that } \forall f \in \interior(C'), \mu_f = \lambda$.
\end{claim}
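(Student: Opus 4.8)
The plan is to manufacture points of $F^I$ out of the minimum chordal completions of the smaller cycle graph on $C'$, and then invoke Proposition~\ref{prop:inq1} for that graph. Write $G'$ for the cycle graph associated with $C' = (v_0,v_1,\ldots,v_{i-1},v_{i+1},\ldots,v_{k-1})$; it has $|C'| = k-1 \ge 4$ vertices, and $G' = G[V(C')] + \{\{v_{i-1},v_{i+1}\}\}$, $\filledges(G') = \interior(C')$, $\interior(C') \subseteq \interior(C)$, and $\{v_1,v_{k-1}\} \in \interior(C')$ (these two vertices are at distance $2$ in $C'$, through $v_0$), so $\lambda = \mu_{v_1,v_{k-1}}$ is one of the coordinates we must pin down.

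The first step is the lifting construction. If $F'$ is a \emph{minimum} chordal completion of $G'$, then $|F'| = |C'|-3 = k-4$ by Lemma~\ref{lem:ThreeInteriorEdges}, and I claim $\widehat F := F' \cup \{\{v_{i-1},v_{i+1}\}\}$ is a chordal completion of $G$ whose characteristic vector lies in $F^I$. First, $F' \subseteq \interior(C') \subseteq \interior(C) = \filledges(G)$ and $\{v_{i-1},v_{i+1}\} \in \filledges(G)$, so $\widehat F$ is a completion of $G$. Second, in $G+\widehat F$ the vertex $v_i$ has neighbourhood exactly $\{v_{i-1},v_{i+1}\}$, which is now an edge, so $v_i$ is simplicial; deleting it leaves $G[V(C')] + \widehat F = G' + F'$, which is chordal, hence $G+\widehat F$ is chordal. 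Third, the edges of $\widehat F$ lying in the support $\interior(C) \setminus \{\{v_{i-1},v_{i+1}\},\{v_0,v_i\}\}$ of inequality~(\ref{inq:4}) are exactly the $k-4$ edges of $F'$ (the added edge $\{v_{i-1},v_{i+1}\}$ is excluded from the support, and $\{v_0,v_i\}\notin F'$), so $x(\widehat F)$ satisfies~(\ref{inq:4}) with equality, i.e.\ $x(\widehat F) \in F^I$.

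Now take any two minimum chordal completions $F'_1,F'_2$ of $G'$. Both $x(\widehat F_1)$ and $x(\widehat F_2)$ lie on $F^I$, hence $\mu x(\widehat F_1) = \mu_0 = \mu x(\widehat F_2)$; subtracting, the common $\{v_{i-1},v_{i+1}\}$-term cancels (and it is $0$ anyway by the previous claim), giving $\sum_{f\in F'_1}\mu_f = \sum_{f\in F'_2}\mu_f$. Thus the restriction of $\mu$ to the coordinates indexed by $\interior(C')$ is constant over all vertices of the facet of $\textnormal{conv}(X(G'))$ defined by $\sum_{f\in\interior(C')}x_f = k-4$. By Proposition~\ref{prop:inq1} this inequality is facet-defining for $\textnormal{conv}(X(G'))$, and $\textnormal{conv}(X(G'))$ is full-dimensional since the cycle graph $G'$ on $k-1\ge 4$ vertices is not complete; therefore those vertices affinely span the hyperplane $\{x:\sum_{f\in\interior(C')}x_f = k-4\}$, and a linear functional constant on a hyperplane is a scalar multiple of its normal, here the all-ones vector. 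Hence $\mu_f$ equals one common value over all $f\in\interior(C')$, and evaluating at $f=\{v_1,v_{k-1}\}$ identifies that value with $\lambda$, proving the claim.

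The step I expect to require the most care is the lifting construction: one must verify simultaneously that $\widehat F$ is a valid chordal completion of $G$ (via the simplicial-vertex argument) and that it lands \emph{exactly} on $F^I$ and not in its relative interior, which hinges on the bookkeeping that $\{v_{i-1},v_{i+1}\}$ lies outside the support of~(\ref{inq:4}) while every edge of $F'$ lies inside $\interior(C)$. A minor issue is the boundary indices $i=2$ and $i=k-3$, where one should double-check that $C'$ is still a chordless cycle of length at least four and that $\{v_1,v_{k-1}\}$ is still a chord of it.
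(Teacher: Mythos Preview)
Your proof is correct and follows essentially the same approach as the paper. Both arguments lift feasible points of $X(G')$ satisfying inequality~(\ref{ineq:ci}) at equality to points of $F^I$ by adding the edge $\{v_{i-1},v_{i+1}\}$ and exploiting the simpliciality of $v_i$, and then appeal to Proposition~\ref{prop:inq1} on $G'$ to force all $\mu_f$, $f\in\interior(C')$, to coincide; your version spells out the final implication via full-dimensionality and the facet-defining property a bit more explicitly than the paper's ``it follows from the arguments used in the proof of Proposition~\ref{prop:inq1}.''
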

\begin{proof}{Proof.}
Let~$x$ be any feasible solution of $X(C' + \{ \{ v_{i-1},v_{i+1} \} \})$  satisfying inequality~(\ref{ineq:ci}) at equality. Let~$y$ be a solution of~$X(G)$ defined as follows:
\[
y_{f} = 
\left\{
\begin{array}{ll}
{x}_f, &  \quad f \in \mathrm{int}(C')  \\
1			, & \quad f = \{v_{i-1}, v_{i+1}\} \\
0, &  \quad \mbox{otherwise}.
\end{array}
\right.
\]
Solution~$y$ belongs to~$X(G)$  because any perfect elimination order of $V(C')$ can be extended into a perfect elimination order for $V(C)$ by putting $v_i$ in the end of the sequence (note that the only neighbors of~$v_i$ are $v_{i-1}$ and $v_{i+1}$, which are necessarily connected). Moreover, by construction, $\sum_{f \in \mathrm{int}(C')}y_f = |C'| - 3 = |C| - 4$, so~$y \in F^I$. Finally, 
as $\sum_{f \in \interior(C) \setminus \{\{v_{i-1},v_{i+1},\{v_j,v_i\} \}\}  }y_f  =  \sum_{f \in \interior(C')}y_f$ for all $j \in [k-1] \setminus \{i\}$, it follows from the arguments used in the proof of Proposition~\ref{prop:inq1} that $\mu_f = \lambda$ for each $f \in \interior(C')$. $\Halmos$
\end{proof}

\begin{claim}
$\forall f = \{v_i,v_{\ell}\}, \ell = 1, 2, \ldots, i-2, i+2, i+3, \ldots, k-2, \mu_{f} = \lambda$.
\end{claim}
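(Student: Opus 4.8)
The plan is to rerun the tightness argument used for the earlier claims of this proof. Recall that $I:=ax\geq b$ has $a_f=1$ for every chord of $C$ except the two chords $\{v_0,v_i\}$ and $\{v_{i-1},v_{i+1}\}$, for which $a_f=0$, and $b=|C|-4$, and that $F^I=\{x\in\mathrm{conv}(X(G)):ax=b\}$. For each admissible $\ell$ I would exhibit two chordal completions of $G$, both lying in $F^I$, whose characteristic vectors differ in exactly the coordinates $\{v_i,v_\ell\}$ and $\{v_{\ell-1},v_{\ell+1}\}$. Since $\mu x=\mu_0$ holds on all of $F^I$, this forces $\mu_{\{v_i,v_\ell\}}=\mu_{\{v_{\ell-1},v_{\ell+1}\}}$, and the previous claim then makes the right-hand side equal to $\lambda$, because $\{v_{\ell-1},v_{\ell+1}\}$ will turn out to lie in $\interior(C')$.

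Fix $\ell\in\{1,\dots,i-2\}\cup\{i+2,\dots,k-2\}$, so in particular $\ell\notin\{i-1,i,i+1\}$. The first completion is the fan triangulation of $C$ from $v_i$, namely $T=\{\{v_i,v_m\}: m\in V(C)\setminus\{v_{i-1},v_i,v_{i+1}\}\}$. This is a triangulation of the $k$-gon, hence $G+T$ is chordal (the perfect-elimination-ordering argument used for $x'$ in the proof of Proposition~\ref{prop:inq1} applies after a cyclic relabelling), so $x(T)\in X(G)$. The key bookkeeping step is to verify $x(T)\in F^I$: of the $k-3$ chords in $T$, exactly one has zero $a$-coefficient, namely $\{v_0,v_i\}$ — which lies in $T$ since $i\in[2,k-3]$ gives $i\notin\{0,1,k-1\}$ — whereas $\{v_{i-1},v_{i+1}\}$ is not incident to $v_i$ and so does not appear in $T$; hence $ax(T)=(k-3)-1=|C|-4=b$.

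The second completion is the single diagonal flip $T'=(T\setminus\{\{v_i,v_\ell\}\})\cup\{\{v_{\ell-1},v_{\ell+1}\}\}$ across the quadrilateral $v_{\ell-1}v_\ell v_{\ell+1}v_i$, which is well-defined because $\ell\notin\{i-1,i,i+1\}$ makes these four vertices distinct and puts both triangles $v_iv_{\ell-1}v_\ell$ and $v_iv_\ell v_{\ell+1}$ inside $T$. Then $T'$ is again a triangulation of the $k$-gon, so $x(T')\in X(G)$, and the same count applies: $T'$ still contains $\{v_0,v_i\}$ (since $\ell\neq 0$) and still omits $\{v_{i-1},v_{i+1}\}$ (the equality $\{v_{\ell-1},v_{\ell+1}\}=\{v_{i-1},v_{i+1}\}$ would force $\ell=i$), so $ax(T')=|C|-4=b$ and $x(T')\in F^I$. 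Subtracting $\mu x(T')=\mu_0$ from $\mu x(T)=\mu_0$ and using $x(T)-x(T')=e^{\{v_i,v_\ell\}}-e^{\{v_{\ell-1},v_{\ell+1}\}}$ gives $\mu_{\{v_i,v_\ell\}}=\mu_{\{v_{\ell-1},v_{\ell+1}\}}$. Finally, because $\ell\neq i$ the vertex $v_\ell$ still lies on $C'$, so $v_{\ell-1}$ and $v_{\ell+1}$ are non-adjacent in $C'$, i.e.\ $\{v_{\ell-1},v_{\ell+1}\}\in\interior(C')$; the previous claim gives $\mu_{\{v_{\ell-1},v_{\ell+1}\}}=\lambda$, whence $\mu_{\{v_i,v_\ell\}}=\lambda$, and since $\ell$ was arbitrary in the stated range the claim follows.

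The step I expect to be the main obstacle is precisely the two membership checks $x(T),x(T')\in F^I$: everything rests on each triangulation meeting the pair $\{\{v_0,v_i\},\{v_{i-1},v_{i+1}\}\}$ of zero-coefficient chords in exactly one edge, and this must be confirmed uniformly in $\ell$, including the boundary values $\ell=1$ (where $\{v_{\ell-1},v_{\ell+1}\}=\{v_0,v_2\}$) and $\ell=k-2$ (where $\{v_{\ell-1},v_{\ell+1}\}=\{v_{k-3},v_{k-1}\}$); in both cases one needs to know this chord is neither incident to $v_i$ nor equal to $\{v_{i-1},v_{i+1}\}$, which again follows from $i\in[2,k-3]$. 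Once that bookkeeping is nailed down, the remainder is the affine-tightness shuffle already used for the preceding claims.
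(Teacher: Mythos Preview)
Your proposal is correct and takes essentially the same approach as the paper. The paper's $\bar{x}$ is exactly your fan triangulation $T$ from $v_i$, and the paper's $\bar{x}^{\ell'}$ is exactly your diagonal-flip $T'$; both arguments verify membership in $F^I$ by counting that exactly one of the two zero-coefficient chords lies in each triangulation, subtract to get $\mu_{\{v_i,v_\ell\}}=\mu_{\{v_{\ell-1},v_{\ell+1}\}}$, and then invoke the previous claim since $\{v_{\ell-1},v_{\ell+1}\}\in\interior(C')$.
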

\begin{proof}{Proof.}
Fix $\ell' \in \{ 1, 2, \ldots, i-2, i+2, i+3, \ldots, k-2 \}$.  Let $\bar{x}$ be defined by
\[
\bar{x}_{f} = 
\left\{
\begin{array}{ll}
1			, & \quad v_i \in f \\
0, &  \quad \mbox{otherwise}.
\end{array}
\right.
\]
$G(\bar{x})$ is isomorphic to the solution $x'$ defined in the proof of Proposition~\ref{prop:inq1}, so it follows that~$\bar{x}$ is feasible. Moreover, $\bar{x}_f = 1$ for $|C|-4$ edges  in  $ \interior(C) \ \backslash \ \left\{ \left\{ v_{i-1},v_{i+1} \right\} , \left\{ v_0, v_i \right\}  \right\}$,  so we have that $\bar{x} \in F^I$.

Let $\bar{x}^{\ell'}$ be the solution of $X(G)$ such that  
$\bar{x}^{\ell'}_{\{v_{\ell'-1},v_{\ell'+1}\}} = 1$, $\bar{x}^{\ell'}_{\{v_{i},v_{\ell'}\}} = 0$, and $\bar{x}^{\ell'}_f = \bar{x}_f$ for
the remaining edges.
The graph $G(\bar{x}^{\ell'})$ is  isomorphic to 
one of the graphs $G(\hat{x}^{i'})$ defined in the proof of Proposition~\ref{prop:inq1}, and therefore $\bar{x}^{\ell'} \in X(G)$. Moreover,
$\bar{x}^{\ell'}_f = 1$ for $|C|-4$ edges  in  $ \interior(C) \ \backslash \ \left\{ \left\{ v_{i-1},v_{i+1} \right\} , \left\{ v_i, v_{l'} \right\}  \right\}$,  so we have that $\bar{x}^{\ell'} \in F^I$.


Finally, we have $\mu \bar{x} = \mu_0$ and $\mu \bar{x}^{\ell'} = \mu_0$, and the subtraction of these two equalities yields   $\mu_{\{v_{\ell'-1},v_{\ell'+1}\}} = \mu_{\{v_{i},v_{\ell'}\}}$.  Since $\{v_{\ell'-1},v_{\ell'+1}\} \in \mathrm{int}(C'), \lambda = \mu_{\{v_{\ell'-1},v_{\ell'+1}\}}$ and $\mu_{\{v_{i},v_{\ell'}\}} = \lambda$.
$\Halmos$
\end{proof}

From the previous claims, we have that any solution in $F^I$ yields $\mu_0 = \lambda \left( |C| - 4 \right)$, as desired. 
$\Halmos$
\endproof

\section{Additional Proofs for Section \ref{sec:GeneralPolyhedralProperties}}
\label{ec:generalProperties}

\smallskip
%
%

\begin{lemma} \label{lem:ChordalExtensions}
If $G = (V,E)$ is a chordal graph, then the graph $G' = (V \cup w,E \cup \left\{ (w,v) : \forall \: v \in V \right\}$ is also chordal.
\end{lemma}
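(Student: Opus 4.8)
The plan is to argue by contradiction, in the same spirit as the proof of Lemma~\ref{lem:InducedSubgraph}. Suppose $G'$ is not chordal; then it contains a chordless cycle $C$ with $|C| = k \geq 4$. I would distinguish two cases according to whether the new vertex $w$ lies on $C$.

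If $w \notin V(C)$, then $V(C) \subseteq V$, and since $G = G'[V]$ the induced subgraphs $G'[V(C)]$ and $G[V(C)]$ coincide; hence $C$ is a chordless cycle of $G$, contradicting the chordality of $G$. If instead $w \in V(C)$, I would write $C = (w, v_1, v_2, \ldots, v_{k-1})$ without loss of generality. The cyclic neighbours of $w$ on $C$ are $v_1$ and $v_{k-1}$, and because $k \geq 4$ and the vertices of $C$ are distinct, both of these differ from $v_2$; therefore $\{w, v_2\} \in \interior(C)$. By the construction of $G'$, $w$ is adjacent to every vertex of $V$, so $\{w, v_2\} \in E(G')$, which means $C$ has a chord — contradicting that $C$ is chordless. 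Since both cases are impossible, $G'$ is chordal.

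I expect no serious obstacle here: the only point demanding a little care is the index bookkeeping in the second case, namely verifying that $v_2$ is genuinely non-consecutive to $w$ along the cycle, which follows at once from $k \geq 4$ and the distinctness of the vertices of $C$. As an alternative one could instead append $w$ to the end of any perfect elimination ordering of $G$: each vertex eliminated before $w$ retains the same neighbourhood as in $G$ together with $w$, and since $w$ is adjacent to everything this set is still a clique, so the extended ordering is a perfect elimination ordering of $G'$. I would keep the cycle-based argument as the primary proof, however, to stay consistent with the characterization of chordality used elsewhere in this section.
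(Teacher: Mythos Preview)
Your proof is correct and follows essentially the same approach as the paper's own argument: a contradiction on a chordless cycle $C$ in $G'$, split into the cases $w\notin V(C)$ (forcing $C\subseteq G$) and $w\in V(C)$ (forcing a chord through $w$). The paper is slightly terser in the second case, simply observing that $w$ is adjacent to every vertex of $V(C)$ rather than singling out $\{w,v_2\}$; your PEO alternative is not used in the paper but is also sound.
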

\proof{Proof.}
Suppose by contradiction that $C = (v_0,v_1,\ldots,v_{k-1}), k \geq 4$, is a chordless cycle in $G'$.  
As~$G$ does not contain chordless cycles,  $V(C)$ cannot be contained in $V(G)$, so $w \in V(C)$. By construction, $w$ is adjacent to all vertices in~$V(G)$ and, in particular, to all vertices in~$V(C)$, contradicting thus the hypothesis that~$C$ is chordless. $\Halmos$
\endproof

Lemma~\ref{lem:ChordalExtensions} can be extended to cliques as opposed
to single vertices, since this addition can be seen as a inductively adding a single vertex one-by-one. This is formalized in the following immediate corollary.

\begin{corollary} \label{cor:ChordalExtensions}
If $G = (V,E)$ is a chordal graph, then the graph $G' = (V \cup W,E \cup \left\{ (w,v) : \forall w \in W, \: v \in V \cup W \backslash \{w\} \right\}$ is also chordal.
\end{corollary}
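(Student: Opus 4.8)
The plan is to argue by induction on $|W|$, peeling off one vertex of $W$ at a time and invoking Lemma~\ref{lem:ChordalExtensions} at each step. The base case $|W| = 0$ is immediate, since then $G' = G$, which is chordal by hypothesis. For the inductive step, fix an arbitrary $w \in W$ and set $W' := W \setminus \{w\}$. By the inductive hypothesis applied to the smaller clique-addition, the graph $G'' := (V \cup W', E \cup \{(u,v) : u \in W',\ v \in V \cup W' \setminus \{u\}\})$ is chordal.

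The key observation is that $G'$ is obtained from $G''$ by adding the single new vertex $w$ together with all edges joining $w$ to the vertices of $G''$: indeed, the vertex set of $G''$ is exactly $V \cup W'$, and in $G'$ the vertex $w$ is adjacent precisely to $V \cup W \setminus \{w\} = V \cup W' = V(G'')$; moreover every edge of $G'$ not incident to $w$ is already an edge of $G''$ (the edges inside $V$, the edges joining $W'$ to $V$, and the edges inside $W'$ are all present in $G''$ by construction). Hence $G'$ has exactly the form $(V(G'') \cup \{w\},\ E(G'') \cup \{(w,v) : v \in V(G'')\})$, so Lemma~\ref{lem:ChordalExtensions} applies and $G'$ is chordal, completing the induction.

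I do not expect a genuine obstacle here; the only thing requiring care is the bookkeeping in the previous paragraph, namely verifying that the edge set of $G'$ decomposes cleanly as ``edges of $G''$'' plus ``edges from $w$ to everything,'' with no stray edges inside $W$ left over and no edge of $w$ missing. Once that decomposition is checked, the statement is a one-line consequence of the lemma, and the induction on $|W|$ carries it through for arbitrary clique size.
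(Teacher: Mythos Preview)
Your proposal is correct and is exactly the approach the paper takes: the paper simply remarks that the corollary follows by inductively adding the vertices of $W$ one at a time and invoking Lemma~\ref{lem:ChordalExtensions} at each step. Your write-up just makes the bookkeeping explicit.
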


\medskip

\begin{definition}
An edge $e$ is said to be \emph{critical} in a chordal graph $G = (V,E)$ if $G' = (V,E \backslash e)$ is not chordal (i.e., the removal of $e$ from $G$ creates a chordless cycle).
\end{definition}

\medskip

\begin{lemma} \label{lem:CriticalEdge}
Let $G = (V,E)$ be a chordal graph.  If $e = \{v,w\}$ is critical, then any chordless cycle $C$ emerging after the deletion of $e$  is such that $\{v,w\} \subset V(C)$ and $|C|=4$.
\end{lemma}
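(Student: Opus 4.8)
The plan is to prove both parts simultaneously by splitting the emerging cycle along the deleted edge. Write $G' = (V, E\setminus e)$ and let $C = (v_0, v_1, \ldots, v_{k-1})$ be a chordless cycle of $G'$; since $G$ itself is chordal, $C$ must have length $k \ge 4$ (a triangle would not witness non-chordality), and such cycles are precisely the ones that ``emerge'' after deleting $e$, because $G$ has none of them. As $C$ is not chordless in $G$, some pair in $\interior(C)$ lies in $E$; but the only edge of $E$ absent from $E\setminus e$ is $e$ itself, so that pair must be $e$. This already yields $e \in \interior(C)$, i.e. $\{v,w\}\subseteq V(C)$, and moreover $v$ and $w$ are non-consecutive along $C$; relabeling, say $v = v_0$ and $w = v_p$ with $2 \le p \le k-2$.

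Next I would use the chord $e = \{v_0,v_p\}$ to cut $C$ into the two cycles $C^1 = (v_0,v_1,\ldots,v_p)$ and $C^2 = (v_p,v_{p+1},\ldots,v_{k-1},v_0)$, each closed off by $e \in E$, so that both are genuine cycles of $G$ with $|C^1| + |C^2| = k + 2$. The crux is to show that each $C^\ell$ is \emph{chordless in $G$}. Every element of $\interior(C^1)$ is a pair $\{v_i,v_j\}$ with $0 \le i < j \le p$ that is neither a consecutive pair of $C^1$ nor the closing edge $\{v_0,v_p\}$; one checks such a pair also belongs to $\interior(C)$ and is different from $e$, so, since $C$ is chordless in $G'$ and $E = (E\setminus e)\cup\{e\}$, it is a non-edge of $G$. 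The identical argument applies to $\interior(C^2)$. Because $G$ is chordal it has no chordless cycle of length $\ge 4$, whence $|C^1| = p+1 \le 3$ and $|C^2| = k-p+1 \le 3$, i.e. $p = 2$ and $p \ge k-2$. Combined with $k \ge 4$ this forces $k = 4$ (and $p = 2$), which is exactly $|C| = 4$ and $\{v,w\} = \{v_0,v_2\} \subset V(C)$.

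The main obstacle is the ``chordless in $G$'' step for $C^1$ and $C^2$: it is immediate that every chord of $C^\ell$ was a non-edge of $G'$, but one must verify carefully that none of them is secretly the deleted edge $e$, and that each of them really lies in $\interior(C)$ rather than $\exterior(C)$ --- in particular, using $j \le p \le k-2$ to rule out the wrap-around pair $\{v_{k-1},v_0\}$. Everything else is routine: that $C$ has length at least four, that a chord of a cycle splits it into two strictly shorter cycles meeting exactly in the two chord endpoints, and that a chordless cycle of length $\ge 4$ contradicts the definition of a chordal graph.
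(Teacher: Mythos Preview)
Your proof is correct and follows essentially the same approach as the paper: both show $e$ must be a chord of $C$ (hence $\{v,w\}\subset V(C)$), then split $C$ along $e$ into two sub-cycles of $G$ and use chordality of $G$ to bound their lengths by three. Your write-up is in fact more careful than the paper's in verifying that the sub-cycles are chordless in $G$; the paper simply asserts that the path $v\sim P_1\sim w$ closed by $e$ ``induces a chordless cycle in $G$'' without spelling out the check you perform.
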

\proof{Proof.}
Let~$C$ be a chordless cycle emerging after the deletion of $e$. If either~$v$ or~$w$ does not belong to~$V(C)$, then $C$ is also a chordless cycle in~$G$, so~$G$ is not chordal, a contradiction.

Suppose $|C| > 4$. In this case, $C$ can be written as a sequence $v \sim P_1 \sim w \sim P_2 $, where $P_1$ and $P_2$ are paths in $G$ such that $V(P_1) \cap V(P_2) = \emptyset$ and $v,w \notin V(P_1) \cup V(P_2)$.  Moreover, as $|C| > 4$, at least one of $P_1,P_2$ contains 2 or more vertices.  If $|P_1| > 1$ ($|P_2| > 1$), then the sequence described by path $v \sim P_1 \sim w$ ($w \sim P_2 \sim v$) induces a chordless cycle in $G$, thereby contradicting the assumption that $G$ is chordal. 
$\Halmos$ 
\endproof

\proof{Proof of Theorem \ref{thm:FacetsforInducedSubcycles}.}
This follows directly from Theorem~\ref{thm:FacetsforInducedSubcycles2} presented next. \Halmos
\endproof

\begin{theorem}\label{thm:FacetsforInducedSubcycles2}
Let $G=(V,E)$ and $E' \subseteq E^C$ be such that $G+E'$ is not chordal and $G + E' \backslash \{f\}$ is chordal for every $f \in E'$.  If $ax \ge b$ is facet-defining for $\mathrm{conv}(X(G+E')), a \ge 0$, and $a' \in \mathbb{R}^{\left| E^C(G) \right|}$, with $a'_f = a_f$ if $f \in E^C \backslash E'$ and $a'_f = 0$ otherwise, the inequality 
\[ a'x \geq b \left( \sum_{f \in F_G(C)} x_{f} - |E'| + 1  \right) \]
is facet-defining for $\mathrm{conv}(X(G))$.
\end{theorem}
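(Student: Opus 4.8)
The plan is to first dispatch validity and then establish the facet property by the usual normal‑vector argument, exhibiting two explicit families of points lying on the induced face. Throughout write $s(x):=\sum_{f\in E'}x_f$, so the inequality reads $a'x\ge b\,(s(x)-|E'|+1)$; it is convenient to record it in homogeneous form as $\tilde a x\ge\tilde b$, where $\tilde a_f=a_f$ for $f\in E^C(G)\setminus E'$, $\tilde a_f=-b$ for $f\in E'$, and $\tilde b=b(1-|E'|)$. I will assume $b>0$ (if $b=0$ the inequality is trivial). For validity, take any $x\in X(G)$: if $s(x)<|E'|$ the right‑hand side is nonpositive while $a'x\ge 0$ because $a\ge\mathbf 0$; if $s(x)=|E'|$ then $E'\subseteq E(x)$, so $G+E(x)=G+E'+E(x|_{E^C(G+E')})$ is chordal, hence $x|_{E^C(G+E')}\in X(G+E')$, giving $a'x=a\,(x|_{E^C(G+E')})\ge b$, which is exactly the right‑hand side.

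For the facet property I would use that $\mathrm{conv}(X(G))$ is full‑dimensional (the graph is not complete, since otherwise there is no facet at stake) and show that every equation $\mu x=\mu_0$ valid on the whole face $F=X(G)\cap\{\tilde a x=\tilde b\}$ is a scalar multiple of $(\tilde a,\tilde b)$. Two families suffice. First, for each $y\in X(G+E')$ lying on the facet $F'=\mathrm{conv}(X(G+E'))\cap\{ay=b\}$, its lift $\hat y$ (set $\hat y_f=1$ for $f\in E'$ and $\hat y_f=y_f$ otherwise) satisfies $\hat y\in X(G)$, because $G+E(\hat y)=G+E'+E(y)$ is chordal, and $\hat y\in F$ since $a'\hat y=ay=b$ and $s(\hat y)=|E'|$. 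Since $ax\ge b$ is facet‑defining, $\mathrm{conv}(X(G+E'))$ is full‑dimensional and $\mathrm{aff}(F')=\{ay=b\}$; imposing that the linear functional $y\mapsto\sum_{f\in E^C\setminus E'}\mu_f y_f$ be constant on $F'$ (hence on that hyperplane, and $a\neq\mathbf 0$) yields a single scalar $\lambda$ with $\mu_f=\lambda a_f$ for all $f\in E^C(G)\setminus E'$ and $\mu_0-\sum_{f\in E'}\mu_f=\lambda b$. Second, for each $f\in E'$ the point $x^f:=\chi^{E'\setminus\{f\}}$ is feasible — this is precisely the hypothesis that $G+E'\setminus\{f\}$ is chordal — and lies in $F$ because both sides of the inequality vanish at $x^f$. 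The relations $\mu x^f=\mu_0$ over all $f\in E'$ force the coefficients $\mu_f$, $f\in E'$, to share a common value $\nu$ with $\mu_0=(|E'|-1)\nu$; combined with the first relation this gives $\nu=-\lambda b$ and $\mu_0=\lambda b(1-|E'|)=\lambda\tilde b$, so $(\mu,\mu_0)=\lambda(\tilde a,\tilde b)$. (When $|E'|=1$ the hypothesis makes $G$ itself chordal, so $x^{f_0}=\mathbf 0\in F$ gives $\mu_0=0$ and the same conclusion follows.) Finally $F$ is nonempty (it contains any $x^f$) and proper — lifting a vertex $y^*\in X(G+E')$ with $ay^*>b$, which exists because $F'$ is a proper face, produces a feasible point off $\{\tilde a x=\tilde b\}$ — so $\dim F=|E^C(G)|-1$ and the inequality is facet‑defining.

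The routine parts are the two‑case validity check and the linear algebra combining the two families. The step requiring the most care is the first family: one must verify that the lifted facet points of $\mathrm{conv}(X(G+E'))$ are genuinely in $X(G)$ and remain tight, and correctly invoke that a facet of a full‑dimensional polytope affinely spans its supporting hyperplane, so that the coefficients of $\mu$ on $E^C(G)\setminus E'$ are pinned down up to one common multiple of $a$. Once that is in place, the feasibility of the ``drop one edge of $E'$'' points $x^f$ — which is exactly the structural hypothesis on $E'$ — carries the rest of the argument.
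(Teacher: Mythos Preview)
Your proof is correct and follows essentially the same approach as the paper: both arguments establish validity by the same two-case analysis and prove the facet property via the same two families of tight points, namely the lifts $\hat y$ of facet points of $\mathrm{conv}(X(G+E'))$ (with all coordinates in $E'$ set to $1$) and the ``drop one edge'' points $x^f=\chi^{E'\setminus\{f\}}$. The only cosmetic difference is that the paper verifies the affine independence of these $|E^C(G)|$ points directly, whereas you use the equivalent normal-vector (equation) method to deduce that any valid equation on the face is a scalar multiple of $(\tilde a,\tilde b)$.
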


\proof{Proof of Theorem \ref{thm:FacetsforInducedSubcycles2}}
Let $ax \geq b$ be a facet-defining inequality for $\textnormal{conv}(X(G+E'))$ and $I$ be the corresponding lifted inequality $a'x \geq b\left( \sum_{f \in E'} x_{f} - |E'| + 1  \right)$ for $\textnormal{conv}(X(G))$.

First, we show that~$I$ is valid for~$\textnormal{conv}(X(G))$. Since $a \geq 0$, $I$ can only be violated by a feasible element~$x$ of~$\textnormal{conv}(X(G))$  if  $\sum_{f \in E'}x_f = |E'|$; otherwise, $I$ is trivially satisfied. Moreover, because $ax' \geq b$ is valid for every $x' \in \textnormal{conv}(X(G+E'))$, we have
\[
a'x = \sum_{f \in E^C \setminus E'}a_fx_f \geq b = b\left( \sum_{f \in E'} x_{f} - |E'| + 1  \right),
\]
as desired.

Now we present a set of $|E^c|$ affinely independent vectors of~$\textnormal{conv}(X(G))$  satisfying~$I$ at equality.  For any facet-defining inequality $ax \geq b$ of $\textnormal{conv}(X(G+E'))$, there exists an affinely independent set of vectors $W = \{w^j\}_{j=1}^d \subseteq \{0,1\}^{|E^C \setminus E'|}$ 
that satisfy $ax = b$.  Let $T = \{t^j\}_{j=1}^d \subseteq \{0,1\}^{|E^C|}$ be such that
\[
t^j_f = 
\left\{
\begin{array}{ll}
w^j_f, & \quad f \in E^C \setminus E', \\ 
1		, & \quad f \in E'.
\end{array}
\right.
\]
That is, $t^j$ is an embedding of~$w^j$ in $\{0,1\}^{|E^C|}$ in which coordinates associated with edges in~$E'$ are set to 1.  Note that every $t^j$ belongs to $\textnormal{conv}(X(G))$ because~$G(t^j)$ is isomorphic to $\left(G+ E'\right)(w^j)$, which is chordal. Moreover, by construction, $a't^j = \sum_{f \in E^C \setminus E'} a'_f t^j_{f} = b$ and $\sum_{f \in E'} t^j_{f} = |E'|$ for each~$j = 1, \ldots, d$, so solutions of~$T$ satisfy~$I$ at equality. Finally, note that the embedding operation in the elements of~$W$ is such that~$T$ is also affinely independent.

Let $Z= \{z^f\}_{f \in E'} \subseteq \{0,1\}^{|E^C|}$ be such that
\[ 
z^f_{f'} = 
\left\{
\begin{array}{ll}
1 & \quad f' \in E' \setminus  f, \\ 
0		, & \quad \text{otherwise}.
\end{array}
\right.
\]
As $G + E' \setminus \{f\}$ is chordal by hypothesis, it follows that each solution~$z^f$ belongs to~$\textnormal{conv}(X(G))$. Moreover, by construction, $a'z^f = \sum_{f' \in E^C \setminus E'} a'_{f'} t^j_{f'} = 0$ and $\sum_{f' \in E'} z^f_{f'} = |E'|-1$ for each~$f \in E'$, so each solution of~$Z$ satisfies~$I$ at equality. Let $\alpha_f, f \in E^C \setminus E'$, and $\beta_f, f \in E'$, be constants for which
\[ 
\sum_{f \in E^C \setminus E'} \alpha_j t^j + \sum_{f \in E'} \beta_{f} z^f = 0, \quad \sum_{f \in E^C \setminus E'} \alpha_j + \sum_{f \in E'} \beta_{f} = 0.  
\]
For each $f \in E'$, we have~$z^f_f = 0$, whereas $s_f = 1$ for $s \in T \cup Z \setminus \{z^f\}$. Therefore, we must have $\sum_{f' \in E^C \setminus E'} \alpha_{f'}  + \sum_{f' \in E' \setminus \{f\} } \beta_{f'}  = 0$ and, as a consequence, $\beta_{f} = 0$ for each $f \in E'$. Finally, as $T$ is affinely independent, we have that $\alpha_f = 0$, $f \in E^C \setminus E'$. It follows that~$I$ is a faced-defining inequality for $\textnormal{conv}(X)$, as desired. $\Halmos$
\endproof

\proof{Proof of Theorem \ref{thm:ChordFacets}.}
Without loss of generality, let $f^* = \{ v_0,v_{t-1} \}, t < k$, be the chord considered and~$I$ be the associated lifted inequality $a'x - bx_{f^*} \geq 0$ for~$X(G)$.
For any vector $x \in \{0,1\}^{|E^c|}$ and set $E' \subseteq E^c$, let $x[E']$ be the projection of~$x$ onto the coordinates corresponding to fill edges in $E '$.

First, we claim that~$I$ is valid for~$X(G)$.  Take any solution $x^0 \in X(G)$. If $x^0_{f^*} = 0$, then~$I$ reduces to $a'x^0 \geq 0$, which must be satisfied because $a' \geq 0$ and $x^0 \geq 0$.  If $x^0_{f^*} = 1$, then $I$ reduces to $a'x \geq b$.  Since $G(x^0)$ is chordal, by Lemma~\ref{lem:InducedSubgraph} we have that 
$G(x^0)[C']$
is also chordal, and therefore $x^0[\interior(C')] \in X(G')$. Since $I$ is facet-defining for 
$\mathrm{conv}(X(G'))$, we have $a'x^0 = ax^0[\interior(C')] \geq b$.  As $x^0$ was chosen arbitrarily among all feasible solutions in~$X(G)$, it follows that~$I$ is valid for~$X(G)$. 

Let  $C' = \left(v_0,v_1,\ldots,v_{t-1}\right),C'' = \left(v_0,v_{t-1},v_{t},\ldots,v_{k-1}\right)$, and 
\[ 
\mathrm{Cross}(f^*) = \{ f : f \cap \{ v_1,v_2,\ldots,v_{t-2} \} \neq \emptyset  , f \cap \{ v_{t},v_{t+1},\ldots,v_{k-1}  \} \neq \emptyset \};
\]
that is, $\mathrm{Cross}(f^*)$ contains all fill edges in~$\interior(C)$ containing exactly one vertex incident in~$C' \setminus C''$ and one vertex incident in~$C'' \setminus C'$. 
Set~$\interior(C)$ can therefore be partitioned as follows:
\[
\interior(C) = 
 \:
\interior(C')
\: \dot{\cup} \:
\interior(C'')
\: \dot{\cup} \:
f^* 
\: \dot{\cup} \:
\mathrm{Cross}(f^*)
\]

Let $F^I$ be the set of points in $\mathrm{conv}(X(G))$ that satisfy $I$ at equality, and $\mu x \geq \mu_0$ be a valid inequality for $\mathrm{conv}(X(G))$ satisfied at equality by each $x \in F^I$.  Inequality $\mu x \geq \mu_0$ can be written as
\[ 
\sum_{f \in \mathrm{int}(C')} \mu_f x_f + \sum_{f \in \mathrm{int}(C'')} \mu_f x_f + \mu_{f^*} x_{f^*} + \sum_{f \in \mathrm{Cross}(f^*)} \mu_f x_f \geq \mu_0 
\]

\begin{claim}
For every $f$ in $\mathrm{Cross}(f^*),  \mu_f = 0$.
\end{claim}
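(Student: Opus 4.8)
The plan is to exhibit, for each fixed $f=\{v_p,v_q\}\in\mathrm{Cross}(f^*)$ (so $1\le p\le t-2$ and $v_q$ is an interior vertex of $C''$), two points of $F^I$ that differ only in coordinate $f$; since $\mu$ is constant on $F^I$, this forces $\mu_f=0$. The two points will be $y$ and $y-e^f$ for a carefully chosen $y\in F^I$ in which $f$ is \emph{non-critical}, i.e.\ $G(y)-f$ is still chordal. By Lemma~\ref{lem:CriticalEdge}, if $f$ were critical in the chordal graph $G(y)$ the emerging chordless cycle would be a $4$-cycle $v_p-a-v_q-b-v_p$ with $a,b\in N_{G(y)}(v_p)\cap N_{G(y)}(v_q)$ and $\{a,b\}\notin E(y)$; hence it suffices to build $y$ so that $v_p$ is \emph{simplicial} in $G(y)$, as then $N_{G(y)}(v_p)\cap N_{G(y)}(v_q)\subseteq N_{G(y)}(v_p)\setminus\{v_q\}$ is a clique and $f$ is non-critical.

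To build $y$, I would first pick a triangulation $T^*$ of the cycle $C'$ that lies on the facet $\{x\in\mathrm{conv}(X(G')):ax=b\}$ and uses the chord $\{v_{p-1},v_{p+1}\}$ — equivalently, cuts off $v_p$ as an ear, so that $N_{G'(T^*)}(v_p)=\{v_{p-1},v_{p+1}\}$. Such a $T^*$ exists: were the facet $\{ax=b\}$ contained in the hyperplane $\{x_{\{v_{p-1},v_{p+1}\}}=0\}$, then, $\mathrm{conv}(X(G'))$ being full-dimensional, the valid inequalities $ax\ge b$ and $-x_{\{v_{p-1},v_{p+1}\}}\ge 0$ would define the same facet and hence be positive scalar multiples of one another, contradicting $a\ge\mathbf 0$, $a\ne\mathbf 0$. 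Then let $y$ be the completion of $G$ that agrees with $T^*$ on $\interior(C')$, contains $f^*$, and makes $v_q$ adjacent to \emph{every} other vertex of $G$ (in particular $y_f=1$, $C''$ is triangulated as the fan at $v_q$, and $v_q$ is joined to every interior vertex of $C'$).

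It then remains to check three things. (i)~$y\in X(G)$: $v_q$ is a universal vertex of $G(y)$, and $G(y)-v_q$ is $G'(T^*)$ with two pendant paths attached at $v_0$ and $v_{t-1}$, which is chordal; since adjoining a universal vertex preserves chordality (Lemma~\ref{lem:ChordalExtensions}), $G(y)$ is chordal. (ii)~$y\in F^I$: $y_{f^*}=1$, and the only $\interior(C')$-edges present in $G(y)$ are those of $T^*$, so $a'y=\sum_{f'\in\interior(C')}a_{f'}y_{f'}=aT^*=b=b\,y_{f^*}$. (iii)~$v_p$ is simplicial in $G(y)$: by construction $N_{G(y)}(v_p)=\{v_{p-1},v_{p+1},v_q\}$, which is a clique because $v_{p-1}\sim v_{p+1}$ (from $T^*$) and $v_q$ is universal. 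Hence $G(y)-f$ is chordal, i.e.\ $y-e^f\in X(G)$; deleting $f$ changes neither $y_{f^*}$ nor the $\interior(C')$-coordinates, so $y-e^f\in F^I$ as well, and $\mu_f=\mu y-\mu(y-e^f)=\mu_0-\mu_0=0$. As $f$ was arbitrary, $\mu_f=0$ for all $f\in\mathrm{Cross}(f^*)$.

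I expect the delicate point to be precisely (iii) together with the existence of $T^*$: the naive idea of ``toggling $f$ on a fan triangulation of $C$'' fails, because adding or removing a single chord of a minimally triangulated cycle generically creates a chordless $4$-cycle, so $y$ must be engineered to make $f$ a \emph{redundant} edge of $G(y)$. Making $v_p$ simplicial accomplishes this, but ties the $\interior(C')$-part of $y$ to a facet triangulation of $C'$ that realizes $v_p$ as an ear, and establishing the existence of such a triangulation is the only place where the hypotheses that $ax\ge b$ is facet-defining and $a\ge\mathbf 0$ enter.
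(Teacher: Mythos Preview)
Your overall strategy coincides with the paper's: produce two points of $F^I$ that differ only in the Cross coordinate $f=\{v_p,v_q\}$ by building a chordal $y\in F^I$ in which $f$ is non-critical. The paper extends a facet point on $C'$ by filling \emph{all} of $\interior(C'')$ and all of $\mathrm{Cross}(f^*)$, whereas you make only $v_q$ universal; both extensions work, and once a suitable $T^*$ is in hand your verification of (i)--(iii) is clean.

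The gap is in the existence of $T^*$. Your argument says: if the facet $\{ax=b\}$ were contained in $\{x_{\{v_{p-1},v_{p+1}\}}=0\}$, then $ax\ge b$ and ``$-x_{\{v_{p-1},v_{p+1}\}}\ge 0$'' would be positive multiples of one another, contradicting $a\ge\mathbf 0$. But $-x_f\ge 0$ is not a valid inequality for $\mathrm{conv}(X(G'))$ at all; the valid inequality supported on $\{x_f=0\}$ is $x_f\ge 0$, and then the conclusion is only that $a=\lambda\, e^{\{v_{p-1},v_{p+1}\}}$ with $b=0$ and $\lambda>0$ --- which is perfectly consistent with $a\ge\mathbf 0$, $a\ne\mathbf 0$. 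This case genuinely occurs: for $|C'|\ge 5$ the lower bound $x_{\{v_{p-1},v_{p+1}\}}\ge 0$ \emph{is} facet-defining for the cycle polytope (e.g.\ for the $5$-cycle one can list five affinely independent chordal completions avoiding that chord), and then no chordal completion on that facet can have $v_p$ as an ear.

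The repair is short. First, when the facet is \emph{not} contained in $\{x_{\{v_{p-1},v_{p+1}\}}=0\}$, pick any integral facet point with that chord present and delete all remaining chords through $v_p$; the result is still chordal (now $v_p$ is an ear) and, because $a\ge\mathbf 0$, still satisfies $ax=b$ --- this gives your $T^*$. Second, in the residual case $a=\lambda e^{\{v_{p-1},v_{p+1}\}}$, $b=0$, swap the roles of the two endpoints: take the fan at $v_p$ on $C'$ (which lies on the facet since $x_{\{v_{p-1},v_{p+1}\}}=0$), set $y_{f^*}=1$, fill all of $\interior(C'')$, and give $v_p$ \emph{every} Cross edge. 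Then $N_{G(y)}(v_q)=\bigl(V(C'')\setminus\{v_q\}\bigr)\cup\{v_p\}$ is a clique, so $v_q$ is simplicial and $f$ is non-critical, yielding $\mu_f=0$ as before.
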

\begin{proof}{Proof.}
Take any vector $\tilde{w}^b$ in $X(G')$ 
such that $a\tilde{w}^b = b$. Moreover, let us assume that the fill in set associated with~$\tilde{w}^b$ is minimal; note that if $\tilde{w}^b$ does not satisfy this condition, then it can be substituted for some other feasible solution $w'$, $aw' = a\tilde{w}^b = b$, associated with a subset of the fill in edges represented by~$\tilde{w}^b$.  

From Proposition~\ref{prop:inq3}, it follows that~$G'(\tilde{w}^b)$ must contain an edge $\{v_{b-1},v_{b+1}\}$. Moreover, from Lemma~\ref{lem:InducedSubgraph}, we have that $w' = \tilde{w}^b[E^c(G') \setminus \{ \{v_a,v_b\}: v_a \in V(G') \}]$ is associated with a chordal completion of~$G'[V(G) \setminus v_b]$. Because $v_{b-1}$ and $v_{b+1}$ are the only neighbours of~$v_b$ in~$G'$, the edges of~$w'$ are sufficient to make~$G'$ chordal; therefore, we have that the neighbours of~$v_b$ in~$G'[\tilde{w}^b]$ are exactly its neighbours in~$C'$.


Fix $f' \in \mathrm{Cross}(f^*), f' = \{v_a,v_b\}, 1 \leq a \leq t-2, t \leq b \leq k-1$.  Let $\tilde{w}$ in $X(G)$ be defined by
\[
\tilde{w}_f = 
\left\{
\begin{array}{ll}
\tilde{w}^b_f, & \quad f \in \interior(C') \\ 
1			, & \quad f = f^* \\
1		, & \quad f \in \interior(C'') \\
0			, & \quad f = \{ v_a,v_b \} \\
1		, & \quad f \in \mathrm{Cross}(f^*) \backslash \{ v_a,v_b \}
\end{array}
\right.
\]
By Lemma~\ref{lem:ChordalExtensions}, $G\left(\tilde{w} + e^{\{ v_a,v_b \}} \right)$ is a chordal graph. We claim that $\{ v_a,v_b \}$ cannot be critical, and therefore $G\left( T^1\left(\tilde{w}^b\right) \right)$ is chordal. Suppose by contradiction that this is not true. Then,  
upon the removal of $\{ v_a,v_b \}$, by Lemma~\ref{lem:CriticalEdge} there must exists vertices $v',v''$ for which $(v_a,v',v_b,v'')$ is a chordless cycle.  This can only happen if there exists a pair of vertices in~$N(v_b)$  which are not adjacent.  However, $N(v_b) = \{ v_{b-1}, v_{b+1} \} \cup \{ v_t, v_{t+1}, \ldots, v_{k-1} \}$, which, by construction, is a clique. 

Therefore, we have that $\tilde{w}$ and $\tilde{w} + e^{\{ v_a,v_b \}}$ belong to~$\mathrm{conv}(X(G))$.  Additionally, both solutions satisfy~$I$ at equality and belong thus to~$F^I$.  Finally, as 
\[
\mu(\tilde{w} + e^{\{ v_a,v_b \}} - \tilde{w}) = \mu_{\{ v_a,v_b \}} = 0,
\]
it follows that $\mu_{f'} = 0$ for every $f' \in \mathrm{Cross}(f^*)$. $\Halmos$
\end{proof}


\begin{claim}
For every~$f \in \mathrm{int}(C'')$,  $\mu_f = 0$.
\end{claim}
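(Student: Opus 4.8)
The plan is to produce, for each fill edge $f \in \interior(C'')$, two $0/1$ feasible completions of $G$ that both lie on the face $F^I$ and differ in exactly the coordinate $f$; since every point of $F^I$ satisfies $\mu x = \mu_0$, subtracting the two equalities forces $\mu_f = 0$. If $\interior(C'') = \emptyset$ (i.e.\ $|V(C'')| \le 3$) there is nothing to prove, so assume $|V(C'')| \ge 4$. The construction reuses the idea from the previous claim, but now makes $V(C'')$ a \emph{complete} subgraph rather than leaving one crossing edge off.

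First I would fix any vertex $w^{b}$ of $\mathrm{conv}(X(G'))$ with $a w^{b} = b$; such a $0/1$ point exists because $ax \ge b$ is facet-defining for $\mathrm{conv}(X(G'))$, so the facet it induces is a nonempty face and hence contains a vertex of the polytope. Define $\hat w \in \{0,1\}^{|E^c|}$ by $\hat w_f = w^{b}_f$ for $f \in \interior(C')$, $\hat w_{f^*} = 1$, $\hat w_f = 1$ for every $f \in \interior(C'')$, and $\hat w_f = 0$ for every $f \in \mathrm{Cross}(f^*)$; here I use the partition $\interior(C) = \interior(C') \: \dot{\cup} \: \interior(C'') \: \dot{\cup} \: \{f^*\} \: \dot{\cup} \: \mathrm{Cross}(f^*)$ established earlier. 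I would then check three facts about $G(\hat w)$: its subgraph induced on $V(C'')$ is the complete graph on $V(C'')$ (every pair there is either an exterior edge of $C$ or a fill edge set to $1$); its subgraph induced on $V(C')$ coincides with $G'(w^{b})$, which is chordal since $w^b \in X(G')$; and, because all $\mathrm{Cross}(f^*)$ coordinates are $0$, there is no edge of $G(\hat w)$ joining $\{v_1,\dots,v_{t-2}\}$ to $\{v_t,\dots,v_{k-1}\}$. These three facts yield a perfect elimination ordering of $G(\hat w)$: eliminate $v_{k-1}, v_{k-2}, \dots, v_t$ first --- each such $v_j$ is simplicial because its whole neighborhood lies inside the clique $V(C'')$, so no new edge is created among $V(C')$ --- and then append a perfect elimination ordering of the chordal graph $G'(w^{b})$. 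Hence $\hat w \in X(G)$, and since $\hat w_{f^*} = 1$ and $a'\hat w = \sum_{f \in \interior(C')} a_f w^{b}_f = a w^{b} = b$, we have $\hat w \in F^I$.

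Finally, fix $f = \{v_a, v_b\} \in \interior(C'')$. Since $f \ne f^*$, at most one of $v_a,v_b$ can equal $v_0$ or $v_{t-1}$, so at least one endpoint --- say $v_b$ --- lies in $\{v_t,\dots,v_{k-1}\}$. Put $\hat w^{f} := \hat w - e^{f}$. Its subgraph induced on $V(C'')$ is the complete graph minus the single edge $f$, still chordal, and I would certify chordality of $G(\hat w^{f})$ the same way: $v_b$ is simplicial (its neighborhood is $V(C'')\setminus\{v_a,v_b\}$, a clique, and $v_b$ has no neighbor outside $V(C'')$ once $\mathrm{Cross}(f^*)$ is off), eliminating $v_b$ restores a complete graph on $V(C'')\setminus\{v_b\}$, after which $v_t,\dots,v_{k-1}$ (other than $v_b$) and then a perfect elimination ordering of $G'(w^{b})$ finish the job. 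Thus $\hat w^{f} \in X(G)$, and since $\hat w^{f}_{f^*} = 1$ and $a'\hat w^{f} = a w^{b} = b$ (toggling a coordinate in $\interior(C'')$ changes neither $f^*$ nor any $\interior(C')$ coordinate), $\hat w^{f} \in F^I$. Both $\hat w$ and $\hat w^{f}$ lie on $F^I$, so $\mu\hat w = \mu_0 = \mu\hat w^{f}$; subtracting gives $\mu_f = \mu(\hat w - \hat w^{f}) = 0$. As $f \in \interior(C'')$ was arbitrary, the claim follows. The only delicate point is the observation that $f \ne f^*$ forces a simplicial vertex among $v_t,\dots,v_{k-1}$ rather than on the boundary pair $\{v_0,v_{t-1}\}$; everything else is the same bookkeeping used for $\mathrm{Cross}(f^*)$, with a complete (instead of merely near-complete) subgraph on $V(C'')$, and I do not anticipate a genuine obstacle.
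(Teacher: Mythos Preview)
Your proposal is correct and follows essentially the same approach as the paper: both construct, for each $f\in\interior(C'')$, a pair of chordal completions in $F^I$ that differ only in the coordinate~$f$, built from a face vertex $w^b$ of $\mathrm{conv}(X(G'))$ with $a w^b=b$, the chord $f^*$ set to~$1$, all of $\interior(C'')$ filled (modulo the edge $f$), and $\mathrm{Cross}(f^*)$ set to~$0$. The only cosmetic difference is that the paper certifies chordality of the ``minus-$f$'' completion by invoking the critical-edge lemma (arguing that removing $f$ cannot create a chordless $4$-cycle because the relevant neighborhood is a clique), whereas you exhibit a perfect elimination ordering directly by first eliminating the simplicial vertex $v_b\in\{v_t,\dots,v_{k-1}\}$; both arguments are valid and equally short.
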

\begin{proof}{Proof.}
Fix $f' = \{ z_1,z_2 \} \in \interior(C'')$ and any solution $\tilde{w}^b$ in $X(G')$ 
such that $a\tilde{w}^b = b$. Let $\tilde{w}$ be defined by 
\[
\tilde{w}_f = 
\left\{
\begin{array}{ll}
\tilde{w}^b_f, & \quad f \in \interior(C') \\ 
1			, & \quad f = f^* \\
0		, & \quad f  \in \mathrm{Cross}(f^*) \\
0			, & \quad f = f' \\
1		, & \quad f \in \interior(C'') \backslash f'
\end{array}
\right.
\]
We claim that $G(\tilde{w} + e^{f'})$ is chordal.  Consider the ordering~$\pi$ of the vertices in $V(G)$ consisting of a perfect elimination order of the vertices in $V(C')$ (which must exists because $G[V(C')](\tilde{w}^b)$ is chordal), followed by an arbitrary ordering of the remaining vertices. Because the neighbourhood of each vertex in $V(C'') \setminus V(C')$ is a clique in~$V(C'')$, it follows by construction that~$\pi$ is a perfect elimination ordering for the vertices of~$G(\tilde{w} + e^{f'})$.


We claim now that~$G(\tilde{w})$ is also chordal.  If not, by Lemma~\ref{lem:CriticalEdge} there must exist a chordless cycle $(z_1,v',z_2,v'')$ created upon the removal of $f'$ from $G(\tilde{w} + e^{f'})$.  
At least one among $z_1$ and $z_2$ is contained in $\{v_t, v_{t+1}, \ldots, v_{k-1}\}$;  let~$z_1$ be one such vertex.  The neighborhood of $z_1$ in $G(\tilde{w})$ is $V(C'')$, and as~$G(\tilde{w})[V(C'')]$ is a clique, we must have $\{v',v''\} \in E(G(\tilde{w}))$, a contradiction.

Therefore,  we have that $\tilde{w}$ and $\tilde{w} + e^{f'}$ belong to $\mathrm{conv}(X(G))$ and, by construction, to~$F^I$. Similar arguments to those used in the previous claim allow us to conclude that
$\mu_{f'} = 0$ for every $f'\in \mathrm{int}(C'')$.$\Halmos$
\end{proof}

\begin{claim}
$\mu_0 = 0$.
\end{claim}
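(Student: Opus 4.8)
The plan is to exhibit a single point of $F^I$ at which $\mu$ evaluates to $0$; since every point of $F^I$ satisfies $\mu x=\mu_0$, this forces $\mu_0=0$. In view of the two preceding claims, the right candidate is a chordal completion of $G$ that avoids both $f^*$ and every edge of $\interior(C')$ and whose remaining edges all lie in $\mathrm{Cross}(f^*)\cup\interior(C'')$.

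First I would take $F$ to be the fan completion centred at the vertex $v_t\in V(C'')\setminus V(C')$, namely
\[
F:=\bigl\{\,\{v_t,v_j\}:j\in\{0,1,\dots,k-1\}\setminus\{t-1,t,t+1\}\,\bigr\},
\]
and check that $G(x(F))$ is chordal: reading the cycle off away from $v_t$, the list $(v_{t+1},v_{t+2},\dots,v_{k-1},v_0,v_1,\dots,v_{t-1},v_t)$ is a perfect elimination ordering, since each non-$v_t$ vertex has, at the moment it is eliminated, exactly two remaining neighbours — the next vertex in the list and $v_t$ — which are adjacent through a cycle edge or a fan edge. Next I would verify that $x(F)\in F^I$: because $v_t\notin V(C')$, no element of $F$ lies in $\interior(C')$, so $a'x(F)=\sum_{f\in\interior(C')}a_f x(F)_f=0$; and because $f^*=\{v_0,v_{t-1}\}$ does not contain $v_t$, we have $x(F)_{f^*}=0$, whence $a'x(F)-b\,x(F)_{f^*}=0$, i.e. $x(F)$ meets $I$ at equality.

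Finally I would read off $\mu x(F)=\sum_{f\in F}\mu_f$ through the partition $\interior(C)=\interior(C')\,\dot{\cup}\,\interior(C'')\,\dot{\cup}\,\{f^*\}\,\dot{\cup}\,\mathrm{Cross}(f^*)$: a fan edge $\{v_t,v_j\}$ with $1\le j\le t-2$ joins $v_j\in V(C')\setminus V(C'')$ to $v_t\in V(C'')\setminus V(C')$ and so lies in $\mathrm{Cross}(f^*)$, while a fan edge with $j=0$ or $t+2\le j\le k-1$ joins two vertices of $V(C'')$ and is a chord of $C''$, hence lies in $\interior(C'')$. Thus $F\subseteq\mathrm{Cross}(f^*)\cup\interior(C'')$, and by the two preceding claims $\mu_f=0$ for every $f\in F$, so $\mu x(F)=0=\mu_0$, which completes the claim and, together with the remaining analysis of $\interior(C')$ and $f^*$, the proof of the theorem.

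The step I expect to demand the most care is this last bit of bookkeeping: confirming that every fan edge falls into $\mathrm{Cross}(f^*)$ or $\interior(C'')$ (and that none of them coincides with $f^*$, with a cycle edge of $G$, or with an element of $\interior(C')$), together with checking the extreme cases $t=3$ and $t=k-1$, in which $\interior(C')$ or $\interior(C'')$ happens to be empty but the construction and the argument still go through verbatim.
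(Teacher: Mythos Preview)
Your proposal is correct and follows essentially the same approach as the paper: exhibit a fan completion centred at a vertex of $V(C'')\setminus V(C')$, which is chordal, satisfies $I$ at equality, and uses only fill edges in $\mathrm{Cross}(f^*)\cup\interior(C'')$ where $\mu$ has already been shown to vanish. The only difference is cosmetic: you centre the fan at $v_t$ while the paper centres it at $v_{k-1}$; both vertices lie in $V(C'')\setminus V(C')$, and the argument is identical.
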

\begin{proof}{Proof.}
Consider the solution $\hat{w}$ defined by
\[
\hat{w}_f = 
\left\{
\begin{array}{ll}
1, & \quad v_{k-1} \in f \\ 
0, & \quad \mbox{ otherwise }		
\end{array}
\right.
\]
This solution is isomorphic to the solution~$x'$ constructed in the proof of Proposition~\ref{prop:inq1}, so $G(\hat{w})$ is chordal. By construction, because $f^* = \{ v_0,v_{t-1} \}$ for $t < k$,
$\hat{w}_{f^*} = 0$. Moreover, as $a_f = 0$ for $f \notin \interior(C')$, we have $a \hat{w} = 0$,
and therefore $a \hat{w} - b \hat{w}_{f^*} = 0$.  Substituting $\hat{w}$ into $\mu x = \mu_0$ yields
\[ 
\mu_0 = \mu \hat{w} =   \sum_{f \in \mathrm{int}(C')} \mu_f \hat{w}_f + \mu_{f^*} \hat{w}_{f^*} = 0,
\]
as desired. $\Halmos$
\end{proof}

\begin{claim}
There is a $\lambda \in \mathbb{R}$ such that $\mu_{f^*} = -\lambda b$ and $\mu_f = \lambda a_f$
for every $f$ in $\interior(C')$.
\end{claim}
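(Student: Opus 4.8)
The plan is to finish the claim by pinning down the only coefficients still unconstrained after the three preceding claims, namely $\mu_{f^*}$ and $\{\mu_f : f \in \interior(C')\}$, using the affinely independent points that witness $ax\ge b$ as a facet of $\mathrm{conv}(X(G'))$. Since $G'$ is a cycle graph on $|V(C')| = t$ vertices with $t \ge 4$ (the case $t=3$ being vacuous, as then $G'=K_3$ has no fill edges), $\mathrm{conv}(X(G'))$ is full-dimensional of dimension $d := |\interior(C')|$ by the dimension theorem of \S\ref{sec:polytopeMCCP}; hence there exist affinely independent $w^1,\dots,w^d \in X(G')$ with $a w^j = b$ for all $j$. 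For each $j$ I would define the lift $\tilde w^j \in \{0,1\}^{|E^c(G)|}$ that agrees with $w^j$ on $\interior(C')$, equals $1$ on $\{f^*\}\cup\interior(C'')$, and equals $0$ on $\mathrm{Cross}(f^*)$.

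Next I would check that every lift is feasible and tight for $I$. Feasibility: the edges of $G(\tilde w^j)$ inside $V(C')$ are exactly those of $G'(w^j)$, the vertex set $V(C'')$ induces a clique (since $\interior(C'')$ supplies all chords of $C''$ and $\exterior(C'') \subseteq \exterior(C)\cup\{f^*\}$), and $V(C')\cap V(C'') = \{v_0,v_{t-1}\}$ with $f^*=\{v_0,v_{t-1}\}$ present. A perfect elimination ordering is then obtained by first eliminating $v_t,v_{t+1},\dots,v_{k-1}$: at each step such a vertex has all of its remaining neighbours inside the clique $V(C'')$, because no edge of $\mathrm{Cross}(f^*)$ or $\interior(C')$ is incident to it, so no fill edge is created; after these eliminations the graph left is precisely the chordal graph $G'(w^j)$, which has its own perfect elimination ordering. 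Hence $\tilde w^j \in X(G)$. Tightness: since $a'_f = a_f$ on $\interior(C')$ and $a'_f = 0$ elsewhere, $a'\tilde w^j - b\,\tilde w^j_{f^*} = \sum_{f\in\interior(C')} a_f w^j_f - b = a w^j - b = 0$, so $\tilde w^j \in F^I$.

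Finally I would conclude by linear algebra. From $\tilde w^j \in F^I$ we get $\mu\tilde w^j = \mu_0 = 0$; substituting the entries of the lift and the already-established facts $\mu_f = 0$ on $\interior(C'')\cup\mathrm{Cross}(f^*)$ reduces this to $\sum_{f\in\interior(C')}\mu_f w^j_f + \mu_{f^*} = 0$ for every $j$. Thus the affine functional $x \mapsto \sum_{f\in\interior(C')}\mu_f x_f + \mu_{f^*}$ on $\mathbb{R}^{\interior(C')}$ vanishes at the $d$ affinely independent points $w^1,\dots,w^d$, hence on their affine hull, which is a $(d-1)$-dimensional flat contained in the $(d-1)$-dimensional hyperplane $\{x : a x = b\}$ (with $a\neq 0$, since $ax\ge b$ is a proper facet) and therefore equals it. An affine functional vanishing on a hyperplane is a scalar multiple of that hyperplane's defining functional, so there is $\lambda\in\mathbb{R}$ with $\mu_f = \lambda a_f$ for all $f\in\interior(C')$ and $-\mu_{f^*} = \lambda b$, i.e.\ $\mu_{f^*} = -\lambda b$, which is the claim; combined with the earlier claims this makes $(\mu,\mu_0)$ a scalar multiple of the coefficient vector and right-hand side of $I$, completing the proof that $I$ is facet-defining. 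The only step that needs genuine care is the chordality verification for the lifts in the second paragraph, but it is a short perfect-elimination-ordering argument of the same kind used repeatedly in \S\ref{sec:cycleGraphFacets} and \S\ref{ec:generalProperties}; everything else is routine dimension counting.
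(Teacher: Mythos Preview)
Your proposal is correct and follows essentially the same route as the paper: lift the affinely independent facet witnesses $w^j\in X(G')$ with $aw^j=b$ to points in $F^I\subseteq X(G)$, use the earlier claims ($\mu_f=0$ on $\interior(C'')\cup\mathrm{Cross}(f^*)$ and $\mu_0=0$) to reduce $\mu\tilde w^j=\mu_0$ to $\sum_{f\in\interior(C')}\mu_f w^j_f=-\mu_{f^*}$, and then invoke the facet-defining property of $ax\ge b$ on the full-dimensional polytope $\mathrm{conv}(X(G'))$ to obtain the scalar $\lambda$. The only cosmetic difference is that the paper lifts by setting \emph{all} coordinates outside $\interior(C')$ to~$1$ (appealing implicitly to Corollary~\ref{cor:ChordalExtensions} for chordality), whereas you set $\mathrm{Cross}(f^*)$ to~$0$ and supply an explicit PEO; either choice works since $\mu$ vanishes on those coordinates.
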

\begin{proof}{Proof.}
Let $\tilde{F}^I$ be the subset of $F^I$ containing only solutions~$x$ such that $x_f = 1$ for every edge $f$ which does not belong to $\interior(C')$.
For every $x \in \tilde{F}^I$, we have
\begin{eqnarray*} 
\mu x = \sum_{f \in \interior(C')} \mu_f x_f + \mu_{f^*} 1  = 0 \implies \\ 
\sum_{f \in \interior(C')} \mu_f x_f = - \mu_{f^*}.
\end{eqnarray*}
Consequently, we have that every solution~$y$ in $X(G')$ that satisfies $ay = b$ must also satisfy
$\mu[\interior(C')] y = - \mu_{f^*}$.
As $ay \geq b$ is facet-defining for~$X(G')$, there exists some $\lambda$ such that $-\mu_{f^*} = \lambda b$ and $\mu_f = \lambda' a_f$ for every $f$ in $E(G')^c$, as desired. $\Halmos$
\end{proof}

From the previous claims, we conclude that~$I$ is a facet-defining inequality for~$\mathrm{conv}(X(G))$. $\Halmos$
\endproof

\bigskip

\section{Additional Proofs for Section~\ref{sec:SeparationComplexity}}
\label{ec:separationProofsAlgorithms}



\medskip

We first provide a lemma.

\begin{lemma}
\label{lemma:minimali1}
For any fractional point~$x \in [0,1]^{m^c}$, if~$x \notin \textnormal{conv}(X(G))$, then there is a chordless cycle~$C$ in $G + E(x)$ whose associated inequality of type~(\ref{ineq:ci}) is violated by~$x$.
\end{lemma}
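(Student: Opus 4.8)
The plan is a direct argument: starting from a violated chordal inequality, repeatedly ``split'' its sequence along chords until the sequence becomes a chordless cycle of $G+E(x)$, in the spirit of the inductive splitting in the proof of Lemma~\ref{lem:ThreeInteriorEdges}. For a fractional $x$ I read ``$G+E(x)$'' as the graph $(V,\,E\cup\{f\in E^c:x_f>0\})$ obtained by adjoining every positively weighted fill edge. Abbreviate, for a sequence $C$ with $\interior(C)\cap E=\emptyset$,
\[
\psi_x(C)\ :=\ (|C|-3)\Bigl(\sum_{f\in F(C)}x_f-|F(C)|+1\Bigr)-\sum_{f\in\interior(C)}x_f ,
\]
so that inequality~(\ref{ineq:ci}) for $C$ is violated by $x$ precisely when $\psi_x(C)>0$. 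First I would record two observations that narrow the candidates: if $\psi_x(C)>0$ then (a)~every $f\in F(C)$ satisfies $x_f>0$, for otherwise $\sum_{f\in F(C)}x_f-|F(C)|+1\le0$ and hence $\psi_x(C)\le0$; consequently $\exterior(C)\subseteq E\cup\{f:x_f>0\}$ and $C$ is a cycle of $G+E(x)$; and (b)~$|C|\ge4$, because $|C|\le3$ forces $|C|-3\le0$ and $\interior(C)=\emptyset$.

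Next, since $x\notin\textnormal{conv}(X(G))$, by Proposition~\ref{prop:correctnessIPMCCP} at least one inequality~(\ref{ineq:ci}) is violated by $x$; among all violated ones, choose a sequence $C^\star$ with $|V(C^\star)|$ minimum. By~(a), $C^\star$ is a cycle of $G+E(x)$, and the claim is that it is chordless there. Suppose not: then $C^\star$ has a chord $e=\{v_p,v_q\}\in E\cup\{f:x_f>0\}$, and since $\interior(C^\star)\cap E=\emptyset$ we must have $x_e>0$. Split $C^\star$ along $e$ into $C^1=(v_p,v_{p+1},\dots,v_q)$ and $C^2=(v_q,v_{q+1},\dots,v_p)$, the decomposition appearing in the proof of Lemma~\ref{lem:ThreeInteriorEdges}; there it is shown that $\interior(C^\ell)\cap E=\emptyset$, that $\interior(C^1)\cap\interior(C^2)=\emptyset$ with $\interior(C^1)\cup\interior(C^2)\subseteq\interior(C^\star)\setminus\{e\}$, that $\exterior(C^\star)$ is the disjoint union of $\exterior(C^1)\setminus\{e\}$ and $\exterior(C^2)\setminus\{e\}$, that $|C^1|+|C^2|=|C^\star|+2$, and --- $e$ being a positively weighted fill edge --- that $e\in F(C^1)\cap F(C^2)$. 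Substituting these relations into $\psi_x$, together with $x\ge0$, $x_e\le1$ and the integrality of $|C^1|-3$ and $|C^2|-3$, one would argue --- exactly as Lemma~\ref{lem:ThreeInteriorEdges} propagates a shortfall of interior edges to one side of a cut --- that $\psi_x(C^1)>0$ or $\psi_x(C^2)>0$. As $C^1$ and $C^2$ each have strictly fewer vertices than $C^\star$, this contradicts minimality; hence $C^\star$ is a chordless cycle of $G+E(x)$ whose inequality~(\ref{ineq:ci}) is violated, which is what we want.

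The step I expect to be the main obstacle is the computation establishing $\psi_x(C^1)>0$ or $\psi_x(C^2)>0$, and within it the degenerate case of a ``short ear'' chord $e=\{v_{i-1},v_{i+1}\}$: there one side of the split is a triangle, whose inequality~(\ref{ineq:ci}) is vacuous, so the whole violation must be carried by the long side $C^2$ of length $|C^\star|-1$, and one must check that deleting a degree-two chord decreases the deficit $(|C|-3)-\sum_{f\in\interior(C)}x_f$ by no more than the drop in length allows. Arranging the induction on $|V(C)|$ so that short-ear chords and genuinely internal chords are treated in the right order --- mirroring the $p=2$/$p=k-2$ versus $3\le p\le k-3$ split of Lemma~\ref{lem:ThreeInteriorEdges} --- and confirming that the choice of a minimal violated sequence is stable under this process, is the delicate part; the rest is bookkeeping.
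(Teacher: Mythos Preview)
Your reading of $G+E(x)$ is not the paper's. In \S\ref{sec:IPFormulation} the paper sets $E(x)=\{f\in E^c:x_f=1\}$, so $G+E(x)$ adjoins only fill edges of weight exactly~$1$; it is not the support graph on positively weighted edges. With your interpretation the conclusion is false, and the breakdown is exactly the ``short-ear'' case you anticipate. Let $G$ be the $4$-cycle $C^\star=(v_0,v_1,v_2,v_3)$ and put $x_{\{v_0,v_2\}}=x_{\{v_1,v_3\}}=0.4$. Then $x\notin\textnormal{conv}(X(G))$ and the only relevant inequality~(\ref{ineq:ci}) is the one for $C^\star$, which is violated; yet both diagonals are positively weighted, so your $G+E(x)$ is $K_4$ and contains no chordless cycle of length~$\ge4$ at all. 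Splitting $C^\star$ along either diagonal yields two triangles with $\psi_x=0$: the deferred computation ``one would argue that $\psi_x(C^1)>0$ or $\psi_x(C^2)>0$'' simply cannot be completed for a chord $e$ with $0<x_e<1$.

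Under the paper's definition, any chord of $C^\star$ in $G+E(x)$ has $x_e=1$ (interior edges avoid $E$ by hypothesis on~(\ref{ineq:ci})), and then the split does work: because $x_e=1$, passing $e$ to the exterior leaves the factor $\sum_{f\in F(C^\ell)}x_f-|F(C^\ell)|+1$ unchanged, and the relations $|C^1|+|C^2|=|C^\star|+2$ and $\sum_{f\in\interior(C^1)}x_f+\sum_{f\in\interior(C^2)}x_f+x_e\le\sum_{f\in\interior(C^\star)}x_f$ force one side to stay violated. That is exactly the paper's computation; your minimality device (on $|V(C)|$ rather than on the number of chords) is an equivalent packaging of the same descent. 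A separate gap, which the paper's own proof also elides: Proposition~\ref{prop:correctnessIPMCCP} equates the \emph{integer} points of~(\ref{ipMCCP}) with $X(G)$, not the LP relaxation with $\textnormal{conv}(X(G))$, so it does not by itself justify the step ``$x\notin\textnormal{conv}(X(G))\Rightarrow$ some~(\ref{ineq:ci}) is violated''.
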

\begin{proof}{Proof.}
Suppose by contradiction that this claim does not hold, and let~$C$ be a cycle in $G$ associated with a violated inequality of type~(\ref{ineq:ci}) such that $\interior(C) \cap E(G)$ is minimum. Set $\interior(C)$ must contain at least one edge~$e$ in~$E(G)$, so let~$C'$ and $C''$ be the sub-cycles of~$C$ such that $V(C') \cap V(C'') = e$, $V(C') \cup V(C'') = V(C)$,  and $E(C') \cap E(C'') = \{e\}$; by construction, we have  $|C'| + |C''| = |C| + 2$. 

If~$x$ satisfies the inequalities~(\ref{ineq:ci}) associated with~$C'$ and~$C''$, we have
\[
\sum_{e \in \interior(C)}x_e \ge \sum_{e \in \interior(C')}x_e + \sum_{e \in \interior(C'')}x_e + 1 \geq |C'| - 3 + |C''| - 3 + 1 = |C| + 2 -5 = |C| -3,
\]
contradicting hence the fact that~$C$ does not satisfy inequality~\ref{ineq:ci}. Therefore, $x$ must violate inequality~(\ref{ineq:ci}) for $C'$ or $C''$; let us assume that the violation holds for~$C'$. If~$\interior(C')$ does not contain any edge in~$E(G)$, we have a contradiction. Otherwise,  we must have $|\interior(C') \cap E(G)| < |\interior(C) \cap E(G)|$, which contradicts the selection of~$C$.
$\Halmos$
\end{proof}

\proof{Proof of Theorem \ref{thm:separation}(a).}
We show this result by proving that the $(-3)$-Quadratic Shortest Cycle Problem (or $(-3)$-QSCP), defined below, can be reduced to the the separation of the simplified version of inequalities~\ref{ineq:ci}. Lemma~\ref{lemma:minimali1} allows us to conclude that these two problems are equivalent, so the main step of the proof consists of showing that $(-3)$-QSCP is $NP$-complete.

We define the Quadratic Shortest Cycle Problem (QSCP) as follows: we are given an undirected graph $G = (V, E)$ and a quadratic cost function $q:V \times V \rightarrow [0,1]$ such that $q(u,v) = 0$ if $(u,v) \in E$; that is, the quadratic cost associated with~$\{u,v\}$  can be different from zero only if $\{u,v\} \notin E$. For any cycle $C$ in $G$, let $int^*(C) = E(G[C])^C$, that is, edge $\{u,v\}$ belongs to $int^*(C)$ if $u \neq v$ and $(u,v) \notin E$.
A feasible solution for an instance of QSCP consists of a simple chordless cycle $C = (v_1,v_2,\ldots, v_{|C|})$ 
whose cost $p(C)$ is given by
\[
p(C) = \sum_{\{u,v\} \in int^*(C)}q(u,v)   -|C|.  
\]
Finally, $\alpha$-QSCP is  the decision version of QSCP  in which the goal is to decide whether there is 
a simple chordless cycle~$C$ such that $p(C) < \alpha$.

\begin{lemma}
$(-3)$-QSCP is $NP$-complete.
\end{lemma}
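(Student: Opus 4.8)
The plan is to establish $NP$-completeness of $(-3)$-QSCP by first showing membership in $NP$ and then giving a polynomial reduction from a known $NP$-hard problem. Membership is straightforward: a simple chordless cycle $C$ is a certificate of polynomial size, and one can verify in polynomial time that $C$ is indeed a cycle in $G$, that $G[V(C)]$ contains no chord, and that $p(C) = \sum_{\{u,v\}\in int^*(C)} q(u,v) - |C| < -3$. The substance of the argument is hardness.

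For hardness, I would reduce from a suitable quadratic optimization problem over permutations / assignments -- the excerpt flags the \emph{quadratic assignment problem} as the source and mentions that the construction ``resembles the ones used by \cite{rostami2015} for the quadratic shortest path problem.'' So the concrete plan is: given a QAP instance (say, $n$ facilities and $n$ locations, with flow/distance data packaged into a quadratic cost), build a layered graph $G$ with $n$ layers, each layer containing a gadget of nodes representing the possible assignments at that position, plus source and sink machinery that forces any feasible chordless cycle to pick exactly one node per layer -- thereby encoding a permutation. The edge set $E$ would be chosen so that (i) the ``structural'' edges (the ones making the chordless cycle traverse all layers) are present and hence carry zero quadratic cost, and (ii) the only nonzero $q$-values are attached to pairs $\{u,v\}$ of chosen nodes in distinct layers, with $q(u,v)$ equal (after the appropriate affine normalization into $[0,1]$) to the QAP interaction cost between the corresponding assignment pairs. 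One then has to verify that a cycle $C$ of this form is automatically chordless -- this requires care in designing the gadgets so that no ``shortcut'' chord exists among the traversed nodes -- and that $p(C)$ is an affine function of the QAP objective of the encoded permutation, with the $-|C|$ term a fixed constant (since $|C|$ is determined by the layer structure). Scaling and shifting $q$ into $[0,1]$ is a routine affine transformation; one picks the threshold in the decision version so that $p(C) < -3$ iff the QAP objective is below the target value. Finally, I would invoke Lemma~\ref{lemma:minimali1} to transfer the $NP$-hardness of $(-3)$-QSCP to the separation problem of the simplified inequality~(\ref{ineq:ci}) -- though that last step belongs to the proof of Theorem~\ref{thm:separation}(a) rather than to this lemma.

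The main obstacle I expect is the \textbf{chordlessness} requirement. In a QAP-style layered construction it is tempting to add many edges to force the cycle structure, but every such structural edge among the nodes actually visited by $C$ becomes a chord and either destroys feasibility or (worse) changes the cost by zeroing out a $q$-term. So the gadgets must be engineered so that the unique ``canonical'' route through one node per layer is a chordless cycle, while still being rigid enough that \emph{no} feasible chordless cycle can cheat by skipping a layer, visiting a layer twice, or using fewer than $n$ assignment nodes. Balancing these two pressures -- enough edges for rigidity, few enough edges to preserve chordlessness along the intended cycles -- is the delicate part, and it is presumably why the authors route through the intermediate $(-3)$-QSCP problem and borrow the gadget design philosophy from \cite{rostami2015}. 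A secondary technical point is bookkeeping the constant: since $|C|$ is fixed by the construction and $q$ has been affinely rescaled, one must check that the inequality $p(C) < -3$ really does line up with the QAP decision threshold after these two shifts compose correctly.
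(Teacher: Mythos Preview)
Your high-level plan---reduce from QAP via a layered graph, in the spirit of \cite{rostami2015}---matches the paper's strategy, and your NP-membership argument is fine. But your sketch frames the hardness construction as a pure \emph{graph-design} problem (``enough edges for rigidity, few enough to preserve chordlessness''), and that is where it diverges from what actually works.

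The key mechanism you are missing is that the paper does \emph{not} use the edge set $E$ to rule out unwanted cycles; it uses the quadratic cost function $q$. The graph $G$ built from a QAP instance is deliberately sparse (assignment vertices connected to degree-$2$ ``connection'' vertices between consecutive layers, plus a short $z$--$y$ return path), and it admits many chordless cycles of \emph{varying lengths}---short ones that skip the return path, ones that touch two nodes in the same layer, and so on. None of these are excluded structurally. Instead, the construction loads $q$ with several families of unit-size penalties (assignment conflict costs between nodes sharing a facility or location, transition conflict costs between connection vertices in the same block, transition penalties between consecutive blocks, and compensation terms tied to the auxiliary $z_2$ and $y$-vertices) and then proves two counting lemmas: any chordless cycle avoiding the $z$--$y$ path, or encoding an infeasible assignment, picks up enough of these penalties to push $p(C)\ge -3$.

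Consequently, your claim that ``$|C|$ is a fixed constant determined by the layer structure'' is not right in this construction: $|C|$ is only forced to equal $3n+3$ \emph{after} the cost analysis eliminates shorter cycles. The bookkeeping you anticipate is also more involved than a single affine shift: the $-|C|$ term has to be offset by exactly $3n-1$ units of deliberately planted unit penalties plus a final $1-\beta/K$ term, all calibrated so that a matching cycle satisfies $p(C)<-3$ precisely when the encoded assignment beats $\beta$. Your proposal would need to incorporate this penalty-based filtering to close the gap.
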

\begin{proof}{Proof.}
Our proof employs a reduction of the Quadratic Assignment Problem (QAP) to $(-3)$-QSCP. This strategy is based on the reduction used by~\cite{rostami2015} to show that the Quadratic Shortest Path Problem is strongly NP-hard.

\paragraph*{QAP description:}
For an arbitrary instance~$I$ of QAP, let $F$ and~$L$ be the set of facilities and locations, respectively, with $n = |F| = |L|$, and let $C$, $D$, and~$A$ be the $n \times n$ matrices in $\mathbb{R}^+$ describing the flow between facilities, the distance between locations, and the cost of assigning facilities to locations, respectively; recall that linear costs are given by entries of~$A$, whereas quadratic costs are associated with the multiplication of one entry in~$C$ (referring to the flow of products between facilities) by some other entry in~$D$ (representing the distance between locations). Finally, we are given a value $\beta$, and the goal is to decide whether the instance of QAP admits an assignment whose cost is smaller than~$\beta$. This problem is known to be NP-complete (see e.g.,~\cite{GarJoh1979}).

Let 
\[
M = \max\left(
\max\limits_{\substack{1 \leq f,f' \leq |F|\\ 1 \leq l,l' \leq |L|}}C_{f,f'}D_{l,l'},  
\max\limits_{\substack{1 \leq f \leq |F|\\ 1 \leq l \leq |L|}}A_{f,l}  
\right),
\] 
that is, $M$ is the largest individual penalty that may compose the cost of a feasible assignment. Any feasible solution consists of $n$ assignments, so it is subject to not more than~$n^2$ and~$n$ quadratic and linear penalizations, respectively. Therefore, no feasible assignment has an objective value larger than $K = 2Mn^2$.

\paragraph*{Vertices and edges:}

From~$I$, we construct an instance~$I'$ of $(-3)$-QSCP associated with a graph $G = (V,E)$ and a quadratic cost $q:V \times V \rightarrow [0,1]$ as follows. Let us assume w.l.o.g. that there is some (arbitrary) ordering between facilities, that is, $F = \{f_1,f_2,\ldots,f_{n}\}$.

Set $V$ contains one \textit{assignment vertex} $a_{f,l}$ for each pair $(f,l) \in F \times L$; these vertices can be interpreted as the assignment of facilities to locations. We say that a pair of assignment vertices belong to the same \textit{block} if they are associated with the same facility.

For technical reasons, $V$ contains three types of auxiliary variables. We have \textit{type-$z$} vertices~$z_1$ and $z_2$ and \textit{type-$y$} variables $y_1, y_2, \ldots, y_{n}$ whose usage will become clear next. Additionally, for each pair of assignments $(f_i,l)$ and $(f_{i+1},l')$,  $1 \leq i < n$ and $l,l' \in L$, we have a \textit{connection vertex} $c_{f_i,f_{i+1},l,l'}$. $V$ also contains connection vertices $c_{z_1,f_1,\emptyset,l}$ and $c_{f_n,y_{n-1},l,\emptyset}$ for all $l$ in $L$; by an abuse of notation, we might use $c_{f_{0},f_{1},l,l'}$ instead of $c_{z_1,f_1,\emptyset,l'}$ (i.e., substitute $(z_1,\emptyset)$ for $(f_0,l)$) and $c_{f_{n},f_{n+1},l,l'}$ instead of $c_{f_n,z_n,l,\emptyset}$ ($(z_n,\emptyset)$ for $(f_{n+1},l')$ )  in situations where the correct notation can be easily inferred from the context. 
A pair of connection vertices is said to belong to the same \textit{block} if they have the 
same first facility index.

Each assignment vertex~$a_{f_i,l}$ composes edges with connection vertices $c_{f_i,f_{i+1},l,l'}$ and  $c_{f_{i-1},f_i,l'',l}$ for all $l', l''$ in $L$. Moreover, $z_1$ and assignment vertices~$a_{f_1,l}$ are connected to $c_{z_1,f_1,\emptyset,l}$, whereas~$y_{n}$ and assignment vertices~$a_{f_n,l}$  are connected to $c_{f_n,y_{n},l,\emptyset}$, $l \in L$. Note that $a_{f_i,l}$ and $a_{f_{i+1},l'}$ are the only neighbours of $c_{f_i,f_{i+1},l,l'}$, that is, all connection vertices have degree~$2$. 
Finally, $\{z_1,z_2\}$, $\{z_2,y_1\}$, $\{y_1,y_2\}, \ldots, \{y_{n-1},y_{n}\}$ are also edges of~$E$. 
An example of graph associated with an instance of QAP with~$n=3$ is presented in Figure~\ref{fig:i1construction}. 

\begin{figure}
\centering

\begin{tikzpicture}[scale=1.5][font=\sffamily,\scriptsize]

\node (1) [draw, circle] [draw] at (0,0) {$z_1$};

\node (2) [draw]  at (1,1)  {$c_{z_1,f_1,\emptyset,l_1}$};
\node (3) [draw] [draw] at (1,0) {$c_{z_1,f_1,\emptyset,l_2}$};
\node (4) [draw] [draw] at (1,-1) {$c_{z_1,f_1,\emptyset,l_3}$};

\path[-](1) edge (2);
\path[-](1) edge (3);
\path[-](1) edge (4);

\node (5) [draw,circle] [draw] at (2,2) {$a_{f_1,l_1}$};
\node (6) [draw,circle] [draw] at (2,0) {$a_{f_1,l_2}$};
\node (7) [draw,circle] [draw] at (2,-2) {$a_{f_1,l_3}$};

\path[-](2) edge (5);
\path[-](3) edge (6);
\path[-](4) edge (7);

\node (8) [draw] [draw] at (3,2.5) {$c_{f_1,f_2,l_1,l_2}$};
\node (9) [draw] [draw] at (3,1.5) {$c_{f_1,f_2,l_1,l_3}$};

\path[-](5) edge (8);
\path[-](5) edge (9);

\node (10) [draw] [draw] at (3,0.5) {$c_{f_1,f_2,l_2,l_1}$};
\node (11) [draw] [draw] at (3,-0.5) {$c_{f_1,f_2,l_2,l_3}$};

\path[-](6) edge (10);
\path[-](6) edge (11);

\node (12) [draw] [draw] at (3,-1.5) {$c_{f_1,f_2,l_3,l_1}$};
\node (13) [draw] [draw] at (3,-2.5) {$c_{f_1,f_2,l_3,l_2}$};

\path[-](7) edge (12);
\path[-](7) edge (13);

\node (14) [draw,circle] [draw] at (4,2) {$a_{f_2,l_1}$};
\path[-](10) edge (14);
\path[-](12) edge (14);

\node (15) [draw,circle] [draw] at (4,0) {$a_{f_2,l_2}$};
\path[-](8) edge (15);
\path[-](13) edge (15);

\node (16) [draw,circle] [draw] at (4,-2) {$a_{f_2,l_3}$};
\path[-](9) edge (16);
\path[-](11) edge (16);

\node (17) [draw] [draw] at (5,2.5) {$c_{f_2,f_3,l_1,l_2}$};
\node (18) [draw] [draw] at (5,1.5) {$c_{f_2,f_3,l_1,l_3}$};

\path[-](14) edge (17);
\path[-](14) edge (18);

\node (19) [draw] [draw] at (5,0.5) {$c_{f_2,f_3,l_2,l_1}$};
\node (20) [draw] [draw] at (5,-0.5) {$c_{f_2,f_3,l_2,l_3}$};

\path[-](15) edge (19);
\path[-](15) edge (20);

\node (21) [draw] [draw] at (5,-1.5) {$c_{f_2,f_3,l_3,l_1}$};
\node (22) [draw] [draw] at (5,-2.5) {$c_{f_2,f_3,l_3,l_2}$};

\path[-](16) edge (21);
\path[-](16) edge (22);

\node (23) [draw,circle] [draw] at (6,2) {$a_{f_3,l_1}$};
\path[-](19) edge (23);
\path[-](21) edge (23);

\node (24) [draw,circle] [draw] at (6,0) {$a_{f_3,l_2}$};
\path[-](17) edge (24);
\path[-](22) edge (24);

\node (25) [draw,circle] [draw] at (6,-2) {$a_{f_3,l_3}$};
\path[-](18) edge (25);
\path[-](20) edge (25);

\node (26) [draw] [draw] at (7,1)  {$c_{f_3,y_3,l_1,\emptyset}$};
\path[-](23) edge (26);

\node (27) [draw] [draw] at (7,0) {$c_{f_3,y_3,l_2,\emptyset}$};
\path[-](24) edge (27);

\node (28) [draw] [draw] at (7,-1) {$c_{f_3,y_3,l_3,\emptyset}$};
\path[-](25) edge (28);

\node (29) [draw,circle] [draw] at (8,0) {$y_3$};
\path[-](26) edge (29);
\path[-](27) edge (29);
\path[-](28) edge (29);

\node (30) [draw,circle] [draw] at (1,-3) {$z_2$};
\path[-](1) edge (30);

\node (31) [draw,circle] [draw] at (4,-3) {$y_1$};
\path[-](30) edge (31);

\node (32) [draw,circle] [draw] at (7,-3) {$y_2$};
\path[-](31) edge (32);

\path[-](32) edge (29);

\end{tikzpicture}

\caption{Construction for QAP instance with $n = 3$.}
\label{fig:i1construction}
\end{figure}
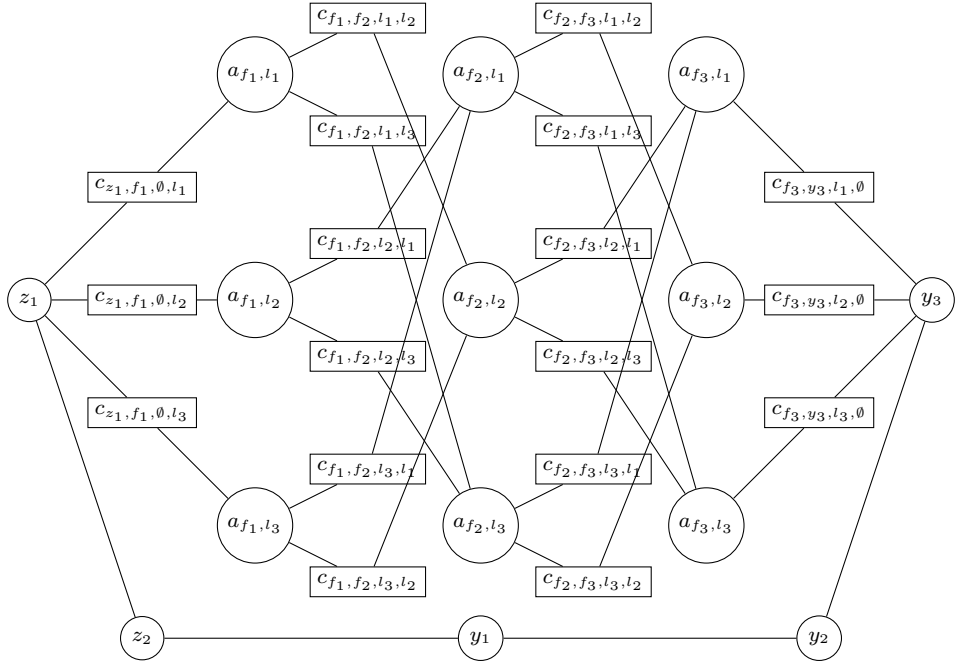

\paragraph{Penalties:}

The goal of the construction is to enforce every algorithm deciding $(-3)$-QSCP to deliver a cycle $C$ as solution for $I'$ if and only if $I$ admits an assignment whose cost is inferior to~$\beta$. 
We say that~$C$ is associated with the solution of~$I$ containing each assignment~$(f_i,l_j)$ such that $a_{f_i,l_j}$ belongs to~$V(C)$.

Let~\texttt{Alg} be an algorithm deciding $(-3)$-QSCP. \texttt{Alg} can return~$C$ only if 
\begin{eqnarray*}
p(C) = p_L(C) + p_Q(C) 
+ q^*(C) < -3,
\end{eqnarray*}
where~$q^*(C)$ is the sum of $-|C|$ with all additional costs that will be incorporated in our construction and~$p_L(C)$ and~$p_Q(C)$ denote the linear and the quadratic costs of~$I$ mapped into~$C$, respectively. For technical reasons described below, the original (linear and quadratic) costs of~$I$ will be divided by~$K$ in $I'$. As a consequence, the assignment associated with $C$ is a solution of $I$ if it is feasible and
\[
p_L(C) + p_Q(C) < \frac{\beta}{K},
\]
so we just need to define~$q^*(C)$ in a way that
\[
-3 - q^*(C) = \frac{\beta}{K} \implies q^*(C) = -3 - \frac{\beta}{K}.
\]
Note that $\beta \leq K$; otherwise, any feasible assignment decides~$I$.

In summary, the costs composing~$q^*(C)$ should guarantee that \texttt{Alg} returns a cycle if and only if the associated assignment in $I$ is feasible with cost inferior to~$\beta$.
For this, our construction restricts the set of cycles \texttt{Alg} may select to \textit{matching cycles}, which are cycles in~$G$ of size $3n + 3$ that pass through all type-$y$ and type-$z$ vertices and are associated with a feasible assignment for the QAP instance whose cost is below~$\beta$. The rest of the proof shows how the function~$q$ enforces the satisfaction of these conditions. 

\paragraph{Original QAP penalties:}
For each pair of assignment vertices~$a_{f,l}$ and $a_{f',l'}$, $f \neq f'$ and $l \neq l'$, we have the \textit{assignment cost}
\[
q(a_{f,l},a_{f',l'}) = \frac{C_{f,f'}D_{l,l'}}{K},
\]
which represents the quadratic cost of $I$ associated with assignments~$(f,l)$ and~$(f',l')$. 

Additionally, we have \textit{linear costs}
\[
q(z_1,a_{f,l}) = \frac{A_{f,l}}{K},
\]
that is, $q(z_1,a_{f,l})$ contains the linear cost of~$I$ associated with assignment~$(f,l)$. 

Note that the scaling factor $\frac{1}{K}$ enforces all values to belong to~$[0,\frac{1}{2n^2}]$ and, consequently, the sum of these penalties is bounded by $1$ for any feasible assignment in~$I$. In particular, any cycle~$C$ of~$G$ associated with a feasible assignment is such that
$0 \leq p_L(C) + p_Q(C) \leq 1$.

\paragraph{Infeasibility penalties:}

Cycle $C$ cannot be a matching cycle 
if $V(C)$ contains one or more pairs of  assignment vertices sharing the same location or facility.
In order to avoid these configurations, we set \textit{assignment conflict costs}
\[
q(a_{f,l},a_{f',l'}) = 1 
\]
for every pair of assignment vertices~$a_{f,l}$ and~$a_{f',l'}$ such that either $f = f'$ or $l = l'$.  
Note that this penalty is not smaller  than the (scaled) cost of any feasible solution  of~$I$. 

Similar penalizations will be applied to 
pair of connection vertices belonging to the same block. That is, given connection vertices $c_{f_{i},f_{i+1},l,l'}$ and $c_{f_{i},f_{i+1},l'',l'''}$, $1 \leq i \leq n$ and $l,l',l'',l''' \in L$, we have
\textit{transition conflict costs}
\[
q(c_{f_{i},f_{i+1},l,l'},c_{f_{i},f_{i+1},l'',l'''}) = 1. 
\]

\paragraph{Sub-cycle elimination:}

In order to avoid the selection of cycles which do not pass through type-$y$ and type-$z$ vertices, we penalize pairs of connection vertices belonging to consecutive blocks. That is, given connection vertices $ c_{f_i,f_{i+1},l,l'}$ and $c_{f_{i+1},f_{i+2},l'',l'''}$, $0 \leq i < n$ and $l,l',l'',l''' \in L$, we have \textit{transition penalties}
\[
q(c_{f_i,f_{i+1},l,l'},c_{f_{i+1},f_{i+2},l'',l'''}) = 1. 
\]
Note that this penalty incurs $n$ times in matching cycles.

\paragraph{Compensation penalties:}

Penalties described here are used to compensate for the inclusion of vertices and to make the solution respect the upper bound~$\beta$ associated with~$I$.

The inclusion of connection vertices is compensated by their quadratic costs with~$z_2$. That is, for every connection vertex~$c_{f_i,f_{i+1},l,l'}$, $0 \leq i < n-1$ and $l,l' \in L$ (note that connection vertices who are neighbours of~$y_n$ are excluded), we have \textit{connection-covering costs}
\[
q(z_2,c_{f_i,f_{i+1},l,l'}) = 1. 
\]
In matching cycles, connection-covering costs incur $n$ times.

The costs of type-$y$ vertices are  covered by quadratic assignments involving $z_1$ and connection vertices~$c_{f_i,f_{i+1},l,l'}$, $1 \leq i < n-1$ and $l,l' \in L$ (note that we are excluding the neighbours of~$z_1$ and$y_n$). These $y$-covering costs are given by
\[
q(z_1,c_{f_i,f_{i+1},l,l'}) = 1. 
\]
In matching cycles, these costs incur $n-1$ times.

So far, the sum of the compensation penalties with the transition penalties for any matching cycle~$C$ is equal to $n + n + n-1$. Because $|C| = 3n+3$, there is a deficit of $1 - \frac{\beta}{K} < 1$ in $p(C)$. For this, we employ the quadratic cost of~$y_n$ and~$z_1$, that is,
\[
q(z_1,y_n) = 1 - \frac{\beta}{K}.
\] 
Finally, all the remaining 
costs that have not been explicitly presented are set to zero.

\begin{lemma}\label{lemma:nosubcycles}
Every cycle  delivered by \texttt{Alg} 
must contain all type-$z$ and type-$y$ vertices.
\end{lemma}
\begin{proof}{Proof.} 
Let us assume that~$C$ does not include some type-$y$ or type-$z$ vertex; by construction, a cycle in $G$ contains either all vertices in $\{z_2, y_1, y_2, \ldots, y_{n-1}\}$ or none of them, so $C$ may only contain $z_1$, $y_{n}$, assignment vertices, and connection vertices. Consequently, $C$ belongs to a bipartite region of~$G$ (with one part being composed of connection vertices), so~$|C|$ must be even and larger than~$4$. See Figure~\ref{fig:subcycle} for an example with $|C| = 12$. Set~$V(C)$ can be partitioned as follows:

\begin{figure}
\centering
\begin{tikzpicture}[scale=1.15][font=\sffamily,\scriptsize]

\node (1) [draw,circle] [draw] at (0,0) {$z_1$};

\node (2)  [draw] at (1,1) {$c_{z_1,\emptyset,f_1,l_1}$};

\node (3)  [draw] at (1,-1) {$c_{z_1,\emptyset,f_1,l_4}$};

\node (4) [draw,circle] [draw] at (2,2) {$a_{f_1,l_1}$};

\node (5) [draw,circle] [draw] at (2,-2) {$a_{f_1,l_4}$};

\node (6)  [draw] at (3,1) {$c_{f_1,l_1,f_2,l_2}$};

\node (7)  [draw] at (3,-1) {$c_{f_1,l_4,f_2,l_5}$};

\node (8) [draw,circle] [draw] at (4,2) {$a_{f_2,l_2}$};

\node (9) [draw,circle] [draw] at (4,-2) {$a_{f_2,l_5}$};

\node (10) [draw] at (5,1) {$c_{f_2,l_2,f_3,l_3}$};

\node (11) [draw] at (5,-1) {$c_{f_2,l_5,f_3,l_3}$};

\node (12) [draw,circle] [draw] at (6,0) {$a_{f_3,l_3}$};

\path[-](1) edge (2);
\path[-](1) edge (3);

\path[-](2) edge (4);
\path[-](3) edge (5);

\path[-](4) edge (6);
\path[-](5) edge (7);

\path[-](6) edge (8);
\path[-](7) edge (9);

\path[-](8) edge (10);
\path[-](9) edge (11);

\path[-](10) edge (12);
\path[-](11) edge (12);
\end{tikzpicture}
\caption{Sub-cycle of~$G$ with 12 vertices}
\label{fig:subcycle}
\end{figure}
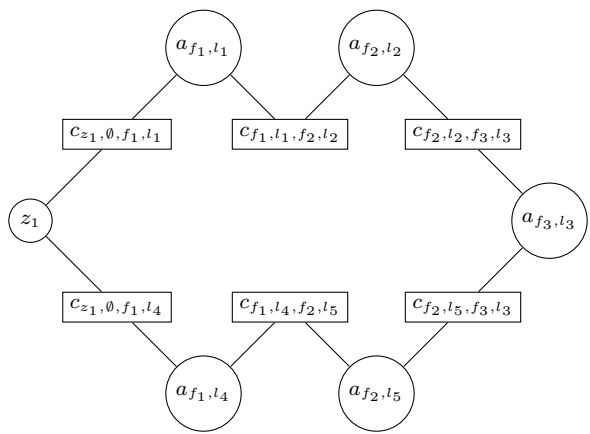

\begin{enumerate}
\item $|C|/2$ connection vertices: Each vertex in this category belongs to the same block as at least some  other connection vertex.
\item 2 vertices, which may be assignment vertices, $y_{n}$, or $z_1$: These are the vertices
located in the extremities of the cycle according to the topology presented in Figure~\ref{fig:subcycle};  on the left, we have either $z_1$ or the assignment vertex with facility of lower index in $C$, whereas on the right we have either~$y_{n}$ or the assignment vertex with facility of highest index in $C$. These vertices stay either alone in their blocks (this is necessarily the case of $z_1$ and $y_{n}$) or together with other assignment vertices in the same block. 
\item $|C|/2 - 2$ assignment vertices: These vertices are located in the middle of the cycle and stay in the same block with at least one other assignment vertex, so the assignment conflict costs involving these vertices is at least $|C|/4 - 1$. 
\end{enumerate}


Connection vertices are distributed among~$k \geq 2$ (consecutive) blocks with $b_i \geq 2$ elements each, $1 \leq i \leq k$. 
These vertices are associated with transition conflict penalties and transition penalties, and the sum 
$p_c(C)$ of all penalties associated with them is 
\[
p_c(C) = \sum_{1 \leq i \leq k} \frac{b_i(b_i - 1)}{2} + \sum_{1 \leq i < k} b_i b_{i+1}. 
\]
Because $2 \leq b_i \leq \frac{|C|}{2} - 2$, we have $b_i b_{i+1} \geq b_i + b_{i+1}$ and $b_i^2 \geq 2b_i$. Therefore, 
\begin{eqnarray*}
p_c(C) &\geq& 
\sum_{1 \leq i \leq k} \frac{b_i(b_i - 1)}{2} + \sum_{1 \leq i < k} b_i b_{i+1} \geq
 \sum_{1 \leq i \leq k} \frac{b_i^2}{2} - \sum_{1 \leq i \leq k} \frac{b_i}{2}  + \sum_{1 \leq i < k} (b_i + b_{i+1}) \\
 &\geq&
 \sum_{1 \leq i \leq k} 2b_i - \sum_{1 \leq i \leq k} \frac{b_i}{2}  + \sum_{1 < i < k} b_{i}
\geq
|C| - \frac{b_1 + b_k}{2} \geq |C| - \frac{|C|}{4} = \frac{3|C|}{4}.
\end{eqnarray*}
By summing all penalties, we have 
\begin{eqnarray*}
p(C) \geq -|C| + |C|/4 - 1 + p_c(C)  > -\frac{3|C|}{4} - 1 + \frac{3|C|}{4} > -1,
\end{eqnarray*}
which is clearly larger than~$-3$. Therefore, \texttt{Alg} can only return cycles~$C$
containing all type-$z$ and type-$y$ vertices. \Halmos 
\end{proof}

\begin{lemma}
\label{lemma:nobadassignment}
Every cycle  delivered by  \texttt{Alg} 
is associated with a feasible assignment.
\end{lemma}
\begin{proof}{Proof.}
Let us suppose by contradiction that \texttt{Alg} delivers a cycle~$C$ which is not associated with a feasible assignment. From Lemma~\ref{lemma:nosubcycles}, 
it follows that every cycle delivered by  \texttt{Alg}  necessarily contains at least one assignment vertex containing each facility. Thus, if the assignment associated with~$C$ is infeasible, then some location is  being assigned to at least two different facilities. Compensation penalties are not affected by this, so 
\[ q^*(C) = -3 - \frac{\beta}{K}. \]
As each assignment conflict cost is equal to 1, we have
\begin{eqnarray*}
p(C) &=& p_L(C) + p_Q(C) + q^*(C) \\
p(C) &\geq& 1 + q^*(C) 
	 \geq 1 -3 - \frac{\beta}{K},
\end{eqnarray*}
and as $\frac{\beta}{K} \leq 1$, $p(C) \geq -3$, and therefore~$C$ cannot be delivered by \texttt{Alg}. 
$\Halmos$
\end{proof}

The previous lemmas show that  \texttt{Alg}  decides $(-3)$-QSCP positively on~$G$ using~$C$ only if~$C$ is a matching cycle.   A similar process can be used in order to construct a solution for $(-3)$-QSCP on~$G$ given a feasible solution for the QAP instance; namely, just take the cycle containing the associated assignment vertices, the connection vertices uniquely determined by the assignment vertices, and all type-$y$ and type-$z$ vertices. Finally, as QAP is NP-complete and $(-3)$-QSCP is clearly in NP, it follows that $(-3)$-QSCP is NP-complete.\Halmos
\endproof

\medskip

We conclude by reducing the $(-3)$-QSCP to the separation of~\ref{ineq:ci}. Let~$I$ be an instance of $(-3)$-QSCP associated with graph~$G = (V,E)$ and quadratic cost function $q: V \times V \rightarrow [0,1]$. We reduce~$I$ to an instance~$I'$ of the separation of~\ref{ineq:ci} associated with the same graph~$G = (V,E)$. The (potentially fractional) solution $x \in X(G)$ is derived from the quadratic cost function~$q$ of~$I$ as follows: if $e = \{u,v\} \in E^C$, $x_e = q(u,v)$. Note that $x$ is valid, since $x_e \in [0,1]$ for all $e \in E$ and $x_e$ is not defined if $e \in E$.

By construction, any chordless cycle~$C$ in~$G$ has a cost~$c(C)$ in~$I$ deciding $(-3)$-QSCP positively has a cost~$c(C)$ such that
\[
c(C) = \sum_{\{u,v\} \in E^C(C)} q(u,v) = \sum_{f \in int(C) } x_{f}  < |C| - 3,
\]
that is, if~$C$ decides~$I$ positively, then~$C$ also decides~$I'$ positively. The same argument shows that if~$C$ decides~$I'$ positively, then~$C$ is also a valid certificate for~$I$. Finally, from Lemma~\ref{lemma:minimali1}, we know that the separation of~(\ref{ineq:ci}) can be restricted to cycles which are  chordless in~$G$, so we conclude that deciding whether~$I$ has a solution is equivalent to deciding whether~$I'$ has a solution. Thus, we conclude that the separation of~(\ref{ineq:ci}) is NP-complete. \Halmos
\endproof

\bigskip

\proof{Proof of Theorem \ref{thm:separation}(b).}
All coefficients of inequalities (\ref{inq:2}) are non-negative, so we are able to apply Theorem~\ref{thm:FacetsforInducedSubcycles} in order to obtain the following inequalities:
\begin{eqnarray}\label{eq:i4-extended2}
x_{ \{ v_{i-1},v_{i+1} \} } +  \sum_{f : v_i \in f, \{v_{i-1},v_{i+1}\} \cap f = \emptyset } x_{f} &\geq&   \sum_{f \in F(C)} x_{f} - |F(C)| + 1 \nonumber \\ 
&\geq& 1 - \sum_{f \in F(C)} (1 - x_{f}). 
\end{eqnarray} 
Note that inequality~(\ref{eq:i4-extended2}) is trivially satisfied if $\sum_{f \in F(C)} x_{f} < |F(C)|$, as the right-hand side expression becomes zero and all coefficients on the left are non-negative.

Let~$x \in X(G)$ be a fractional solution; in abuse of notation, if $\{u,v\} \in E(G)$, we assume that~$x_{\{u,v\}} = 1$. 
By construction, solution~$x$ violates the inequality~(\ref{eq:i4-extended2}) associated with
cycle $C = (v_{i-1},v_i,v_{i+1},v_1,v_2,\ldots,v_n)$ if and only if
\begin{eqnarray*} 
x_{ \{ v_{i-1},v_{i+1} \} }  &+&  \sum_{\substack{t \in C' \setminus \{v_{i-1},v_i,v_{i+1}\} } } x_{\{v_i,t\}} < \: 1 - \sum_{f \in F(C')} (1 - x_{f}) \implies \nonumber  \\
x_{ \{ v_{i-1},v_{i+1} \} }  &+&  \sum_{\substack{t \in C' \setminus \{v_{i-1},v_i,v_{i+1}\} } } x_{\{v_i,t\}} + \sum_{f \in F(C')} (1 - x_{f})  < \: 1 \implies \nonumber  \\
x_{ \{ v_{i-1},v_{i+1} \} } &+& 
  \sum_{\substack{ v_i \notin\{v_j,v_k\} \in F(C') }} \left(\frac{x_{\{v_i,v_j\}} + x_{\{v_i,v_k\}}}{2}\right)  + \sum_{\substack{ v_i \notin \{v_j,v_k\} \in F(C') }} (1 - x_{\{v_j,v_k\}}) + \\  
&&  \left(1 - \frac{3x_{\{v_{i-1},v_i\}}}{2}\right) + \left(1 -  \frac{3x_{\{v_i,v_{i+1}\}}}{2}\right)  
\end{eqnarray*} 

Let $G^{i} = (V(G),E(G))$ be a complete weighted direct graph such that, 
for each edge $e = \{v_j,v_k\}$ in $E(G)$, 
\[
w(e) = \begin{cases}
1 - x_{\{v_j,v_k\}} +  \frac{x_{\{v_i,v_j\}} + x_{\{v_i,v_k\}}}{2}, & \text{if } v_i \notin \{v_j,v_k\}\\
+\infty, & \text{otherwise.}
\end{cases}
\]

In order to separate inequalities~(\ref{eq:i4-extended2}), it suffices to find a  path in~$G^i$ connecting~$v_{i+1}$ to~$v_{i-1}$ not passing through~$v_i$ whose length is inferior to $1 - x_{\{v_{i-1},v_{i+1}\}} - \left(1 - \frac{3x_{v_{i-1},v_i}}{2}\right) - \left(1 - \frac{3x_{v_i,v_{i+1}}}{2}\right)$. If such a path exists, then, in particular, any shortest path in~$G^i$ connecting~$v_{i+1}$ to~$v_{i-1}$ while avoiding~$v_i$ also satisfies this property, so the verification can be done in polynomial time for each sequence $(v_{i-1},v_i,v_{i+1})$ (e.g., the running time for simple implementations of Dijsktra's algorithm is $O(|V(G)|^2$). The number of sequences for which this verification needs to be performed is $O(|V(G)|^3)$,
so we conclude that inequality~(\ref{eq:i4-extended2}) can be separated in polynomial time. $\Halmos$
\endproof

\bigskip

\proof{Proof of Theorem \ref{thm:separation}(c).} All coefficients of inequalities (\ref{inq:3}) are non-negative, so we are able to apply Theorem~\ref{thm:FacetsforInducedSubcycles} in order to obtain the following inequalities: 
\begin{eqnarray}\label{eq:i3-extended2}
\sum_{f : \{ \{v_i,v_j\}: d_C(v_i,v_j) = 2\}} x_{f} &\geq&   2\left(\sum_{f \in F(C)} x_{f} - |F(C)| + 1\right) \nonumber \implies \\
\sum_{f : \{ \{v_i,v_j\}: d_C(v_i,v_j) = 2\}} x_{f} &+& 2\sum_{f \in F(C)} (1 - x_{f})   \geq   2.
\end{eqnarray} 
Note that inequality~(\ref{eq:i3-extended2}) is trivially satisfied if $\sum_{f \in F(C)} x_{f} < |F(C)|$, as the right-hand side expression becomes zero and all coefficients on the left are non-negative.

Let~$x \in X(G)$ be a fractional solution; in abuse of notation, if $\{u,v\} \in E(G)$, we assume that~$x_{\{u,v\}} = 1$. 
Let~$D$ be a weighted directed graph such that, for each two-set $\{v_i,v_j\}$ in $V(G)$, there is one vertex in~$V(D)$ labelled by pair~$(v_i,v_j)$ and other labelled by pair~$(v_j,v_i)$. Moreover, for each pair of vertices $(v_i,v_j)$ and $(v_j,v_k)$ in $V(D)$, $v_i \neq v_k$, we define an arc~$a = ((v_i,v_j), (v_j,v_k))$ in $A(D)$ whose weight is given by 
\[
w_D(a) = w_D((v_i,v_j), (v_j,v_k)) = (1 - x_{\{v_i,v_j\}}) + x_{\{v_i,v_k\}} + (1 - x_{\{v_j,v_k\}});
\]
the first and the third terms of~$w_D(a)$ can be interpreted as penalties associated with the absence of edges $\{v_i,v_j\}$ and $\{v_j,v_k\}$ in~$G$, whereas the second penalizes the existence of  edge $\{v_i,v_k\}$. Finally, note that path $(v_1,v_2,\ldots,v_k)$ in~$G$ is associated with path $((v_1,v_2),(v_2,v_3),\ldots,(v_{k-1},v_{k}))$ in~$D$ (and vice-versa).

Let $u,v,w,$ and $x$ be vertices in $V(G)$ such that $P_G = (u,v,w,x)$ is a path in~$G$, and
let $P_D = (a_1,a_2,a_3)$ be the associated path in~$D$,
with $a_1 = (u,v)$, $a_2 = (v,w)$, and $a_3 = (w,x)$. Let $P_D' = ( v_1, v_2, \ldots, v_n )$ be other path in~$D$ such that $v_1 = (x,z_1)$, $v_i = (z_i,z_{i+1}), 1 \leq i < n$, and $v_n = (z_n,u)$; note that, by construction, arc $((a_3,v_1),(v_n,a_1))$ belongs to~$A(D)$. If all elements in $\{u,v,w,x,z_1,\ldots,z_n\}$ are pairwise different, then $C_D = (a_1,a_2,a_3,v_1,\ldots,v_n)$ is a directed cycle in~$D$ associated with cycle $C_G = (u,v,w,x,z_1,\ldots,z_n)$ in $G$. 

We claim that $C_D$ is a directed cycle in~$D$ if~$P_D'$ is a shortest path in~$D$ connecting~$(x,z')$ to~$(z'',u)$, with  $z',z''$ in $V(G)$, 
 whose internal vertices are not associated with edges in~$G$ containing vertices in $\{u,v,w,x\}$ and
 such that $|P_D'|$ is minimal. As $w_D(a) \geq 0$ for all $a \in A(D)$, it follows from the last condition and from the fact that~$P_D'$ is a shortest path that all elements in $\{z_1,\ldots,z_n\}$ are necessarily pairwise different.


The sum of the costs of all edges cycle~$C_D$ is given by
\begin{eqnarray}
\sum\limits_{i \in [1,2]} w((a_i,a_{i+1})) + w( (a_3,v_1) ) &+&
\sum\limits_{i \in [1,n-1]} w((v_i,v_{i+1})) + w( (v_n,a_1) ) = \nonumber\\
\sum_{f \in \{\{a,b\}:d_{C_G}(a,b) = 2\}}x_f &+& 2\sum_{f \in F(C_G)}(1 - x_f). \nonumber
\end{eqnarray}
Therefore, solution~$x$ does not respect the inequality~(\ref{eq:i3-extended2}) associated with  cycle~$C_G$ in $G$ containing path $P = (u,v,w,x)$ if and only if the weight of~$P_D'$ is smaller than~$2$.

The number of tuples for which this verification needs to be performed is~$O(|V(G)|^4)$, and the identification (and construction) of a path~$P_D'$ with the desired features  can be performed in polynomial time (e.g., $O(|V(D)|^2) = O(|V(G)|^4)$ using Dijkstra's algorithm); therefore, we conclude that Inequalities~\ref{eq:i3-extended2} can be separated in polynomial time. $\Halmos$
\endproof


\bigskip

\proof{Proof of Theorem \ref{thm:separation}(d).} 
We show this result by employing a construction that is very similar to the one used in the proof of Theorem~\ref{thm:separation}(a). More precisely, we introduce QSCP$^*$, a variation of QSCP that is more convenient for proving the hardness of (the simplified version of) (\ref{inq:4}), and show that the  addition of a single compensation penalty to the construction used 
in the proof of Theorem~\ref{thm:separation}(a) yields the desired result.



Similarly to the QSCP, in the Adapted Quadratic Shortest Cycle Problem (QSCP$^*$) we are given an  undirected graph $G = (V, E)$ and a quadratic cost function $q:V \times V \rightarrow [0,1]$ such that $q(u,v) = 0$ if $(u,v) \in E$. A feasible solution for an instance of QSCP$^*$ consists of a simple chordless cycle $C = (v_1,v_2,\ldots, v_{|C|})$  whose cost $p^*(C)$ is given by
\[p^*(C) = \sum_{\{u,v\} \in int^*(C)}q(u,v) -|C| - \max\limits_{\substack{i,j\\i \neq j\\d_C(\{v_j,v_i\}) \geq 2}  } (q(v_{j-1},v_{j+1}) + q(v_j,v_i)).  \]
Finally, $\alpha$-QSCP$^*$ is  the decision version of QSCP$^*$  in which the goal is to decide whether~$G$ admits a simple chordless cycle~$C$ such that $p^*(C) < \alpha$. 


The present proof also relies on a reduction of QAP to $(-4)$-QSCP$^*$. Let
$I$ be an arbitrary instance of QAP of size~$n$ and $M$ be the largest individual (i.e., linear or quadratic) penalty that may compose the cost of a feasible assignment. Note that~$K = 2Mn^2$ is an upper bound on the objective value of any feasible solution of~$I$.



We show how to adapt the construction presented in the proof of Theorem~\ref{thm:separation}(a) in order to construct an instance~$I'$ of $(-4)$-QSCP$^*$ that admits solution if and only if the associated instance~$I$ of QAP admits an assignment whose cost is inferior to~$\beta$. If \texttt{Alg} is an algorithm that decides $(-4)$-QSCP$^*$, then it will only return a cycle~$C$ containing the linear  costs~$p_L(C)$ and the quadratic costs~$p_Q(C)$ of $I$ if
\begin{eqnarray*}
p^*(C) = p_L(C) + p_Q(C) + q^*(C) -  \max\limits_{\substack{v_i,v_j \in V(C)\\v_i \neq v_j\\d_C(\{v_j,v_i\}) \geq 2}  } (q(v_{j-1},v_{j+1}) + q(v_{j},v_{i}) )< -4,
\end{eqnarray*}
where~$q^*(C)$ is the sum of $-|C|$ with additional costs incorporated by our construction. 
The costs of~$I$ are divided by~$K$ in~$I'$, so the assignment associated with $C$ is a solution if
\[
p_L(C) + p_Q(C) < \frac{\beta}{K}.
\]
Therefore, we define~$q^*(C)$ in the following way:
\begin{eqnarray*}
q^*(C) = -4 - \frac{\beta}{K} + \max\limits_{\substack{v_i,v_j \in V(C)\\v_i \neq v_j\\d_C(\{v_j,v_i\}) \geq 2}  }\left(q(v_{j-1},v_{j+1}) + q(v_{j},v_{i})\right).
\end{eqnarray*}
Note that the difference between $q^*(C)$ in this proof and~$q^*(C)$ in the proof of Theorem~\ref{thm:separation}(a) is $-1 + \max_{i,j}\left(q(v_{j-1},v_{j+1}) + q(v_{j},v_{i})\right)$. Moreover, in the original construction, one can see by inspection that if~$C$ is a matching cycle, then
\[
\max\limits_{\substack{v_i,v_j \in V(C)\\v_i \neq v_j\\d_C(\{v_j,v_i\}) \geq 2}  } \left( q(v_{j-1},v_{j+1}) + q(v_{j},v_{i})\right) \leq 1+\frac{M}{K}.
\]


In order to guarantee equality in the inequality above for every matching cycle, 
we set 
\[
q(a_{f_1,l_k},y_1) = \frac{M}{K}
\]
for all $l_k \in L$. By definition, $M \leq K$, so $q(a_{f_1,l_k},y_1) \leq 1$. With this modification, we have  $q(v_{j-1},v_{j+1})= 1$ and $q(v_j,v_i) = \frac{M}{K}$ 
for $v_{j-1} = c_{z_1,f_{1},\emptyset,l_k}$, $v_j = a_{f_1,l_k}$, $v_{j+1} = c_{f_1,f_{2},l_k,l_z}$, and $v_i = y_1$, $l_z,l_k \in L$. Moreover, for every matching cycle~$C$, by direct substitution we have 
\begin{eqnarray*}
q^*(C) &=& -3 - \frac{\beta}{K} + \frac{M}{K}.
\end{eqnarray*}

The arguments used in the proof of Lemma~\ref{lemma:nosubcycles} also apply to the present construction, so \texttt{Alg} can only select cycles that include all type-$y$ and type-$z$ vertices. For Lemma~\ref{lemma:nobadassignment}, note that if~$C$ contains assignments involving the same location or facility, then
\begin{eqnarray*}
p^*(C) &=& p_L(C) + p_Q(C) + q^*(C) -  \max\limits_{\substack{v_i,v_j \in V(C)\\v_i \neq v_j\\d_C(\{v_j,v_i\}) \geq 2}  } (q(v_{j-1},v_{j+1}) + q(v_{j},v_{i}) ) \\
	 &\geq& 1 -3 - \frac{\beta}{K} + \frac{M}{K} > -3 	  
\end{eqnarray*}
so \texttt{Alg} cannot decide $(-4)$-QSCP positively on~$G$ using~$C$ if~$C$ is associated with an infeasible assignment. 

Finally, the arguments used in the proof of Theorem \ref{thm:separation}(a) to show that $(-3)$-QSCP is NP-complete can be used in an identical way in order to show that $(-4)$-QSCP is NP-complete, and the problem of deciding the separation of (\ref{inq:4}) can be reduced to $(-4)$-QSCP in the same way the separation of (\ref{ineq:ci}) was reduced to $(-3)$-QSCP, so we conclude that the separation of (\ref{inq:4}) is also NP-complete. $\Halmos$
\end{proof}

\end{document}